\documentclass[11pt]{article}
\usepackage{complexity}
\usepackage[T1,T2A]{fontenc}

\usepackage{amsmath}
\usepackage{amssymb}
\usepackage{amsthm}
\usepackage{amsfonts}
\usepackage[pagebackref]{hyperref}
\hypersetup{
    colorlinks=true, 
    linkcolor=blue, 
    citecolor=blue, 
    urlcolor=blue 
}
\usepackage{microtype}

\usepackage{accents}
\usepackage[boxed,titlenumbered]{algorithm2e}
\usepackage[toc,page]{appendix}
\usepackage{array}
\usepackage[utf8]{inputenc}
\usepackage[main=english, russian]{babel}
\usepackage{braket}
\usepackage{booktabs}
\usepackage{bigdelim}
\usepackage{cancel}
\usepackage[shortlabels]{enumitem}
\usepackage{float}
\usepackage[bottom]{footmisc}
\usepackage[margin=1in]{geometry}
\usepackage{graphicx}
\usepackage[capitalize,nameinlink]{cleveref} 
\usepackage{mathtools}
\usepackage{multirow}
\usepackage{physics}
\usepackage{tabularx}
\usepackage[dvipsnames]{xcolor}
\usepackage{tikz}
\usepackage[textsize=footnotesize]{todonotes}
\usepackage{qcircuit}
\usetikzlibrary{positioning}
\usetikzlibrary{decorations.pathreplacing}
\usetikzlibrary{cd}
\usetikzlibrary{arrows.meta}
\usetikzlibrary{calc}
\usepackage{bbold}

\usepackage{tikzsymbols}

\usepackage{xspace}

\renewcommand{\backref}[1]{}

\renewcommand{\backrefalt}[4]{%
\ifcase #1 %
\or
[p.\ #2]%
\else
[pp.\ #2]%
\fi}

\makeatletter
\renewcommand{\paragraph}{%
 \@startsection{paragraph}{4}%
 {\z@}{2.25ex \@plus .5ex \@minus .3ex}{-1em}%
 {\normalfont\normalsize\bfseries}%
}
\makeatother

\makeatletter
\newcommand{\para}{%
 \@startsection{paragraph}{4}%
 {\z@}{2ex \@plus 3.3ex \@minus .2ex}{-1em}%
 {\normalfont\normalsize\bfseries}%
}
\makeatother


\usepackage{nameref}

\newtheorem{lem}{Lemma}[section]
\newtheorem*{lem*}{Lemma}
\newtheorem{thm}[lem]{Theorem}
\crefname{thm}{Theorem}{Theorems}
\crefname{lem}{Lemma}{Lemmas}
\newtheorem*{thm*}{Theorem}
\newtheorem{prop}[lem]{Proposition}

\newtheorem{cor}[lem]{Corollary}
\newtheorem{defn}[lem]{Definition}

\theoremstyle{definition}

\newtheorem{rmk}[lem]{Remark}

\makeatletter
\newtheorem*{rep@theorem}{\rep@title}
\newcommand{\newreptheorem}[2]{%
\newenvironment{rep#1}[1]{%
 \def\rep@title{#2 \ref{##1}}%
 \begin{rep@theorem}}%
 {\end{rep@theorem}}}
\makeatother

\newreptheorem{thm}{Theorem}
\newreptheorem{lem}{Lemma}


\def\beq{\begin{equation}}
\def\eeq{\end{equation}}

\def\ben{\begin{enumerate}}
\def\een{\end{enumerate}}

\def\bem{\begin{bmatrix}}
\def\eem{\end{bmatrix}}

\newcommand{\df}[1]{{\emph{#1}}{\index{#1}}}

\def\cH{{\mathcal H}}
\def\cF{{\mathcal F}}

\setcounter{tocdepth}{4}

 \numberwithin{equation}{section}

\usepackage{imakeidx}

\makeindex
\makeindex[name=NOTA, title=Notation, columns=1]

\def\nx{\index[NOTA]}

\newcommand{\angles}[1]{\left \langle #1 \right \rangle}
\newcommand{\com}[2]{\left[#1,#2\right]}

\newcommand{\proj}[1]{\varphi_{#1}}
\newcommand{\cmap}[1]{\varphi^*_{#1}}

\newcommand{\cpathnew}[3]{\overline{P}_{#1}\left(#2,\; #3\right)}
\newcommand{\cpathnewone}[1]{\overline{P}_{#1}}
\newcommand{\cG}{\mathcal{G}}

\newcommand{\repnew}[3]{r_{#1}^{(#2) \rightarrow (#3)}}
\newcommand{\id}{1}
\newcommand{\gadget}[1]{f_{#1}}
\newcommand{\total}{F}
\newcommand{\clauseh}[1]{x_{a_#1}^{(1)}x_{b_#1}^{(2)}x_{c_#1}^{(3)}}
\newcommand{\protoGadget}{\gamma}
\newcommand{\merp}[1]{\exp(i#1 \sigma_z) \sigma_x \exp(-i #1 \sigma_z)}
\newcommand{\merptwo}[1]{\exp(2 i #1 \sigma_z) \sigma_x }
\newcommand{\merptwom}[1]{\exp(-2 i #1 \sigma_z) \sigma_x }

\usepackage{tikz}
\usetikzlibrary{backgrounds}
\usepackage{keycommand}
\usepackage{ifthen}

\newkeycommand{\vertex}[color=black,xpos=0,ypos=0,label=true,label name=default,label position=below, label color = black][1]{%

    \def\true{true}
    \def\default{default}

    \fill[\commandkey{color}] (\commandkey{xpos},\commandkey{ypos}) circle (0.025);

    \ifthenelse{\equal{\commandkey{label}}{\true}}{
    
        \ifthenelse{\equal{\commandkey{label name}}{\default}}{
            \def\nameoflabel{#1}
        }{
            \def\nameoflabel{\commandkey{label name}}
        }
        
        \node[\commandkey{label position}, \commandkey{label color}] at (\commandkey{xpos},\commandkey{ypos}) {\nameoflabel{}};  
    }{}
    
    {\expandafter\xdef\csname @vertex@#1@x\endcsname{\commandkey{xpos}}}
    {\expandafter\xdef\csname @vertex@#1@y\endcsname{\commandkey{ypos}}}
}

\newkeycommand{\edge}[color=black][2]{
    \draw [\commandkey{color}] ({\csname @vertex@#1@x\endcsname}, {\csname @vertex@#1@y\endcsname}) -- ({\csname @vertex@#2@x\endcsname}, {\csname @vertex@#2@y\endcsname});
}


\newkeycommand{\threeedge}[color=black,outline=true,fill=false,fill color=pink][3]{

    \def\true{true}

    \begin{scope}[on background layer]
    \ifthenelse{\equal{\commandkey{outline}}{\true}}{
        \draw [\commandkey{color}, double distance = 0.58cm, line cap=round, rounded corners = 0.15cm, fill=none, line width = 0.25mm] ({\csname @vertex@#1@x\endcsname}, {\csname @vertex@#1@y\endcsname}) -- ({\csname @vertex@#2@x\endcsname}, {\csname @vertex@#2@y\endcsname}) -- ({\csname @vertex@#3@x\endcsname}, {\csname @vertex@#3@y\endcsname});
    }{
    }
    \end{scope}
    
    \ifthenelse{\equal{\commandkey{fill}}{\true}}{
        \draw [\commandkey{fill color}, line width = 0.58cm, line cap=round, rounded corners = 0.15cm, opacity=0.5] ({\csname @vertex@#1@x\endcsname}, {\csname @vertex@#1@y\endcsname}) -- ({\csname @vertex@#2@x\endcsname}, {\csname @vertex@#2@y\endcsname}) -- ({\csname @vertex@#3@x\endcsname}, {\csname @vertex@#3@y\endcsname});
    }{
    }
}

\newcommand{\test}[1]{\csname @vertex@#1@x\endcsname}

\def\addvalue#1#2{\expandafter\gdef\csname my@data@#1\endcsname{#2}}
\newcommand{\usevalue}[1]{\csname my@data@#1\endcsname}

\usepackage{caption}






\title{\Large 3XOR Games with Perfect Commuting Operator Strategies Have Perfect Tensor Product Strategies and are Decidable in Polynomial Time}
\author{Adam Bene Watts\thanks{Massachusetts Institute of Technology. \texttt{abenewat@mit.edu}} \and J. William Helton\thanks{University of California San Diego. \texttt{helton@math.ucsd.edu}}}

\date{}

\begin{document}

\maketitle

\begin{abstract}

    We consider 3XOR games with perfect commuting operator strategies. Given any 3XOR game, we show existence of a perfect commuting operator strategy for the game can be decided in polynomial time. Previously this problem was not known to be decidable. Our proof leads to a construction, showing a 3XOR game has a perfect commuting operator strategy iff it has a perfect tensor product strategy using a 3 qubit (8 dimensional) GHZ state. This shows that for perfect 3XOR games the advantage of a quantum strategy over a classical strategy (defined by the quantum-classical bias ratio) is bounded. This is in contrast to the general 3XOR case where the optimal quantum strategies can require high dimensional states and there is no bound on the quantum advantage.
    
    To prove these results, we first show equivalence between deciding the value of an XOR game and solving an instance of the subgroup membership problem on a class of right angled Coxeter groups. We then show, in a proof that consumes most of this paper, that the instances of this problem corresponding to 3XOR games can be solved in polynomial time.
    \\
    
\end{abstract}

\tableofcontents

\section{Introduction} \label{sec:Intro}
One fantastic implication of quantum mechanics is that measurements made on quantum mechanical systems can produce correlated outcomes irreproducible by any classical system. This observation is at the heart of Bell's celebrated 1964 inequality~\cite{bell1964einstein} and has since found applications in cryptography~\cite{QKD1:barrett2005no,QKD2:ekert1991quantum,QKD3:vazirani2019fully,RGcolbeck2009quantum}, delegated computing~\cite{DQC:reichardt2013classical} and short depth circuits~\cite{SDCbravyi2018quantum,SDCwatts2019exponential,SDCgrier2019interactive}, among others. Recent results have shown the sets of correlations producible by measuring quantum states are incredibly difficult to characterize~\cite{natarajan2019neexp,dykema2019non,ji2020mip,coudron2019complexity,slofstra2020tsirelson,coladangelo2018unconditional}.

In this work, we present a result in the opposite direction. We consider a natural question concerning existence of quantum correlations which has been open for decades and is comparable to the one shown to be undecidable in \cite{slofstra2020tsirelson}. We show it can be answered in polynomial time. Furthermore we show that when these correlations can be produced, they can be produced by simple measurements of a finite dimensional quantum state. We begin by reviewing some necessary background. 

\paragraph{Nonlocal Games.} Nonlocal games describe experiments which test the correlations that can be produced by measurements of quantum systems. A nonlocal game involves a \df{referee} (also called the \df{verifier}) and $k \geq 1$ \df{players} (also called \df{provers}). In a round of the game, the verifer selects a \df{question vector} $q = (q_1, q_2, ... , q_k)$ randomly from a set $S$ of possible question vectors, then sends player $i$ question $q_i$. Each player responds with an answer $a_i$. The players cannot communicate with each other when choosing their answers. After receiving an answer from each player, the verfier computes a \df{score} $V(a_1, a_2, ..., a_k | q_1, q_2, ... , q_k)$ which depends on the questions selected and answers recieved. The players know the set of possible questions $S$ and the scoring function $V$. Their goal is to chose a \df{strategy} for responding to each possible question which maximizes their score in expectation. The difficulty for the players lies in the fact that in a given round each player only has partial information about the questions sent to other players. 

For a given game $G$, the supremum of the expected scores achievable by players is called the \df{value} of the game. The value depends on the resources available to the players. If players are restricted to classical strategies, the value is called the \df{classical value} and denoted $\omega(G)$. If players can make measurements on a shared quantum state (but still can't communicate) the value can be larger and is called the \df{entangled value}. More specifically, if the players shared state lives in a Hilbert space $\cH = \cH_1 \otimes \cH_2 \otimes ... \otimes \cH_k$ and the $i$-th player makes a measurement on the $i$-th Hilbert space, the supremum of the scores the players can obtain is called the \df{tensor product value}, denoted $\omega^*_{tp}$. If the players share an arbitrary state and the only restriction placed on their measurements is that the measurement operators commute (enforcing no-communication), the supremum of the achievable scores is called the \df{commuting operator value}, denoted $\omega^*_{co}$. When the state shared by the players is finite dimensional these definitions coincide. In the infinite dimensional case  $\omega^*_{tp} \leq \omega^*_{co}$, and there exist games for which the inequality is strict~\cite{ji2020mip}.

\paragraph{Bounds On the Value.} The commuting operator and tensor product values of a game are in general uncomputable \cite{ji2020mip, slofstra2020tsirelson}. Intuitively, this is because the nonlocal games formalism places no restriction on the dimension of the state shared by the players, and so a brute force search over strategies will never terminate. However, such a search can provide a lower bound on the value of a game. Given a game $G$, let $\omega^*_d(G)$ denote the maximum score achievable by players using states of dimension at most $d$. This value lower bounds the tensor product (hence, commuting operator) value, and converges to the tensor product value in the limit as $d \rightarrow \infty$~\cite{scholz2008tsirelson}, so
$
    \sup_{d < \infty}\left\{\omega^*_d\right\} = \omega^*_{tp}.
$
 Given a fixed $d$, $\omega^*_{d}$ can be computed by exhaustive search. Computing $\omega^*_{d}$ for an increasing sequence of $d$'s produces a sequence of lower bounds that converge to $\omega^*_{tp}$ from below. 
 
 It is also possible to bound the commuting operator value of a nonlocal game from above, via a convergent hierarchy of semidefinite programs known as the \df{NPA hierarchy}~\cite{navascues2008convergent,doherty2008quantum}. (Both these papers focus on upper bounds in the two player case, but $k$ player generalizations are straightforward.)
When run to a finite level, this hierarchy gives an upper bound on the commuting operator value of a game. However there is no guarantee that this bound can be achieved by any commuting operator strategy, hence no guarantee that the upper bound matches the true commuting operator value. In general all that can be said is that this hierarchy is complete, meaning that the bound computed necessarily converges to the commuting operator value of the game. Because of the previously mentioned undecidability results, no general bounds can be put on this rate of convergence. 

\paragraph{XOR Games.} XOR games are one family of games for which more concrete results are known. These are nonlocal games where each question $q_j$ is drawn from an alphabet of size $n$, player's responses are single bits $a_i \in \{0,1\}$ and the scoring function checks if the overall parity of the responses matches a desired parity $s_j$ associated with the question, that is 
\begin{align}
    V(a_1, a_2, ... , a_k | q_1, q_2, ... , q_k) = \begin{cases} 1 \text{ if } \sum_i a_i = s_j \pmod{2} \\
    0  \text{ otherwise.}
    \end{cases}
\end{align}
We refer to an XOR game with $k$ players as a $k$XOR Game. It is helpful to think of an $k$XOR game as testing satisfiabiliy of a set of clauses:
\begin{align}
    \hat{X}^{(1)}_{q_{11}} + \hat{X}^{(2)}_{q_{12}} + ... \hat{X}^{(k)}_{q_{1k}} = s_1, \, \hat{X}^{(1)}_{q_{21}} + \hat{X}^{(2)}_{q_{22}} + ... \hat{X}^{(k)}_{q_{2k}} = s_2, \,
    ... \nonumber , \,
    \hat{X}^{(1)}_{q_{m1}} + \hat{X}^{(2)}_{q_{m2}} + ... \hat{X}^{(k)}_{q_{mk}} = s_m,
\end{align}
where each clause $\hat{X}^{(1)}_{q_{j1}} + ... + \hat{X}^{(k)}_{q_{jk}} = s_j$ corresponds to a question vector $(q_{j1},q_{j2}, ..., q_{jk})$ with associated parity bit $s_j$. If question vectors are chosen uniformly at random, the classical value of the game corresponds to the maximum fraction of simultaneously satisfiable clauses (see \cref{subsec:Strategies} for a proof of this fact). The tensor product and commuting operator values have no such interpretation, and may be larger. 

Famously, Bell's inequality can be expressed as a 2XOR game called the CHSH game~\cite{clauser1969proposed}, with clauses
    $\hat{X}^{(1)}_1 + \hat{X}^{(2)}_1 = 1$,
    $\hat{X}^{(1)}_0 + \hat{X}^{(2)}_1 = 0$,
    $\hat{X}^{(1)}_1 + \hat{X}^{(2)}_0 = 0$, and 
    $\hat{X}^{(1)}_0 + \hat{X}^{(2)}_0 = 0$. 
At most 3 of these 4 clauses can be simultaneously satisfied, so the classical value of this game is $0.75$. However, there exists a strategy involving measurements on the two qubit Bell state $\ket{\Phi^{+}} = \frac{1}{\sqrt{2}}\left(\ket{11} + \ket{00}\right)$ which achieves an expected score of $\cos^2(\pi/8) \approx 0.85$. 2XOR games are well understood in general; in 1987 Tsirelson showed the optimal value for any 2XOR game can be achieved by a finite dimensional strategy which can be found in polynomial time~\cite{tsirel1987quantum}. This result shows the 2 qubit strategy is optimal for the CHSH game, so $\omega^*_{co}(\text{CHSH}) =  \omega^*_{tp}(\text{CHSH}) = \cos^2(\pi/8)$. More generally, Tsirelson's result showed $\omega^*_{co} = \omega^*_{tp}$ for any 2XOR game.

For $k$XOR games with $k > 2$ the situation is much more opaque. There exist polynomial time algorithms that can compute $\omega^*_{co}$ and $\omega^*_{tp}$ in special cases~\cite{watts2018algorithms,werner2001all}. On the other hand it is $\NP$-hard to compute the classical value of a 3XOR game~\cite{haastad2001some}, and there is no known upper bound on the runtime required to compute the commuting operator or tensor product value of a $k$XOR game when $k \geq 3$. Furthermore, the commuting operator and tensor product values of a $k$XOR game are not known to coincide. One natural and efficiently solvable problem involving $k$XOR games is identifying games with perfect classical value $\omega = 1$. This is equivalent to asking if the corresponding set of clauses is exactly solvable, so can be answered in polynomial time using Gaussian elimination.

Interestingly, there exist XOR games with $\omega^*_{tp} = 1$ and $\omega < 1$; the sets of clauses associated with these games appear perfectly solvable when the game is played by players sharing an entangled state, despite the clauses having no actual solution. The most famous of these XOR \df{pseudotelepathy games}~\cite{brassard2005quantum} is the GHZ game, a 3XOR game with 4 clauses and classical value $\omega = 3/4$. There is a perfect value tensor product strategy for this game involving measurements of the GHZ state $\frac{1}{\sqrt{2}}\left(\ket{000} + \ket{111}\right)$ so $\omega^*_{tp}(\text{GHZ}) = \omega^*_{co}(\text{GHZ}) = 1$ \cite{greenberger1990bell,mermin1990extreme}.

The relative difficulty of computing the classical value of $k$XOR games compared to the ease of identifying perfect value $k$XOR games motivates an analogous question concerning the entangled values. Does there exist a non-commutative analogue of Gaussian elimination that can easily identify $k$XOR games with $\omega^*_{co}$ or $\omega^*_{tp} = 1$? How hard is it to identify XOR pseudotelepathy games? 

\paragraph{Bias.} XOR games can also be characterized by their \df{bias} $\beta(G)$, defined by $\beta(G) = 2\omega(G) - 1$.\footnote{Some definitions vary by a factor of $2$, defining $\beta(G) = \omega(G) - 1/2$}  The \emph{entangled biases} $\beta^*_{co}$ and $\beta^*_{tp}$ are defined analogously. A completely random strategy for answering an XOR game will achieve a score of $1/2$, hence $\omega(G) \geq 1/2$ and $\beta(G) \in [0,1]$, with identical bounds holding on the other biases. When comparing classical and entangled biases, the quantity usually considered is the ratio $\beta_{tp}^*(G)/\beta(G)$ (or $\beta_{co}^*(G)/\beta(G)$), called the quantum-classical gap.

For 2XOR games this gap can be related to the Grothendieck inequality, with 
\begin{align}
    \beta_{co}^*(G)/\beta(G) = \beta_{tp}^*(G)/\beta(G) \leq K^{\mathbb{R}}_G,
\end{align}
where $K^{\mathbb{R}}_G$ is the real Grothendieck constant\footnote{Because $\omega_{co}^* = \omega_{tp}^*$ for 2XOR games, we also have $\beta_{co}^* = \beta_{tp}^*$}. For 3XOR games no such bound holds~\cite{perez2008unbounded,briet2013explicit}, and there exist families of games $\{G_n\}_{n \in \mathbb{N}}$ with
\begin{equation}
\label{eq:Qadv}
    \lim_{n \rightarrow \infty} \beta_{tp}^*(G_n)/\beta(G_n) = \infty. 
\end{equation}
All these families have the property that $\lim_{n \rightarrow \infty} \beta_{tp}^*(G_n) = 0$; 
it is open whether an unbounded quantum-classical gap can exist for $k$XOR games with $\beta_{co}^*$ bounded away from zero. One special case where a bound on the quantum-classical gap is known is 3XOR games with the players restricted to a GHZ state \cite{perez2008unbounded} (later generatlized to Schmidt states in \cite{briet2009multiplayer}). In this case the quantum-classical gap is bounded above by $4K^{\mathbb{R}}_G$~\cite{briet2009multiplayer}.

\paragraph{Our Main Results.} This paper considers perfect commuting operator strategies for XOR games. We first show a link between XOR games and algebraic combinatorics: proving a $k$XOR game has value $\omega^*_{co} = 1$ iff an instance of the subgroup membership problem on a right angled Coxeter group corresponding to the $k$XOR game has no for an answer.
For $k$XOR games with $k \geq 3$, the corresponding class of Coxeter groups has undecidable subgroup membership problem. A priori, it is not clear whether or not the instances determining if a game has value $\omega_{co}^*=1$ are decidable. 
In this paper we resolve the 3XOR case by proving an algebraic result (whose proof consumes most of this paper) showing the instances of the subgroup membership problem determining the value of $3$XOR games are equivalent to instances on a simpler group $G/K$ obtained from $G$ by modding out a particular normal subgroup $K$. This equivalence lets us construct a polynomial time algorithm that determines if 3XOR games do or do not have value $\omega^*_{co} = 1$. Previously this problem was not known to be decidable.  For $k \geq 4$ it remains open whether or not there is any  algorithm which can decide in finite time if a game has a perfect commuting operator strategy.

Combining this result with arguments from~\cite{watts2018algorithms} shows 3XOR games with $\omega^*_{co} = 1$ also have perfect value tensor product strategies, with the players sharing a three qubit GHZ state. Combining that observation with the known bounds on the quantum-classical gap for strategies using a GHZ state~\cite{perez2008unbounded,briet2009multiplayer} shows that 3XOR games with $\omega^*_{co} = 1$ have classical value bounded a constant distance above $1/2$.
In other words, when $\omega^*_{co} = 1$, how well quantum bias outperforms classical bias is bounded.
This is in contrast with the behavior, see
\Cref{eq:Qadv}, of not perfect games.

\Cref{sec:Overview} gives basic definitions, precise statements of the main theorems, and proofs or proof sketches where appropriate. \Cref{sec:Kmodding} gives proofs of the more involved algebraic results. The appendices fill in proof details and give perspectives, mostly about the subgroup $K$.

\paragraph{Comparison with Other Work.} Our result shares high-level structure with the work of Cleve and coauthors~\cite{cleve2014characterization, cleve2017perfect} and followup work by Slofstra~\cite{slofstra2020tsirelson} concerning linear systems games, though our work comes to a very different conclusion than theirs. In both that work and ours, perfect value commuting operator strategies are shown to exist for a family of nonlocal games iff an algebraic property is satisfied on a related group. In~\cite{slofstra2020tsirelson}, Slofstra showed that the algebraic property associated with linear systems games was undecidable, implying existence of a linear systems game whose only perfect value strategies were incredibly complicated (infinite dimensional). Here we show the algebraic property associated with perfect value 3XOR games can be checked in polynomial time, and give a finite dimensional strategy,
called a MERP strategy, that achieves value 1 whenever a perfect value commuting operator strategy exists. 

The MERP strategy  is a variant of the GHZ strategy that has been considered before. In \cite{werner2001all} this strategy was shown to be optimal for $k$XOR games with two questions per player. In \cite{watts2018algorithms} this strategy 
was shown to be optimal for a restricted class of XOR games (symmetric $k$XOR games)\footnote{Symmetric XOR games are XOR games whose scoring function is invariant under permutations of the players. 
As an example, this  would force $V(a_1 \oplus a_2 | q_1, q_2) = V(a_2 \oplus a_1 | q_2, q_1)$ for a two player symmetric XOR game.
} with perfect value. 
In \cite{hoyer2005quantum} a quantum circuit closely related to this strategy was used as a 
subroutine in short depth circuits.

\section{A Detailed Overview} 
\label{sec:Overview}

We begin this section by introducing notation necessary to state the main theorems of this work. Much of it is specific to this paper, so we suggest a reader familiar with the field still read \Cref{sec:Notation} fairly closely. \Cref{sec:Results} contains all the major theorem statements of this paper. 

\subsection{Background and Notation}\label{sec:Notation}

\subsubsection{Games}

    As mentioned in \Cref{sec:Intro}, we think of XOR games as testing satisfiability of an associated system of equations. Our starting point for defining any $k$XOR game is a system of equations of the form
    \begin{align}
        \hat{X}_{n_{11}}^{(1)} + \hat{X}_{n_{12}}^{(2)} + ... + \hat{X}_{n_{1k}}^{(k)} = s_1, \, 
        \hat{X}_{n_{21}}^{(1)} + \hat{X}_{n_{22}}^{(2)} + ... + \hat{X}_{n_{2k}}^{(k)} = s_2, \,
        ... , \,
        \hat{X}_{n_{m1}}^{(1)} + \hat{X}_{n_{m2}}^{(2)} + ... + \hat{X}_{n_{mk}}^{(k)} = s_m \nonumber
    \end{align}
    where $n_{i\alpha} \in [N]$, $s_i \in \{0,1\}$, $\hat{X}_n^{(\alpha)}$ are formal variables taking values in $\{0,1\}$ and the equations are all taken mod 2. $N$ is called the \df{alphabet size} of the game, and $m$ the number of \df{clauses}. The $k$XOR game associated to this system of equations has $m$ question vectors $\{(n_{11}, n_{12}, ... , n_{1k}), ..., (n_{m1}, n_{m2}, ... , n_{mk})\}$. In a round of the game the verifier selects a $i \in [m]$ uniformly at random, then sends question vector $(n_{i1}, n_{i2}, ... , n_{ik})$ to the players, i.e. player $j$ receives question $n_{ij}$. The players respond with single bit answers and win (get a score of 1 on) the round if the sum of their responses equals $s_i$ mod 2. They get a score of 0 otherwise. Any $k$XOR game where clauses are chosen uniformly at random can be described by specifying the associated system of equations.\footnote{Because we are concerned with the case of perfect value XOR games, fixing the distribution clauses are drawn from to be uniform doesn't change the scope of our results.}
    
    For the case of 3XOR games, we will simplify notation slightly by omitting a subindex and instead writing our system of equations as \begin{align}
        \hat{X}_{a_1}^{(1)} + \hat{X}_{b_1}^{(2)} + \hat{X}_{c_1}^{(3)} = s_1, \,
        \hat{X}_{a_2}^{(1)} + \hat{X}_{b_2}^{(2)} + \hat{X}_{c_2}^{(3)} = s_2,\,
        ... \nonumber, \,
        \hat{X}_{a_m}^{(1)} + \hat{X}_{b_m}^{(2)} + \hat{X}_{c_m}^{(3)} = s_m
    \end{align}
    where $a_i, b_i, c_i \in [N]$ for all $i \in [m]$. The question vector sent to the players is then $(a_j, b_j, c_j)$, with the players winning the round if their responses sum to $s_j$ mod 2.

\subsubsection{Strategies}  \label{subsec:Strategies}

For ease of notation, we will describe strategies in the special case of 3XOR games. We note that all the definitions given here generalize naturally to the $k$-player case. We begin this section with a brief discussion of classical strategies, then move on to consider entangled strategies. The discussion of classical strategies is included mostly for perspective, and can be skipped. The definitions related to entangled strategies are essential.

The most general classical strategy can be described by specifying a response for each player based on the question received and some shared randomness $\lambda$. If we are only concerned with strategies that maximize the players' score, a convexity argument shows that we can ignore the shared randomness (fix $\lambda$ to the value that maximizes the players' score in expectation), so optimal classical strategies can be described by fixing responses for each player to each possible question. To better align with the quantum case, we describe these strategies multiplicatively rather then additively. Define $X_i^{(\alpha)}$ to equal $1$ if player $\alpha$ responds to question $i$ with a $0$, and $X_i^{(\alpha)} = -1$ if the player responds with a $1$. Players win on the $j$-th question vector iff $X_{a_j}^{(1)}X_{b_j}^{(2)}X_{c_j}^{(3)}(-1)^{s_j} = 1$ so the expected score of the players conditioned on receiving the $j$-th question vector can be written 
\begin{align}
    \frac{1}{2} +   \frac{1}{2}X_{a_j}^{(1)}X_{b_j}^{(2)}X_{c_j}^{(3)}(-1)^{s_j}.
\end{align}
and the expected score this strategy achieves on a XOR game is given by
\begin{align}
    \frac{1}{2} + \frac{1}{2m} \sum_j  X_{a_j}^{(1)}X_{b_j}^{(2)}X_{c_j}^{(3)}(-1)^{s_j}.
\end{align}

We refer to strategies where players share and measure a quantum state before deciding their response as \df{entangled strategies}.\footnote{The name quantum strategies, while more natural, can cause confusion with strategies where questions and responses are themselves quantum states. Entanglement is not necessary for these strategies, but the players' achieve a value exceeding their classical value only if the state they share is entangled.} In the most general entangled strategy, players share an state $\ket{\psi}$ and randomness $\lambda$. Then they receive a question, make a measurement on the quantum state based on the question and shared randomness, and then send a response to the verifier based on the measurement outcome. Mathematically, any strategy can be described by fixing the state $\ket{\psi}$ and POVMs (Positive Operator-Valued Measures) for each possible question sent to the players. A Naimark
type dilation theorem tells us that any such strategy can be transformed to one where players' measurements are all described by PVMs (Projective Valued Measures) without changing the score that strategy achieves on a game (the finite dimensional case is standard, see \cite{fritz2012tsirelson} Section 3, for the infinite dimensional argument). Thus, when considering whether or not a game has an optimal strategy we are free to consider only strategies which can be described by a shared state $\ket{\psi}$ and PVMs (Projective Valued Measures) for each possible player and question. 

 In this paper we describe entangled strategies using the PVM formalism. More specifically, we study self adjoint operators associated with these PVMs. We define these self-adjoint operators as follows:

\begin{enumerate}
    \item First, specify the shared state $\ket{\psi}$ which lives in some Hilbert space $\cH$. 
    \item For each player $\alpha \in [k]$ and question $i \in [N]$, let $P^{(\alpha)}_i$ be the projector onto the subspace of $\cH$ associated with a 1 response by player $\alpha$ to question $i$. Similarly, let $Q^{(\alpha)}_i = \id - P^{(\alpha)}_i$ be the projector onto the subspace associated with a 0 response. Here $\id$ represents the identity operator.
    \item For every $\alpha$ and $i$, define the \df{strategy observable}
$
        X_i^{(\alpha)} = Q^{(\alpha)}_i - P^{(\alpha)}_i.
$
\end{enumerate}
The operators $X_i^{(\alpha)}$ satisfy some useful properties. Firstly, they are self-adjoint by construction with eigenvalues~$\pm1$. From this, or from direct calculation, it follows that 
\begin{align}
    \left(X_i^{(\alpha)}\right)^2 &= \left(Q_i^{(\alpha)}\right)^2 + \left(P_i^{(\alpha)}\right)^2 + 2Q_i^{(\alpha)}P_i^{(\alpha)} = Q_i^{(\alpha)} + P_i^{(\alpha)} = \id,
\end{align}
where we have used the fact that $Q_i^{(\alpha)}$ and $P_i^{(\alpha)}$ are orthogonal projectors on the last line. 

Secondly, the restriction that players be non-communicating means that a players chance of responding 1 (resp. 0) should be independent of another player's response. Hence
\begin{align}
    P_i^{(\alpha)} P_j^{(\beta)} = P_j^{(\beta)} P_i^{(\alpha)}
\end{align}
for any $i, j, \alpha \neq \beta$. Defining the group commutator of two observables $[y,z] := yzy^{-1}z^{-1}$ we see 
\begin{align}
    \com{X_i^{(\alpha)}}{ X_j^{(\beta)}} 
    = 1
\end{align}
whenever $\alpha \neq \beta$. 

Finally, we consider a product of operators corresponding to a question vector in the XOR game. A state is in the $1$ eigenspace of $X^{(1)}_{a_j}X^{(2)}_{b_j}X^{(3)}_{c_j}$ iff the sum mod 2 of the players responses to the verifier upon measuring this state is $0$. Similarly a state is in the $-1$ eigenspace iff the sum of the players responses upon measuring this state is $1$. Then, players win on question vector $j$ with probability
\begin{align}
    \frac{1}{2} + \frac{1}{2}\bra{\psi} X^{(1)}_{a_j}X^{(2)}_{b_j}X^{(3)}_{c_j}(-1)^{s_j} \ket{\psi}
\end{align}
and their overall score on the game is given by 
\begin{align}
     \frac{1}{2} + \frac{1}{2m} \sum_j \left(\bra{\psi} X^{(1)}_{a_j}X^{(2)}_{b_j}X^{(3)}_{c_j}(-1)^{s_j} \ket{\psi}\right). \label{eq:entValue}
\end{align}
An important consequence of \cref{eq:entValue} is that the players win the game with probability $1$ iff 
\begin{align} \label{eq:satcond}
    X^{(1)}_{a_j}X^{(2)}_{b_j}X^{(3)}_{c_j}(-1)^{s_j} \ket{\psi} = \ket{\psi}
\end{align}
for all $j \in [m]$. This is because each $X_i^{(\alpha)}$ has norm $\leq 1$. 

\subsubsection{Groups} \label{subsec:groups}

Now we introduce groups whose structure mimics the structure of the strategy observables introduced in \Cref{subsec:Strategies}. We describe these groups using the language of group presentations. The language in this section is, at times, technical and we alert the reader that explicit examples of this notation are given in \Cref{ssec:examples}. 

Given integers $k$ and $N$ we define the \df{game group} $G$ to be the group with generators $\sigma$ and $x_i^{(\alpha)}$ for all $i \in [n], \alpha \in [k]$, and relations:
\begin{enumerate}
    \item $\left(x_i^{(\alpha)}\right)^2 = 1$ for all $i,j \in [n], \alpha \in [k]$
    \item $\com{x_i^{(\alpha)}}{x_j^{(\beta)}} = 1$ for all $i,j \in [n], \alpha \neq \beta \in [k]$
    \item $\sigma^2 = \com{\sigma}{x_i^{(\alpha)}} = 1$ for all $i,j \in [n], \alpha \neq \beta \in [k].$
\end{enumerate}
 Here the $x_i^{(\alpha)}$ are group elements satisfying the same relations as the strategy observables defined in \Cref{subsec:Strategies}.  The element $\sigma$ is a formal variable playing the role of $-1$. Note $\sigma \neq 1$ in the group. While it is not needed for the paper, we remark here that G is a right angled Coxeter group.

Given an $k$-player XOR game testing the system of $m$ equations
\begin{align}
    \hat{X}_{n_{11}}^{(1)} + \hat{X}_{n_{12}}^{(2)} + ... + \hat{X}_{n_{1k}}^{(k)} = s_1, \, 
    \hat{X}_{n_{21}}^{(1)} + \hat{X}_{n_{22}}^{(2)} + ... + \hat{X}_{n_{2k}}^{(k)} = s_2, \,
    ... , \,
    \hat{X}_{n_{m1}}^{(1)} + \hat{X}_{n_{m2}}^{(2)} + ... + \hat{X}_{n_{mk}}^{(k)} = s_m \nonumber
\end{align}
we define the \df{clauses} $h_1, h_2, ... , h_m$ of the game by 
\begin{align}
\label{eq:sumToProd}
    h_i = \prod_{\alpha=1}^k X_{n_{i\alpha}}^{(\alpha)}\sigma^{s_i} \in G,
\end{align}
where $\sigma^0 = \id$. 
We denote the set of all clauses by $S$ and 
define the \df{clause group} $H \leq G$ to be the subgroup generated by the clauses, so $H= \angles{S} = \angles{\left\{ h_i : i \in [m] \right\}}.$ 

\def\nsg{{T}}
We note that this construction lets us associate any $k$-player XOR game with a subgroup $H$ of the group $G$. It is also worth noting that the clause group $H$ is, in general, not a normal subgroup of the game group $G$.  Here we recall that a subgroup $\nsg $ of $G$ is called a normal subgroup (denoted $T \triangleleft G$) if $g\nsg g^{-1} = \nsg$ for all $g \in G$, i.e. for all $g \in G$, $t \in \nsg$ we also have
\begin{align}
    g t g^{-1} \in \nsg. 
\end{align} 
Important subgroups of groups $G$ and $H$ are those consisting of even length words corresponding to each player. Define the \df{even subgroups} $G^{E}, H^{E}$ by
\begin{align}
    G^{E} := \angles{\left\{ x_i^{(\alpha)}x_j^{(\alpha)} : i,j \in [N], \alpha \in [k] \right\} \cup \{\sigma\} }  \text{ and } 
    H^{E} := \angles{\left\{ h_i h_j : i, j \in [m] \right\}} 
\end{align}
Note that $H^{E} < G^{E}$.

Given a set of elements $R \subseteq G^E$ the normal closure of $R$ in $G^E$, denoted in this paper by $\angles{R}^{G^E}$ is defined to be the smallest normal subgroup of $G^E$ containing the elements of $R$. Equivalently,  $\angles{R}^{G^E}$ is the subgroup of $G^E$ generated by the set of elements
\begin{align}
    \{ g r g^{-1} : g \in G^E,\; r \in  R\}.
\end{align}
Define the \df{even commutator subgroup} $K$ of $G^E$ by:
\begin{align}
\label{def:K}
    K = \angles{\left\{\com{x_i^\alpha x_j^\alpha}{x_k^\alpha x_l^\alpha} : i,j,k,l \in [n], \alpha \in [k] \right\}}^{G^E}.
\end{align}

In this paper we will frequently study the group $G^E/K$ obtained by \df{modding out} the group $G^E$ by the normal subgroup $K$. The first isomorphism theorem tells us that this is a well defined group whose elements can be identified with the cosets 
\begin{align}
    \{wK = Kw  :  w \in G^E\}
\end{align} of $K$ in the group $G^E$. In this paper we will denote the elements of $G^E/K$ as $[w]_K$ where $w \in G^E$ and
\begin{align}
    [w_1]_K = [w_2]_K
\end{align}
iff $w_1 k = w_2$ for some $k \in K$. The normal subgroup property ensures that elements in $G^E/K$ multiply as in the group $G^E$, with 
\begin{align}
    [w_1]_K[w_2]_K = [w_1 w_2]_K.\footnotemark
\end{align}\footnotetext{To see this observe that, for any $w_1, w_2 \in G$, $k_1, k_2 \in K$, there exists a $k_1' \in K$ with $w_1 k_1 w_2 k_2 = g_1 g_2 k_1' k_2$ by the definition of a normal subgroup.}

We can also understand subgroups of $G^E/K$ using the (second) isomorphism theorem. This theorem tells us that, given any subgroup $T$ of $G^E$, denoted $T < G^E$:

\begin{enumerate}
    \item $T K$ is a subgroup of $G^E$ 
    \item $T \cap K$ is a normal subgroup of $T$ 
    \item $(T K)/ K$ is isomorphic to $T/(T \cap K)$. 
\end{enumerate} 
Particularly important to this paper will be the group $(H^E K)/ K$ which we view as a subgroup of $G^E/K$. We have for any element $[w]_K \in G^E/K$ that 
$[w]_K \in (H^E K)/ K$ iff 
\begin{align}
w k_1 = h k_2 \Leftrightarrow h = w k_3 
\end{align}
for some $k_1, k_2, k_3 \in K$ and $h \in H$ (note in this equivalence we have again used the normal property of the subgroup $K$). This condition is also equivalent to the condition  
\begin{align}
    [h_1^E]_K [h_2^E]_K ... [h_l^E]_K = [w]_K
\end{align}
where $h_1^E, h_2^E, ... h_l^E$ are generators of $H^E$. This shows that $(H^E K)/ K$ is equal to the subgroup of $G^E/K$ generated by the elements 
\begin{align}
    \left\{ [h_i h_j]_K : i, j \in [m] \right\}
\end{align}
(that is the generators of $H^E$ taken mod $K$ generate the subgroup $(H^E K)/K$ of $G^E/K$). For this reason we use the notation $[H^E]_K$ to denote the group $(H^E K)/K$. Of particular importance to the rest of this paper will be the condition 
\begin{align}
    [\sigma]_K \in [H^E]_K
\end{align}
which we also sometimes state as $\sigma \in H^E \pmod{K}$.

\subsection{Precise Statements of Main Results} \label{sec:Results}

In this section we give theorem statements covering the main results of this paper, along with some relevant theorems from previous work.

\subsubsection{An algebraic characterization of perfect \texorpdfstring{$k$-player}{k-player} XOR Games}

Our first result shows the problem of determining if $\omega_{co}^* = 1$ is equivalent to an instance of the subgroup membership problem on the game group $G$. 

We should mention  that some  ingredients of this proof have appeared before in other contexts \cite{navascues2008convergent,watts2018algorithms}.
The key innovation of this theorem is the algebraic formulation of the issue. 

\begin{thm} \label{thm:IFF}
A $k$XOR game has commuting operator value $\omega_{co}^* = 1$ iff 
$
    \sigma \notin H,
$
where $\sigma, H$ are defined relative to the $k$XOR game as described in \Cref{subsec:groups}.
\end{thm}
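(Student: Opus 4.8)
The plan is to prove both directions via the characterization of perfect strategies in \Cref{eq:satcond}, translating the operator condition into the group $G$ and back. Recall from \Cref{subsec:Strategies} that a commuting operator strategy achieves value $1$ iff there is a state $\ket{\psi}$ and strategy observables $X_i^{(\alpha)}$ with $X^{(1)}_{a_j}X^{(2)}_{b_j}X^{(3)}_{c_j}(-1)^{s_j}\ket{\psi} = \ket{\psi}$ for all $j$. The game group $G$ is defined precisely so that its generators $x_i^{(\alpha)}, \sigma$ satisfy the same relations (involutivity, commutation across players, centrality of $\sigma$) that the $X_i^{(\alpha)}$ and the formal $-1$ satisfy. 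So the two directions are: (i) if $\sigma \in H$, every commuting operator strategy fails; (ii) if $\sigma \notin H$, some commuting operator strategy (in fact coming from the group itself) succeeds.

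For direction (i), suppose $\sigma \in H$, so $\sigma = h_{i_1} h_{i_2} \cdots h_{i_\ell}$ is a product of clauses in $G$. Given any commuting operator strategy with observables $X_i^{(\alpha)}$ and state $\ket{\psi}$, there is a homomorphism from $G$ into the relevant operator algebra sending $x_i^{(\alpha)} \mapsto X_i^{(\alpha)}$ and $\sigma \mapsto -\id$ (this is well-defined exactly because the $X$'s obey the defining relations of $G$). Apply this homomorphism to the word expressing $\sigma$: if the strategy were perfect, then by \Cref{eq:satcond} each image $h_{i_r}(-1)^{s_{i_r}}\cdots$ — more precisely, writing $H_{i_r}$ for the operator $X^{(1)}_{a_{i_r}}X^{(2)}_{b_{i_r}}X^{(3)}_{c_{i_r}}(-1)^{s_{i_r}}$, we have $H_{i_r}\ket{\psi} = \ket{\psi}$, hence the product fixes $\ket{\psi}$, giving $-\ket{\psi} = \ket{\psi}$, a contradiction. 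Hence no perfect strategy exists, i.e. $\omega^*_{co} < 1$. One technical point to handle carefully: the clauses $h_{i_r}$ also carry $\sigma$ factors, so the image of $h_{i_1}\cdots h_{i_\ell}$ is $H_{i_1}\cdots H_{i_\ell}$ times the sign collected from the $\sigma$'s, but since $\sigma \mapsto -1$ is central the bookkeeping matches exactly the claim $\sigma = \prod h_{i_r}$ in $G$.

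For direction (ii), suppose $\sigma \notin H$. Then $\sigma$ is not in $H$, so the coset structure of $G$ relative to $H$ is nontrivial in the relevant way; I would build a strategy directly from the (left) regular-type representation. Concretely, consider the quotient set $H \backslash G$ of right cosets (or work in $\ell^2(H\backslash G)$, or more robustly in $\mathbb{C}[G/\langle\langle \text{something}\rangle\rangle]$), let $\ket{\psi}$ be the vector supported on the coset $H$, and let $x_i^{(\alpha)}$ act by right multiplication. Because $\sigma$ is central and of order $2$ with $\sigma \notin H$, the element $\sigma$ acts as a genuine involution with no fixed coset meeting $H\sigma = \sigma H \ne H$; one checks each clause operator $H_j$ — the image of $h_j (-1)^{s_j}$... wait, rather: the image of $h_j$ acts as right multiplication by $h_j$, and since $h_j \in H$ we get $H \cdot h_j = H$, so $\ket{\psi}$ is fixed by the image of each $h_j$; converting multiplicative to additive and using that $\sigma \mapsto$ the sign operator, this says exactly $X^{(1)}_{a_j}X^{(2)}_{b_j}X^{(3)}_{c_j}(-1)^{s_j}\ket{\psi} = \ket{\psi}$. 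The observables commute across players because the $x_i^{(\alpha)}$ do in $G$, and they square to the identity, so this is a legitimate (possibly infinite-dimensional) commuting operator strategy, and by \Cref{eq:satcond} it is perfect. The one subtlety is realizing $\sigma$ as $-\id$ rather than as a permutation: I would pass to the $\pm 1$-eigenspace decomposition of the $\sigma$-action, or equivalently tensor the regular representation with the sign character of $\langle\sigma\rangle$; since $\sigma$ is central this is clean, and $\ket{\psi}$ (symmetrized appropriately) survives because $\sigma \notin H$ guarantees the cosets $H$ and $H\sigma$ are distinct, so the antisymmetric combination $\ket{H} - \ket{H\sigma}$ is nonzero and is fixed by all $h_j$ while $\sigma$ sends it to its negative.

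The main obstacle is the $\sigma$-bookkeeping in direction (ii): one must produce a representation in which $\sigma \mapsto -\id$ (not merely a permutation) while keeping a nonzero vector fixed by all clauses, and the existence of such a vector is exactly where the hypothesis $\sigma \notin H$ is used — if $\sigma \in H$ the antisymmetrized vector would vanish. Everything else is routine translation between the multiplicative (group/observable) and additive (parity) encodings, plus the observation that the defining relations of $G$ are precisely the universal relations satisfied by all strategy observables, so $G$-representations and commuting operator strategies correspond. I expect the write-up to invoke a standard fact (cf.\ \cite{navascues2008convergent,watts2018algorithms}) that the GNS/regular construction yields the needed representation, so the genuinely new content is just the clean algebraic repackaging as a subgroup membership question.
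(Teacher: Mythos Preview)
Your proposal is correct and follows essentially the same approach as the paper: the forward direction is the same contradiction argument via the homomorphism $G \to$ operators, and for the converse you build the coset representation and take the antisymmetrized vector $\ket{H} - \ket{\sigma H}$, exactly as the paper does (it uses left cosets with the left action, but that is cosmetic). One small point to tidy up when you write it out: right multiplication on $H\backslash G$ is an anti-homomorphism, so either switch to the left action on $G/H$ as the paper does, or define the action by $g \cdot (Hx) = Hxg^{-1}$; since all generators are involutions this doesn't change the verification, but it keeps the representation honest.
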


\begin{proof}
For notational convenience, we prove the result here in the special case of $k = 3$ players. The proof generalizes easily to other values of $k$.

We first show that $\sigma \in H \Rightarrow w^* < 1$. Assume for contradiction that $\sigma \in H$ and $w^* = 1$. Then, since $\sigma \in H$, there exists a sequence of clauses whose product
\begin{align}
    h_{t_1}h_{t_2}...h_{t_l} = \sigma, \label{eq:clause_prod_equals_sigma}
\end{align}
where each clause 
\begin{align}
    h_{t_i} = x_{a_{t_i}}^{(1)}x_{b_{t_i}}^{(2)}x_{c_{t_i}}^{(3)}\sigma^{s_{t_i}} \in S \;\;\;
    \text{ with } \;\;\; a_{t_i}, b_{t_i}, c_{t_i} \in [n],\; s_{t_i} \in \{0,1\}
\end{align} 
 is a generator of the clause group $H$. At the same time, by definition of a perfect commuting operator strategy (see \Cref{subsec:Strategies}) there exists a Hilbert space $\cH$ and a state $\ket{\psi} \in \cH$ with the property that, for every clause $x_{a_{t_i}}^{(1)}x_{b_{t_i}}^{(2)}x_{c_{t_i}}^{(3)}\sigma^{s_{t_i}} \in H$, there exist strategy observables $X_{a_{t_i}}^{(1)}, X_{b_{t_i}}^{(2)},$ and $X_{c_{t_i}}^{(3)}$ satisfying 
\begin{align}
    X_{a_{t_i}}^{(1)}X_{b_{t_i}}^{(2)}X_{c_{t_i}}^{(3)}(-1)^{s_{t_i}} \ket{\psi} = \ket{\psi} \label{eq:clause_fix_psi}
\end{align}
for all $t_i \in \{t_1, t_2, ... , t_l\}$. We can relate the group elements $x_{j}^{(\alpha)}$ and $\sigma$ to the observables $X_{j}^{(\alpha)}$ and $-1$ via a respresentation. 
By construction the strategy observables $X_j^{(\alpha)}$ satisfy the same relations as the elements of $x_{j}^{(\alpha)} \in G$ and the element $\sigma$ satisfies the same relations as $-1$ (viewed as an element of the the algebra of bounded linear operators acting on the Hilbert space $\cH$, denoted $\mathcal{B}(\cH))$. Then we can define a representation $\pi: G \rightarrow \mathcal{B}(\cH)$ with $\pi(x_j^{(\alpha)}) = X_j^{(\alpha)}$ and $\pi(\sigma) = -1$. Then we have 
\begin{align}
    \pi(h_{t_1}h_{t_2}...h_{t_l}) = \pi(\sigma) = -1
\end{align}
by \Cref{eq:clause_prod_equals_sigma} and also 
\begin{align}
    \pi(h_{t_1}h_{t_2}...h_{t_l})\ket{\psi} = \ket{\psi}
\end{align}
by repeated application of \Cref{eq:clause_fix_psi}. We conclude 
\begin{align}
    - \ket{\psi} = \ket{\psi},
\end{align}
a contraction.

It remains to show $\sigma \notin H \Rightarrow w^* = 1$. A proof of this fact that relies on completeness of the nsSoS hierarchy is given in \cite{watts2018algorithms} (Theorem 6.1, in which a sequence of clauses $h_{t_1}h_{t_2}...h_{t_l}$ satisfying $h_{t_1}h_{t_2}...h_{t_l} = \sigma$ is referred to as a \df{refutation}). Here we give a standalone proof, which can be viewed as a special case of the GNS construction. We assume $\sigma \notin H$, and construct the strategy observables and state $\ket{\psi}$ explicitly. 

First we define a Hilbert space $\cH$ with orthogonal basis vectors corresponding to the left cosets of $H$ in $G$. That is, $\cH$ is spanned by basis vectors $\{ \ket{H}, \ket{g_1 H}, ... \}$ with inner product 
\begin{align}
    \braket{g_1H}{g_2H} = \begin{cases}
    1 &\text{ if } g_1^{-1}g_2 \in H \\
    0 &\text{otherwise.}
    \end{cases}
\end{align}
Next we define the representation $\pi: G \rightarrow GL(\cH)$ to be the representation given by the left action of $G$ on $H$, so
\begin{align}
    \pi(g_1)\ket{g_2 H} = \ket{g_1 g_2 H}. 
\end{align}
Finally, define 
 \begin{align}
    \ket{\psi} = \frac{1}{\sqrt{2}}\left(\ket{H} - \ket{\sigma H}\right),
\end{align}
and note that $\sigma \notin H$ by assumption implies $\ket{\psi} \neq 0$. We claim that strategy observables $\pi(x_i^{(\alpha)})$ and state $\ket{\psi}$  achieve value $\omega^* = 1$ for the game. To see this, first note that 
\begin{align}
    \pi(\sigma)\ket{\psi} &= \pi(\sigma)\left(\ket{H} - \ket{\sigma H} \right) = \ket{\sigma H} - \ket{H} = -\ket{\psi}
\end{align}
and for word $w \in H$ we have 
\begin{align}
    \pi(w) \ket{\psi} &= \pi(w) \left(\ket{\sigma H} - \ket{H}\right) = \ket{ \sigma w H} - \ket{w H} = \ket{ \sigma H} - \ket{H} = \ket{\psi} \label{eq:sigma_H}
\end{align}
since $\sigma$ commutes with all elements of $G$.
Then, for any $j \in [m]$ we have 
\begin{align}
    \pi(x_{a_j}^{(1)})\pi(x_{b_j}^{(2)})\pi(x_{c_j}^{(3)})(-1)^{s_j}\ket{\psi} &= \pi(x_{a_j}^{(1)}x_{b_j}^{(2)}x_{c_j}^{(3)}\sigma^{s_j})\ket{\psi} = \pi(h_j) \ket{\psi} = \ket{\psi}
\end{align}
where we used that $h_j \in H$ and \Cref{eq:sigma_H} on the final line. Then the strategy achieves value $\omega^* = 1$ by \cref{eq:satcond}. 
\end{proof}

As we shall see in this paper we find it much easier to study the question of whether $\sigma \in H^E$ rather than if $\sigma \in H$. The following lemma shows that that these conditions are equivalent. 

\begin{lem}
For any $k$XOR game, $\sigma \in H $ iff $\sigma \in H^E$. 
\label{lemma:HvsHE}
\end{lem}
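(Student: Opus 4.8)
The plan is to exploit the single $\mathbb{Z}_2$ grading of $G$ coming from total word length (in the generators $x_i^{(\alpha)}$ and $\sigma$), or better yet a finer grading, to see that $\sigma$ can only be expressed by words of a constrained form. First I would set up a homomorphism that detects parity: since $G \cong (\mathbb{Z}_2^{*N})^{\times 3}\times\mathbb{Z}_2$, there is a well-defined group homomorphism $\ell\colon G \to \mathbb{Z}_2$ sending every generator $x_i^{(\alpha)}$ to $1$ and $\sigma$ to $0$ (this is consistent with all the defining relations, each of which has even total $x$-length). Applying $\ell$ to a clause $h_i = x_{a_i}^{(1)} x_{b_i}^{(2)} x_{c_i}^{(3)} \sigma^{s_i}$ gives $\ell(h_i) = 1$, i.e. every clause has odd $x$-length. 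Hence any product $h_{t_1}\cdots h_{t_l}$ has $\ell$-value equal to $l \bmod 2$.

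Next I would observe $\ell(\sigma) = 0$, so if $\sigma = h_{t_1}\cdots h_{t_l}$ then $l$ must be even. But a product of an even number of clauses is by definition an element of $H^E$, since we may group the factors as $(h_{t_1}h_{t_2})(h_{t_3}h_{t_4})\cdots(h_{t_{l-1}}h_{t_l})$, each pair $h_{t_{2r-1}}h_{t_{2r}}$ being one of the generators $h_i h_j$ of $H^E$. Therefore $\sigma \in H \Rightarrow \sigma \in H^E$. The reverse implication $\sigma \in H^E \Rightarrow \sigma \in H$ is immediate from $H^E \leq H$, which is noted in \Cref{subsec:groups}. This closes the equivalence.

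The only thing requiring any care — and the closest thing to an obstacle — is checking that $\ell$ is genuinely well-defined, i.e. that sending $x_i^{(\alpha)}\mapsto 1$, $\sigma \mapsto 0$ respects relations (1)--(3) of the game group: relation (1) reads $2\equiv 0$, relation (2) reads $1+1+1+1\equiv 0$ (a commutator has each letter appearing twice), and relation (3) reads $0\equiv 0$ and $0+1+0+1\equiv 0$, all true in $\mathbb{Z}_2$. Equivalently, one can cite the explicit isomorphism $G \cong (\mathbb{Z}_2^{*N})^{\times 3}\times\mathbb{Z}_2$ stated in \Cref{subsec:groups} and take $\ell$ to be the composition of the projection onto the three free-product factors with the length-mod-$2$ map on each $\mathbb{Z}_2^{*N}$, summed. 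I expect the whole argument to be a few lines; there is no real technical difficulty, which is consistent with the excerpt billing it as a "warm up."
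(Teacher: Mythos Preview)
Your argument is correct and is essentially the paper's own: a parity count showing that any expression of $\sigma$ as a product of clauses must involve an even number of them, hence lies in $H^E$. The only difference is that the paper uses the per-player parity (exactly the ``finer grading'' you allude to), i.e.\ for a fixed $\alpha$ the map $\ell_\alpha$ sending $x_i^{(\alpha)}\mapsto 1$ and all other generators to $0$; this gives $\ell_\alpha(h_i)=1$ regardless of $k$, whereas your total-length map has $\ell(h_i)=k\bmod 2$ and so only forces the number of clauses to be even when $k$ is odd. Since the lemma is stated ``for any XOR game,'' replacing $\ell$ by a single $\ell_\alpha$ (or simply noting that only the $3$XOR case is needed downstream) closes this small gap.
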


\begin{proof}
The direction $\sigma \in H^E \Rightarrow \sigma \in H$ is immediate. 

 To see the converse direction, note that each clause $h_i$ contains exactly one generator $x_i^{(\alpha)}$ for each $\alpha \in [k]$. Then an odd length sequence of clauses contains an odd number of generators $x_i^{(\alpha)}$ for each $\alpha \in [k]$. Because all the relations of $G$ relate words containing an even number of $x_i^{(\alpha)}$ generators to the identity, the parity of the number of generators corresponding to each player remains fixed when applying the relations of $G$.  Then any word in $G$ which is equal to the product of an odd number of clauses from $H$ contains an odd number of generators corresponding to each player $\alpha$. Thus the word contains at least one generator corresponding to each player $\alpha$ and hence cannot equal $\sigma$. 

From this, we conclude that if $\sigma \in H$ there is a product an even number of clauses $h_1h_2 ... h_{2l} \in H^E$ which equals $\sigma$, thus $\sigma \in H^E$ as well. 
\end{proof}

\subsubsection{Sufficient conditions for \texorpdfstring{$k$XOR}{kXOR} games to have \texorpdfstring{$\omega_{co}^* = 1$}{commuting operator value 1} } 
\label{subsubsec:Sufficient_conditions}

\Cref{thm:IFF,lemma:HvsHE} imply that we could identify XOR games with value $\omega^*=1$ by solving instances of the subgroup membership problem in the groups $G$ or $G^E$. Unfortunately, the subgroup membership problem in these groups is, in general, undecidable.\footnote{A game group $G$ with $k\geq2$ and $n\geq3$ contains $\cF_2 \times \cF_2$ as a subgroup, where $\cF_2$ is the free group on two elements. This group has undecidable subgroup membership problem by \cite{mihajlova1971occurrence}. A similar argument applies to the group $G^E$.} Instead of reasoning about this problem directly it is helpful to consider a computationally simpler subgroup membership problem obtained by modding out the group $G^E$ by the normal subgroup $K$. We show this simpler problem can be solved in polynomial time.

\begin{thm} 
\label{thm:solvableModK}
Let $\sigma, H^E, K$ be defined relative to an $k$XOR game as described in \Cref{subsec:groups}. 
Let $[\sigma]_K$ be the coset containing $\sigma$ after modding $G^E$ out by $K$ and let $[H^E]_K = (H^E K)/K $ be the subgroup of $G^E/K$ generated by the cosets corresponding to generators of $H^E$. Then we can check if
$
    \left[\sigma \right]_K \notin [H^{E}]_K
$
in polynomial time. 
\end{thm}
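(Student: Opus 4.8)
Throughout write $E_\alpha \leq \mathbb{Z}_2^{*N}$ for the $\alpha$-th player's even subgroup, i.e. the index-two subgroup of even-length words in the $\alpha$-th free-product factor of $G$. The plan is to show that modding $G^E$ out by $K$ collapses it to a finitely generated abelian group, inside of which subgroup membership reduces to integer linear algebra.

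\textbf{Step 1: the structure of $G^E$ and $K$.} Since $G \cong (\mathbb{Z}_2^{*N})^{\times 3} \times \mathbb{Z}_2$ with the $\alpha$-th factor generated by $\{x_i^{(\alpha)}\}_i$ and the last by $\sigma$, and since each $E_\alpha$ is generated by the same-player pairs $x_i^{(\alpha)}x_j^{(\alpha)}$, one checks directly that $G^E = E_1 \times E_2 \times E_3 \times \langle\sigma\rangle$. A standard Reidemeister--Schreier computation (transversal $\{\id, x_1^{(\alpha)}\}$) shows $E_\alpha$ is free of rank $N-1$ on $g^{(\alpha)}_k := x_1^{(\alpha)} x_k^{(\alpha)}$, $k = 2,\dots,N$, with $x_i^{(\alpha)}x_j^{(\alpha)} = (g^{(\alpha)}_i)^{-1} g^{(\alpha)}_j$ under the convention $g^{(\alpha)}_1 = \id$. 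Next I would observe that conjugation in a direct product acts factorwise, so the normal closure in $G^E$ of a subset of $E_\alpha$ is just its normal closure inside $E_\alpha$; and that the defining generators $\com{x_i^{(\alpha)}x_j^{(\alpha)}}{x_k^{(\alpha)}x_l^{(\alpha)}}$ of $K$ include all $\com{g^{(\alpha)}_k}{g^{(\alpha)}_l}$, which normally generate the full commutator subgroup $[E_\alpha,E_\alpha]$ of the free group $E_\alpha$. Hence $K = [E_1,E_1]\times[E_2,E_2]\times[E_3,E_3]\times\{\id\}$ and
\[
 G^E/K \;\cong\; \mathbb{Z}^{N-1}\times\mathbb{Z}^{N-1}\times\mathbb{Z}^{N-1}\times\mathbb{Z}_2,
\]
an explicit finitely generated abelian group. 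Writing $\phi$ for the quotient map and $e^{(\alpha)}_k$ ($k=2,\dots,N$, with $e^{(\alpha)}_1 := 0$) for the standard basis of the $\alpha$-th free summand, we get $\phi(x_i^{(\alpha)}x_j^{(\alpha)}) = e^{(\alpha)}_j - e^{(\alpha)}_i$, while $\phi(\sigma)$ is the generator of the $\mathbb{Z}_2$ summand, which is nonzero because every element of $K$ has trivial $\langle\sigma\rangle$-component.

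\textbf{Step 2: translating the membership question.} Since $h_i^2 = \id$ for every $i$, the subgroup $H^E = \angles{\{h_ih_j\}}$ equals $\angles{\{h_1 h_i : i = 2,\dots,m\}}$, with at most $m-1$ generators. Using that distinct players commute and $\sigma$ is central,
\[
 h_1 h_i = \bigl(x_{a_1}^{(1)} x_{a_i}^{(1)}\bigr)\bigl(x_{b_1}^{(2)} x_{b_i}^{(2)}\bigr)\bigl(x_{c_1}^{(3)} x_{c_i}^{(3)}\bigr)\,\sigma^{s_1+s_i},
\]
so $\phi(h_1 h_i) = \bigl(e^{(1)}_{a_i}-e^{(1)}_{a_1},\, e^{(2)}_{b_i}-e^{(2)}_{b_1},\, e^{(3)}_{c_i}-e^{(3)}_{c_1},\, s_1+s_i \bmod 2\bigr)$, a vector with at most six nonzero entries (each $\pm1$) plus a torsion bit, read off directly from the game. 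Thus $[\sigma]_K \in H^E \pmod K$ iff $\phi(\sigma)$ lies in the subgroup of $\mathbb{Z}^{3(N-1)}\times\mathbb{Z}_2$ generated by $\{\phi(h_1 h_i)\}_{i=2}^m$.

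\textbf{Step 3: solving the abelian instance.} Subgroup membership in a finitely generated abelian group given by generators is polynomial-time: lift to $\mathbb{Z}^{3(N-1)+1}$, adjoin the relator $(0,\dots,0,2)$ to encode the torsion summand, and ask whether the integer vector $(0,\dots,0,1)$ lies in the lattice spanned by the columns of the resulting $O(N)\times O(m)$ integer matrix with entries in $\{-1,0,1,2\}$; this is decided by computing a Hermite normal form and back-substituting, with all intermediate data of size polynomial in $N$ and $m$, hence in the size of the game. This completes the argument.

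I expect \textbf{Step 1} --- specifically the identification of $K$ with $[E_1,E_1]\times[E_2,E_2]\times[E_3,E_3]$ --- to be the only place needing real care: one must confirm both that the normal closure does not escape the three free factors (immediate from the direct-product structure) and that the particular commutators named in the definition of $K$ already normally generate all of $[E_\alpha,E_\alpha]$ inside the free group $E_\alpha$. Everything downstream, including the image formulas and the lattice-membership test, is routine.
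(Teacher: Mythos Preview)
Your proposal is correct and follows the same approach as the paper: both recognize that $K$ is exactly the commutator subgroup of $G^E$, so $G^E/K$ is abelian and subgroup membership there is polynomial-time. The paper's proof is a one-paragraph sketch that states this and defers to a general fact about abelian groups, whereas you flesh out the details (the explicit isomorphism $G^E/K \cong \mathbb{Z}^{3(N-1)}\times\mathbb{Z}_2$ via Reidemeister--Schreier on each free factor $E_\alpha$, the explicit image vectors of the $h_1h_i$, and the Hermite-normal-form algorithm); your concern about Step~1 is well placed but your argument there is sound, since the commutators of a free generating set do normally generate the full derived subgroup of a free group.
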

\begin{proof}
First note that $K \triangleleft \; G^E$ and $H^E < G^E$, so the question is well defined. To show a polynomial time algorithm, note that $G^{E}/K$ is an abelian group -- in fact we have modded out by exactly the commutator subgroup of $G^E$. The subgroup membership problem for any abelian group can be solved in polynomial time (see \Cref{thm:SMPabelian} in \Cref{app:AlgComb}), so the result follows. 
\end{proof}

An immediate consequence of \Cref{lemma:HvsHE,thm:IFF} is that
\begin{align}
    [\sigma]_K \notin [H^{E}]_K  \implies \sigma \notin H^E \Leftrightarrow \sigma \notin H
    \implies \text{ the associated XOR game has $\omega_{co}^*=1$. }
    \end{align}
    Then, \Cref{thm:solvableModK} tells us that a sufficient condition for an XOR game to have $\omega_{co}^* = 1$ can be checked in polynomial time. In fact we can say something stronger: when the condition given by \Cref{thm:solvableModK} is met 
    an optimal strategy 
    can be chosen from
    a simple family of strategies which generalize the regular 3 qubit GHZ strategy. We introduce these strategies in \Cref{def:MERP}.

\begin{defn}
\label{def:MERP}
[MERP strategies] 
A MERP (maximally entangled, relative phase) strategy for a $k$XOR game is one where the players share the $k$-qubit GHZ state $\ket{\psi} = \frac{1}{\sqrt{2}}\left(\ket{11...1} + \ket{00...0} \right)$ and, given question j, the $\alpha$-th player measures the $\alpha$-th qubit of the state with a strategy observable of the form
\begin{align}
  M_j^{(\alpha)} :=  \exp(i \theta_j^{(\alpha)} \sigma_z)\sigma_x\exp(-i \theta_j^{(\alpha)} \sigma_z)
\end{align}
 where $\sigma_x, \sigma_z$ are the Pauli $X$ and $Z$ matrices: 
$\sigma_x = \begin{pmatrix} 0 & 1 \\
    1 & 0
    \end{pmatrix} 
$ and $
\sigma_z = \begin{pmatrix} 1 & 0 \\
    0 & -1
\end{pmatrix}.$
\footnote{In the language of \Cref{subsec:Strategies}, the state $\ket{\psi}$ lives in the Hilbert space $\left(\mathbb{C}^2\right)^k$ and, given question $j$, player $\alpha$ measures a strategy observable of the form $ I^{\otimes \alpha -1}\otimes M_j^{(\alpha)} \otimes I^{\otimes k - \alpha}$ where $I$ is the 2 by 2 identity matrix.}

The angle $\theta_j^{(\alpha)}$ depends on the player index $\alpha$ along with the question $j$ sent to the player. To specify a MERP strategy we just need to specify the angles $\theta_j^{(\alpha)}$ for every $j$ and $\alpha$. For this reason we refer to the set of angles 
$\{ \theta_j^{(\alpha)} : \alpha \in [k], j \in [N] \} $
as a description of the strategy. 
\end{defn}

The MERP strategy observables for any choice of $\theta_j^{(\alpha)}$ are valid strategy observables, that is, they are hermitian with  eigenvalues $\pm1$ and observables corresponding to different players commute. 

We can now state the relationship between MERP strategies and the condition $\sigma \notin H^E \pmod{K}$. 

\begin{thm}
\label{thm:PREF-MERP}
Let $\sigma, H^E, K$ be defined relative to an $k$XOR game as described in \Cref{subsec:groups} and define $[\sigma]_K,  [H^E]_K$ as in \Cref{thm:solvableModK}. Then
$
    [\sigma]_K \notin [H^E]_K 
$
iff the game has $\omega_{co}^* = \omega_{tp}^* = 1$ with a perfect value MERP strategy. A description of this strategy can be found in polynomial time. 
\end{thm}

\begin{proof}
This theorem is a rephrasing of
Theorem 5.30 from  \cite{watts2018algorithms}, where the condition $[\sigma]_K \notin H \pmod{K}$ was referred to as existence of a PREF (parity refutation). The equivalence between the $\sigma \in H \pmod{K}$ condition and existence of a parity refutation is elaborated on in \Cref{app:PREF-sigmaInH}.

In \Cref{subsec:MerpModK} we prove the theorem in one direction by showing that MERP matrices satisfy the defining relations for $K$. The other direction is proved by defining a system of linear diophantine equations which are solved only when $[\sigma]_K \in H^E \pmod{K}$ then showing, via a theorem of alternatives, that these equations being unsatisfied implies a MERP strategy can achieve value 1. 

\end{proof}

\subsubsection{For 3 player games the sufficient conditions are necessary}

 Theorems \labelcref{thm:PREF-MERP} gives a necessary and sufficient condition characterizing when an XOR game has a perfect MERP strategy. This also gives a sufficient (but not, in general, necessary) condition for a game to have $\omega^*_{co} = 1$.\footnote{The ``123 game'' introduced in \cite{watts2018algorithms} gives an example of a game with a perfect commuting operator strategy but no perfect MERP strategy.} \Cref{thm:K_modding}, the main mathematical engine underlying
this paper, gives the surprising result that this sufficient condition is also necessary for 3XOR games. 

\begin{thm}
Let $\sigma, H^E, K$ be defined relative to an $3$XOR game as described in \Cref{subsec:groups} and define $[\sigma]_K,  [H^E]_K$ as in \Cref{thm:solvableModK}. Then 
\begin{align}
    [\sigma]_K \in [H^E]_K \quad \Leftrightarrow \quad \sigma \in H^E.
\end{align}

\label{thm:K_modding}
\end{thm}
The proof of this result is purely algebraic, but involved. We give the full proof in \Cref{sec:Kmodding}.

We now state the main result of the paper, which follows as a consequence of  \Cref{thm:IFF,thm:solvableModK,thm:PREF-MERP,thm:K_modding,lemma:HvsHE}.

\begin{thm} \label{thm:wholeAvocado}
A 3XOR game has value $\omega_{co}^* = 1$ iff it has a perfect value MERP strategy, implying $\omega_{co}^* = \omega_{tp}^* = 1$. Additionally, there exists a polynomial time algorithm which decides if a 3XOR game has value $\omega^*_{co} = 1$, and outputs a description of the perfect value MERP strategy if one exists. 
\end{thm}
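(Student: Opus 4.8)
The plan is to obtain this theorem by composing the four results already in hand, following the equivalence chain they collectively encode, and then reading off the algorithm from the complexity bounds attached to two of them.

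First I would assemble the chain of equivalences. By \Cref{thm:IFF}, a 3XOR game has $\omega_{co}^* = 1$ iff $\sigma \notin H$. By the contrapositive of \Cref{lemma:K_modding} (this is the one step that genuinely uses $k=3$), $\sigma \notin H$ iff $[\sigma]_K \notin H^E \pmod K$. Finally, \Cref{thm:PREF-MERP} gives that $[\sigma]_K \notin H^E \pmod K$ iff the game has a perfect value MERP strategy, together with a polynomial time description of that strategy. Stringing these together yields
\begin{align}
\omega_{co}^* = 1 \iff \sigma \notin H \iff [\sigma]_K \notin H^E \pmod K \iff \text{the game has a perfect value MERP strategy}.
\end{align}
Since a MERP strategy is a finite dimensional tensor product strategy (the $k$-qubit GHZ state with single-qubit strategy observables, as in \Cref{def:MERP}), a perfect value MERP strategy is in particular a tensor product strategy with value $1$, so it forces $\omega_{tp}^* = 1$; combined with the general inequality $\omega_{tp}^* \le \omega_{co}^* \le 1$ this gives $\omega_{co}^* = \omega_{tp}^* = 1$, proving the first sentence of the theorem.

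Second I would spell out the algorithm. A 3XOR game is presented by its system of $m$ equations over an alphabet of size $N$, so the input has size polynomial in $m$ and $N$, and the generators and relations defining $G^E$, $H^E$, and $K$ can be written down in polynomial time. The algorithm runs the polynomial time test of \Cref{thm:solvableModK}, which decides whether $[\sigma]_K \notin H^E \pmod K$ (this reduces to subgroup membership in the abelian group $G^E/K$). If the answer is ``yes'', then by the chain above $\omega_{co}^* = 1$, and by the constructive direction of \Cref{thm:PREF-MERP} the angles $\{\theta_j^{(\alpha)}\}$ describing a perfect value MERP strategy are produced in polynomial time; if the answer is ``no'', then $\omega_{co}^* < 1$, again by the chain. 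Composing the polynomial time bounds of \Cref{thm:solvableModK} and \Cref{thm:PREF-MERP} shows the whole procedure is polynomial time.

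The main obstacle does not lie in this assembly at all: it is entirely concentrated in \Cref{lemma:K_modding}, whose hard direction --- that a witness $w_k$ for $[\sigma]_K \in H^E \pmod K$ can be taken inside $K \cap H^E$, so that $\sigma = w_k^{-1}(w_k\sigma) \in H^E \subseteq H$ --- is the content of \Cref{sec:K_moddingProof} and is proved by the graph-based bookkeeping sketched in the excerpt. Granting that, the present theorem is just careful composition of biconditionals plus composition of complexity bounds. The only points that require a little care are fixing the input model so that ``polynomial time'' is unambiguous, and confirming that ``no perfect value MERP strategy exists'' coincides with $\omega_{co}^* < 1$ rather than merely ruling out one strategy family --- but this is precisely what the equivalence chain, anchored by \Cref{thm:IFF} and \Cref{lemma:K_modding}, delivers.
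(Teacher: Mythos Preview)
Your proposal is correct and follows essentially the same approach as the paper's proof: chain \Cref{thm:IFF}, \Cref{lemma:K_modding}, and \Cref{thm:PREF-MERP} to get the equivalence with a perfect MERP strategy, then invoke \Cref{thm:solvableModK} together with the constructive part of \Cref{thm:PREF-MERP} for the polynomial time algorithm. Your write-up is in fact more careful than the paper's brief version, explicitly noting the contrapositive, the tensor product nature of MERP strategies, and the input model.
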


\begin{proof}
By \Cref{thm:IFF}, an 3XOR game has $\omega_{co}^* = 1$ iff $\sigma \notin H$ in the associated group. By \Cref{thm:K_modding}, this is also equivalent to the statement $[\sigma]_K \in H \pmod{K}$. By \Cref{thm:PREF-MERP} this implies a MERP strategy, and the first part of the result follows. 

To get the polynomial time algorithm, we just need to check if $[\sigma]_K \in H \pmod{K}$, which we can do in polynomial time by \Cref{thm:solvableModK}. If true, there exists a MERP strategy and we can find it by \Cref{thm:PREF-MERP}. If false, the same chain of implications as above shows $\omega^*_{co} < 1$. 
\end{proof}

For $k > 3$ players, our arguments break down because we have no analog of \Cref{thm:K_modding}. 
Indeed it remains open whether there is any finite time algorithm for identifying  perfect $k$-player XOR games when $k > 3.$
Some speculation about possible $k$-player analogues of  \Cref{thm:K_modding} is provided in \Cref{app:K_intuition_and_K_players}.

\subsubsection{Bounds on the bias ratio}

Combining \Cref{thm:wholeAvocado} with a result from~\cite{perez2008unbounded} gives the following. 
\begin{thm}
A 3XOR game with $\omega^*_{co} = 1$ also has classical value $\omega > 1/2 + \frac{1}{8K_G^{\mathbb{R}}} \geq 0.57$, where $K_G^{\mathbb{R}}$ is the real Grothendieck constant.
\end{thm}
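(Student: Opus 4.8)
The plan is to combine the newly proved structural result \Cref{thm:wholeAvocado} with the known GHZ‑state bias bound from~\cite{perez2008unbounded,briet2009multiplayer}. First I would invoke \Cref{thm:wholeAvocado}: if a 3XOR game $G$ has $\omega^*_{co}(G) = 1$, then it has a perfect value MERP strategy, and in particular a perfect tensor product strategy in which the three players share the 3‑qubit GHZ state $\tfrac{1}{\sqrt 2}(\ket{000}+\ket{111})$. Thus $\beta^*_{tp}(G) = 1$, and — crucially — this optimal value is attained by a GHZ‑state strategy, so $G$ lies in the class of 3XOR games to which the bounded quantum‑classical gap result applies.

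Next I would recall the quantitative statement: for 3XOR games in which the players are restricted to a GHZ state, the quantum‑classical bias ratio is bounded above by $4K_G^{\mathbb R}$, i.e. $\beta^*_{\mathrm{GHZ}}(G) \le 4K_G^{\mathbb R}\,\beta(G)$ (this is the bound attributed to~\cite{perez2008unbounded,briet2009multiplayer} in the excerpt). Since the MERP strategy is a GHZ‑state strategy, $\beta^*_{\mathrm{GHZ}}(G) \ge \beta_{\text{MERP}}(G) = 1$, hence $1 \le 4K_G^{\mathbb R}\,\beta(G)$, which rearranges to $\beta(G) \ge \tfrac{1}{4K_G^{\mathbb R}}$. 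Translating back through $\beta(G) = 2\omega(G) - 1$ gives $\omega(G) \ge \tfrac12 + \tfrac{1}{8K_G^{\mathbb R}}$, and plugging in the numerical bound $K_G^{\mathbb R} < 1.783$ (Krivine's bound, or even the cruder $K_G^{\mathbb R} \le \pi/(2\ln(1+\sqrt2)) \approx 1.7822$) yields $\tfrac12 + \tfrac{1}{8K_G^{\mathbb R}} > 0.57$, as claimed.

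The main thing to be careful about — and what I would regard as the only real obstacle — is bookkeeping the normalization conventions: the excerpt's footnote warns that some authors define the bias with an extra factor of $2$, and the GHZ‑gap literature~\cite{perez2008unbounded,briet2009multiplayer} may state the constant as $K_G^{\mathbb R}$, $2K_G^{\mathbb R}$, or $4K_G^{\mathbb R}$ depending on whether biases are normalized to $[0,1]$ and whether the game has a bipartition‑style or tripartite‑style inner product. I would pin down the convention used here ($\beta \in [0,1]$, $\beta = 2\omega-1$) and verify that the cited inequality, in that convention, reads exactly $\beta^*_{\mathrm{GHZ}} \le 4K_G^{\mathbb R}\beta$, so that the chain $1 = \beta_{\text{MERP}} \le \beta^*_{\mathrm{GHZ}} \le 4K_G^{\mathbb R}\beta$ is valid. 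Everything after that is a one‑line arithmetic manipulation, so no further technical work is needed.
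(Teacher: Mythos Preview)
Your proposal is correct and follows essentially the same route as the paper: invoke \Cref{thm:wholeAvocado} to get a perfect GHZ-state (MERP) strategy, apply the $\beta^*_{\mathrm{GHZ}}/\beta \le 4K_G^{\mathbb R}$ bound from~\cite{perez2008unbounded}, and rearrange via $\beta = 2\omega - 1$. Your added remarks on pinning down the bias normalization and supplying the numerical bound on $K_G^{\mathbb R}$ to justify the $0.57$ figure are sensible, and the paper's own proof handles these points more tersely.
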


\begin{proof}
By \Cref{thm:wholeAvocado}, a 3XOR game $G$ with $\omega^*_{co} = 1$ must also have a perfect value MERP strategy. This strategy uses a GHZ state for the players, and a bound from~\cite{perez2008unbounded} gives that 
\begin{align}
    \beta_{GHZ}^*/\beta \leq 4K_G^{\mathbb{R}},
\end{align}
where $\beta_{GHZ}^*$ is the maximum bias achieved with a strategy using a GHZ state. 
But then 
\begin{align}
    \beta(G) &\geq \frac{\beta_{GHZ}^*(G)}{4K_G^{\mathbb{R}}} = \frac{1}{4K_G^{\mathbb{R}}} \\
    \implies \omega(G) &\geq \frac{1}{2} + \frac{1}{8K_G^{\mathbb{R}}}
\end{align}
and the result follows. 
\end{proof}

\subsection{Examples} 
\label{ssec:examples}

In this subsection we re-analyze some well known XOR games using the techniques developed in this paper. 

\subsubsection{The CHSH Game}

The first game we analyze is the CHSH game, introduced in~\cite{clauser1969proposed}. This is a two question, two player XOR game. Following convention, questions sent to the players are indicated with labels in $\{0,1\}$. The CHSH tests a system of 4 equations: 
\begin{align}
    \hat{X}^{(1)}_0 + \hat{X}^{(2)}_0 &= 0 & \hat{X}^{(1)}_0 + \hat{X}^{(2)}_1 &= 0 \nonumber \\
    \hat{X}^{(1)}_1 + \hat{X}^{(2)}_0 &= 0 & \hat{X}^{(1)}_1 + \hat{X}^{(2)}_1 &= 1. \nonumber
\end{align}
Following the procedure  as outlined in \Cref{subsec:groups} (\Cref{eq:sumToProd}) we see the clause group $H_{\text{CHSH}}$ associated with this game is generated by the clauses
\begin{align}
    \left\{x_0^{(1)}x_0^{(2)},\;\; x_0^{(1)}x_1^{(2)},\;\;
    x_1^{(1)}x_0^{(2)},\;\;
    x_1^{(1)}x_1^{(2)}\sigma\right\}.
\end{align}
We can multiply these clauses together and then simplifying using the relations of the game group $G$ to show
\begin{align}
    \left(x_0^{(1)}x_0^{(2)}\right) \left(x_0^{(1)}x_1^{(2)}\right)
    \left(x_1^{(1)}x_1^{(2)}\sigma\right)
    \left(x_1^{(1)}x_0^{(2)}\right) = \sigma \in H_{CHSH}.
\end{align}
We conclude the CHSH game does not have a perfect commuting operator strategy by \Cref{thm:IFF}.

\subsubsection{The GHZ Game}
Next, we analyze the GHZ game, introduced in \cite{greenberger1990bell}. This is a 3 player game testing a system of equations
\begin{align}
    \hat{X}^{(1)}_0 + \hat{X}^{(2)}_0 + \hat{X}^{(3)}_0 &= 0 &
    \hat{X}^{(1)}_1 + \hat{X}^{(2)}_1 + \hat{X}^{(3)}_0 &= 1 \nonumber \\
    \hat{X}^{(1)}_0 + \hat{X}^{(2)}_1 + \hat{X}^{(3)}_1 &= 1 & 
    \hat{X}^{(1)}_1 + \hat{X}^{(2)}_0 + \hat{X}^{(3)}_1 &= 1. \nonumber
\end{align}
Thus the associated clause group $H_{GHZ}$ is generated by clauses
\begin{align}
\left\{x_0^{(1)}x_0^{(2)}x_0^{(3)},\;\; 
    x_1^{(1)}x_1^{(2)}x_0^{(3)}\sigma,\;\;
    x_0^{(1)}x_1^{(2)}x_1^{(3)}\sigma,\;\;
    x_1^{(1)}x_0^{(2)}x_1^{(3)}\sigma\right\}.
\end{align}
The GHZ game has a perfect  MERP strategy. Here, we reprove this result using the techniques developed in the paper. 

The first step is to construct the even clause group $H_{GHZ}^E$, which is generated by the pairs of clauses
\begin{align}
    &\big\{ x_0^{(1)}x_0^{(2)}x_0^{(3)}x_1^{(1)}x_1^{(2)}x_0^{(3)}\sigma, \;\;\;
    x_0^{(1)}x_0^{(2)}x_0^{(3)} x_0^{(1)}x_1^{(2)}x_1^{(3)}\sigma, \;\;\;
    x_0^{(1)}x_0^{(2)}x_0^{(3)} x_1^{(1)}x_0^{(2)}x_1^{(3)}\sigma, \;\;\; \nonumber \\
    &\hspace{20pt}
    x_1^{(1)}x_1^{(2)}x_0^{(3)}\sigma x_0^{(1)}x_1^{(2)}x_1^{(3)}\sigma, \;\;\;
    x_1^{(1)}x_1^{(2)}x_0^{(3)}\sigma x_1^{(1)}x_0^{(2)}x_1^{(3)}\sigma, \;\;\;
    x_0^{(1)}x_1^{(2)}x_1^{(3)}\sigma x_1^{(1)}x_0^{(2)}x_1^{(3)}\sigma
    \big\}.
\end{align}
(and, by definition, their inverses). Simplifying these using the relations of the game group $G$ gives generating set 
\begin{align}
    &\Big\{ \left(x_0^{(1)}x_1^{(1)}\right) \left(x_0^{(2)}x_1^{(2)} \right) \sigma, \;\;\;
    \left(x_0^{(2)}x_1^{(2)} \right) \left( x_0^{(3)}x_1^{(3)} \right) \sigma, \;\;\;
    \left( x_0^{(1)}x_1^{(1)} \right) \left( x_0^{(3)}x_1^{(3)} \right) \sigma, \;\;\; 
    \nonumber 
    \\
    &\hspace{40pt}
    \left( x_1^{(1)}x_0^{(1)} \right) \left(x_0^{(3)}x_1^{(3)}\right), \;\;\;
    \left( x_1^{(2)}x_0^{(2)} \right) \left(x_0^{(3)}x_1^{(3)}\right), \;\;\;
    \left( x_0^{(1)}x_1^{(1)} \right) \left( x_1^{(2)}x_0^{(2)} \right)
    \Big\}.
\end{align}
where bracketed terms now indicate generators of $G^E$. Working mod $K$ all the bracketed terms commute with each other,\footnote{A careful reader might notice that all the bracketed terms actually commute with each other even before modding out by the subgroup $K$. This is a consequence of the fact that the $GHZ$ game is a two question game, but doesn't hold in general. Elaborating on this observation, it is possible to show that a two question XOR game with any number of players has a perfect commuting operator strategy iff it has a perfect MERP strategy, giving a special case of the result shown in \cite{werner2001all}.} so now straightforward linear algebra can be used to show that 
\begin{align}
    \sigma \notin H^E_{GHZ} \pmod{K}.
\end{align}
Then we see that the GHZ game has a perfect MERP strategy by \Cref{thm:PREF-MERP}.

While we didn't use it in either of these examples \Cref{thm:K_modding} tells us that the techniques used above to analyze the GHZ game can be used to analyze any 3 player game. In particular, analyzing any 3-player game $G$ with even clause group $H^E_G$ we will either find that that 
\begin{align}
    \sigma \notin H^E_G \pmod{K}
\end{align}
and the game (like the GHZ game) has a perfect MERP strategy or 
\begin{align}
    \sigma \in H^E_G \pmod{K} \Leftrightarrow \sigma \in H^E
\end{align}
by \Cref{thm:K_modding} and so the game has no perfect commuting operator strategy of any kind.

\subsection{Acknowledgements}
The authors thank Igor Klep, Aram Harrow, Gurtej Kanwar, Anand Natarajan, William Slofstra, and Jop Briet for discussions and helpful comments. They also thank Zinan Hu and Zehong Zhao for providing numerical examples valuable to our understanding. They heartily thank both an anonymous referee and Taro Spirig for careful reading and helpful comments.

\newpage

\section{Technical Details} \label{sec:Kmodding}

This section begins with definitions, then compares the algebraic structure defined in this paper to the one introduced in \cite{cleve2017perfect}, then proves \cref{thm:K_modding}.

\subsection{Background and Definitions}  \label{sec:techDefn}

We briefly recap the definitions given in \Cref{sec:Notation}, then give some additional notation that will be useful in this section. In everything that follows $\com{\;}{\;}$ denotes the group commutator, so $\com{x}{y} = xyx^{-1}y^{-1}$. 

\subsubsection{Recap} \label{subsec: recap}

We consider a 3XOR game with questions drawn from an alphabet of size $[N]$. The game has $m$ question vectors labeled $(a_1,b_1,c_1), ...., (a_m,b_m,c_m)$ with $a_i, b_i, c_i \in [N]$. When asked the $i$-th question vector $(a_i,b_i,c_i)$ players win the game if their responses sum (mod 2) to the parity bit $s_i \in \{0,1\}$. Parity bits are defined for all $i \in [m]$. 

There are several algebraic objects associated with the game. The first is the game group $G$, defined to be the group generated by the set of elements
\begin{align}
    \{ x_i^{(a)} : i \in [n], \alpha \in [3] \} \cup \{\sigma\} 
\end{align}
with relations 
\begin{enumerate}
    \item $\left(x_i^{(\alpha)}\right)^2 = 1$ for all $i,\alpha$
    \item $\com{x_i^{(\alpha)}}{x_j^{(\beta)}} = 1$ for all $i,j, \alpha \neq \beta$
    \item $\sigma^2 = 1$
     \item $\com{\sigma}{{x_i^{(\alpha)}}} = 1$ for all $i, \alpha$.
\end{enumerate}

The generators $x_i^{(\alpha)}$ correspond to the observables measured by player $\alpha$ upon receiving question number $i$. The group element $\sigma$ should be though of as a formal variable corresponding to $-1$ in the group. Note $\sigma$ has order two ($\sigma^2 = 1$) and commutes with all elements of group ($\com{\sigma}{w} = \id$ for any $w \in G$). 

For all $i \in [m]$ we define the associated clause
\begin{align}
    h_i = x^{(1)}_{a_i}x^{(2)}_{b_i}x^{(3)}_{c_i}\sigma^{s_i}.
\end{align}
The clause set $S = \{h_i\}_{i \in [m]}$ contains all clauses of the game. The clause group $H = \langle S \rangle$ is the subgroup of $G$ generated by the clauses. 

The even game group $G^{E}$ is the subgroup of $G$ consisting of words with an even number of generators corresponding to each player and possibly the element $\sigma$, so  
\begin{align}
    G^{E} = \angles{\left\{ x_i^{(\alpha)}x_j^{(\alpha)} : i,j \in [N], \alpha \in [k] \right\} \cup \{\sigma\} }.
\end{align}
The even clause group is the subgroup of $G$ generated by an even number of clauses
\begin{align}
H^{E} = \angles{\left\{ h_ih_j : i,j \in [m] \right\}}.
\end{align}
An important observation is that $H^{E}$ is a subgroup of $G^E$. 

Finally, $K$ is the commutator subgroup of $G^E$, defined to be the normal closure of the set of commutators of the generators of $G^E$. In math:
\begin{align}
    K = \angles{\left\{\com{x_i^\alpha x_j^\alpha}{x_k^\alpha x_l^\alpha} : i,j,k,l \in [n], \alpha \in [3] \right\}}^{G^E}
\end{align}
Where $\langle X \rangle^{Y}$ denotes the normal closure of the set $X$ in the group $Y$. 

\subsubsection{Projections and Clause Graphs} \label{subsec: new def}

It will be helpful to have notation for referring to just the observables associated with a single player. To this end, define \df{player subgroups} $ G_\alpha \leq G $ by 
\begin{align}
    G_\alpha = \left\langle \left\{x_i^{(\alpha)}: i \in [N] \right\} \right\rangle
\end{align}
and $ G_\alpha^{E} \leq G^E$ by 
\begin{align}
    G_\alpha^{E} = \left\langle \left\{x_i^{(\alpha)}x_j^{(\alpha)}: i,j \in [N] \right\} \right\rangle
\end{align}
for all $\alpha \in \{1,2,3\}$. One advantage to working with the player subgroups $G_\alpha$ and $G_\alpha^E$ is that they have simple group presentations. We give these presentations in the following lemmas. 
\begin{lem} \label{claim:G_alpha_pres}
The group $G_\alpha \leq G$ is finitely presented, with generating set
\begin{align}
    \{x_i^{(\alpha)} : i \in [N] \} \label{eq:G_alpha_gens}
\end{align} and relations 
\begin{align}
    \left(x_i^{(\alpha)}\right)^2 = 1 \;\; \text{ for all } \;\; i \in [N]. \label{eq:G_alpha_relns}
\end{align}
\end{lem}

\begin{proof}
Since $G_\alpha$ was defined to be the subgroup of $G$ generated by the elements $x_i^{(\alpha)}$, it is clear that any element in $G_\alpha$ is a word in the generators presented above. 

Because the relations given above are clearly true in the group $G_{\alpha}$, all that remains to show is that the relations given in \Cref{eq:G_alpha_relns} can transform two words into one another if they are equal in the group $G_{\alpha}$. To prove this, we first say a word consisting of $x_i^{(\alpha)}$ generators is in fully reduced form iff:
\begin{enumerate}
\item There are no $\left(x_i^{(\alpha)}\right)^{-1}$ in the word and
\item No two $x_i^{(\alpha)}$ with the same value of $i$ are adjacent in the word.
\end{enumerate}
Any word made up of the generators given in \Cref{eq:G_alpha_gens} can be put in fully reduced form by repeated application of the relations given in \Cref{eq:G_alpha_relns} (first by the replacement 
\begin{align}
    \left(x_i^{(\alpha)}\right)^{-1} = \left(x_i^{(\alpha)}\right)^{-1}\left(x_i^{(\alpha)}\right)^2 = x_i^{(\alpha)}
\end{align}
and then by deleting any two adjacent instances of the generator $x_i^{(\alpha)}$). Furthermore, it is clear that two words in $G_\alpha$ made up of $x_i^{(\alpha)}$ generators are equal iff their fully reduced forms are equal (i.e. fully reduced forms are canonical forms for words in $G_\alpha$). This shows that any two words in $G_\alpha$ made up of $x_i^{(\alpha)}$ generators can be transformed into each other via the relations given in \Cref{eq:G_alpha_relns} iff their canonical forms are equal. The claim follows. 
\end{proof}

\begin{lem}
\label{claim:G_alpha^E_pres}
The group $G_\alpha^E \leq G^E$ is finitely presented, with generating set
\begin{align}
    \{x_i^{(\alpha)}x_j^{(\alpha)} : i,j \in [N] \} \label{eq:G_alpha^E_gens}
\end{align} and relations 
\begin{enumerate}[(1)]
    \item $x_i^{(\alpha)} x_j^{(\alpha)} x_j^{(\alpha)} x_k^{(\alpha)} = x_i^{(\alpha)}x_k^{(\alpha)}$ for all $i,j,k \in [N]$ \label{item:G_alpha^E_rel_adj}
    \item $x_i^{(\alpha)} x_j^{(\alpha)} x_j^{(\alpha)} x_i^{(\alpha)} = 1$ for all $i,j \in [N]$. \label{item:G_alpha^E_rel_inv}
\end{enumerate}
\end{lem}

\begin{proof}
Similarly to the proof of \Cref{claim:G_alpha_pres}, it is immediate that the generators given in \Cref{eq:G_alpha^E_gens} generate the group $G_\alpha^E$. 

Also similarly to the proof of \Cref{claim:G_alpha_pres}, we show that the relations given above can transform two words constructed from the generating set given in \Cref{eq:G_alpha^E_gens} into one another if they are equal in the group $G_\alpha^E$ by showing the relations can put words in fully reduced form. To see this first notice we can remove inverses using relation~\labelcref{item:G_alpha^E_rel_inv} and the argument 
\begin{align}
    \left(x_i^{(\alpha)}x_j^{(\alpha)}\right)^{-1} =
    \left(x_i^{(\alpha)}x_j^{(\alpha)}\right)^{-1} x_i^{(\alpha)} x_j^{(\alpha)} x_j^{(\alpha)} x_i^{(\alpha)} = x_j^{(\alpha)} x_i^{(\alpha)}
\end{align}
and then remove any adjacent $x_i^{(\alpha)}$ elements using relation \labelcref{item:G_alpha^E_rel_adj}. The proof follows. 
\end{proof}

Because observables corresponding to different players commute, we can write any $w \in G$ as 
\begin{align}
    w = w_1 w_2 w_3 \sigma^{s_w} \label{eq:playerDecomp}
\end{align}
where $w_\alpha \in G_{\alpha}$ for all $\alpha \in \{1,2,3\}$, and $s_w \in \{0,1\}$. Similarly, any $w' \in G^{E}$ can be written as 
\begin{align} \label{eq:typo1}
    w' = w_1' w_2' w_3' \sigma^{s_w'}
\end{align}
with $w_\alpha' \in G_{\alpha}^{E}$ and $s_w' \in \{0,1\}$. 

For any $\alpha \in \{1,2,3\}$ we also define the projector onto player subgroups $\proj{\alpha} : G \rightarrow G_\alpha$ by defining its action on the generators of $G$:
\begin{align}
    \proj{\alpha}(x_i^{(\beta)}) = \begin{cases}
    x_i^{(\beta)} &\text{ if } \alpha = \beta \\
    \id &\text{ otherwise} 
    \end{cases}
    \hspace{26pt} \text{ and } \hspace{26pt}  \proj{\alpha}(\sigma) = \id
\end{align}
then extending $\proj{\alpha}$ to a homorphism on $G$. To see this defines a valid homomorphism note that it preserves the group relations:
\begin{align}
\proj{\alpha}\left(x_i^{(\beta)}\right)^2 = \begin{cases}
\left(x_i^{(\beta)}\right)^2 = 1 &\text{ if } \alpha = \beta \\
\id^2 = \id &\text{ otherwise }
\end{cases}
\end{align}
with a similarly simple argument showing commutation relations are preserved. It is also helpful to define a projection $\proj{\sigma}$ which acts on the generators of $G$ as 
\begin{align}
    \proj{\sigma}\left(x_i^{(\beta)}\right) = 1 \hspace{26pt} \text{ and } \hspace{26pt}  \proj{\sigma}(\sigma) = \sigma.
\end{align}
Combining \Cref{eq:playerDecomp} with the definition of $\proj{\alpha}$ gives the equation
\begin{align}
    w = \proj{1}(w)\proj{2}(w)\proj{3}(w)\proj{\sigma}(w) \label{eq:playerDecomp2}
\end{align}
for any $w \in G$. 

Next, we define the clause (hyper)graph\footnote{A hypergraph is a graph with edges passing through more than two vertices.} $\cG_{123}$ which gives a useful way of visualizing the clause structure of a game. The graph has $3N$ vertices which we identify with the generators $x_i^{\alpha}$ of the group $G$. We label the vertices by the corresponding generator. Hyperedges in the graph correspond to clauses, with a hyperedge going through vertices $x_{a_i}^{(1)}$, $x_{b_i}^{(2)}$, and $x_{c_i}^{(3)}$ for every clause $x_{a_i}^{(1)}x_{b_i}^{(2)}x_{c_i}^{(3)}\sigma^{s_i} \in S$. Note that the existence of the hyperedge is independent of the value of $s_i$, so the clause graph contains no information about the parity bits. Because edges in the hypergraph correspond to clauses $h \in S$, we can identify any sequence of edges in $\cG_{123}$ with a word $w \in H$. We will use this relationship frequently in the future. 

We also define important subgraphs of $\cG_{123}$ by taking the induced graphs on vertices corresponding to a subset of players.\footnote{For a graph $\mathcal{X} = (V,E)$, the subhypergraph induced by a set of vertices $V' \subseteq V$ is the hypergraph with vertex set $V'$ and edge set $E' = \{e \cap V': e \in E\}$. Essentially, edges are all truncated to the vertices in $V'$.} For any $\alpha \neq \beta \in \{1,2,3\}$ we define the multigraph $G_{\alpha\beta}$ to be subgraph of $\cG_{123}$ induced by the vertices corresponding to generators of $G_{\alpha}$ and $G_{\beta}$.
See \Cref{fig:Sampleinducedhypergraph} for an example. As with the graph $\cG_{123}$, edges in the graph $\cG_{\alpha \beta}$ can be identified with clauses in $H$ and sequences of edges in $\cG_{\alpha \beta}$ can be identified with words $w \in H$. 

\begin{figure}[p]
    \centering
    \captionsetup{singlelinecheck=off}
    \resizebox{10cm}{!}{
        \begin{tikzpicture}

\vertex[xpos=0, ypos=0, label name = $x_1^{(1)}$]{11}
\vertex[xpos=1, ypos=0, label name = $x_2^{(1)}$]{12}
\vertex[xpos=2, ypos=0, label name = $x_3^{(1)}$]{13}
\vertex[xpos=3, ypos=0, label name = $x_4^{(1)}$]{14}
\vertex[xpos=4, ypos=0, label name = $x_5^{(1)}$]{15}
\vertex[xpos=5, ypos=0, label name = $x_6^{(1)}$]{16}

\vertex[xpos=0, ypos=2, label position = above, label name = $x_1^{(2)}$]{21}
\vertex[xpos=1, ypos=2, label position = above, label name = $x_2^{(2)}$]{22}
\vertex[xpos=2, ypos=2, label position = above, label name = $x_3^{(2)}$]{23}
\vertex[xpos=3, ypos=2, label position = above, label name = $x_4^{(2)}$]{24}
\vertex[xpos=4, ypos=2, label position = above, label name = $x_5^{(2)}$]{25}
\vertex[xpos=5, ypos=2, label position = above, label name = $x_6^{(2)}$]{26}

\vertex[xpos=0, ypos=4, label position = above, label name = $x_1^{(3)}$]{31}
\vertex[xpos=1, ypos=4, label position = above, label name = $x_2^{(3)}$]{32}
\vertex[xpos=2, ypos=4, label position = above, label name = $x_3^{(3)}$]{33}
\vertex[xpos=3, ypos=4, label position = above, label name = $x_4^{(3)}$]{34}
\vertex[xpos=4, ypos=4, label position = above, label name = $x_5^{(3)}$]{35}
\vertex[xpos=5, ypos=4, label position = above, label name = $x_6^{(3)}$]{36}

\threeedge[outline = false, fill = true, fill color = gray]{11}{22}{31}

\threeedge[outline = false, fill = true, fill color = gray]{12}{22}{32}

\threeedge[outline = false, fill = true, fill color = gray]{13}{24}{34}
\threeedge[outline = false, fill = true, fill color = gray]{14}{24}{33}

\threeedge[outline = false, fill = true, fill color = gray]{15}{26}{35}

\threeedge[outline = false, fill = true, fill color = gray]{16}{26}{36}

\threeedge[outline = false, fill = true, fill color = gray]{11}{21}{31}
\threeedge[outline = false, fill = true, fill color = gray]{11}{23}{33}
\threeedge[outline = false, fill = true, fill color = gray]{12}{23}{34}
\threeedge[outline = false, fill = true, fill color = gray]{15}{24}{34}
\threeedge[outline = false, fill = true, fill color = gray]{15}{25}{35}

\end{tikzpicture}
    }
    \caption[]{Sample hypergraph $\cG_{123}$ for a game with alphabet size $N = 6$ and 11 clauses. The hypergraph is generated by clause set ($\sigma$ terms omitted since they don't affect the graph): 
    \begin{align*}
    S = \{ x_1^{(1)}x_1^{(2)}x_1^{(3)},
    x_1^{(1)}x_2^{(2)}x_1^{(3)},
    x_2^{(1)}x_2^{(2)}x_2^{(3)},
    x_1^{(1)}x_3^{(2)}x_3^{(3)},
    x_2^{(1)}x_3^{(2)}x_4^{(3)}, 
    x_3^{(1)}x_4^{(2)}x_4^{(3)}, \; \; \; \; \; \; \; \; \\
    x_4^{(1)}x_4^{(2)}x_3^{(3)},
    x_5^{(1)}x_4^{(2)}x_4^{(3)},
    x_5^{(1)}x_6^{(2)}x_5^{(3)},
    x_5^{(1)}x_5^{(2)}x_5^{(3)},
    x_6^{(1)}x_6^{(2)}x_6^{(3)}
    \}     
    \end{align*}
    }
    \label{fig:Samplehypergraph}
\vspace{20pt}
    \resizebox{10cm}{!}{
        \begin{tikzpicture}


\vertex[xpos=0, ypos=2, label position = below, label name = $x_1^{(2)}$]{21}
\vertex[xpos=1, ypos=2, label position = below, label name = $x_2^{(2)}$]{22}
\vertex[xpos=2, ypos=2, label position = below, label name = $x_3^{(2)}$]{23}
\vertex[xpos=3, ypos=2, label position = below, label name = $x_4^{(2)}$]{24}
\vertex[xpos=4, ypos=2, label position = below, label name = $x_5^{(2)}$]{25}
\vertex[xpos=5, ypos=2, label position = below, label name = $x_6^{(2)}$]{26}

\vertex[xpos=0, ypos=4, label position = above, label name = $x_1^{(3)}$]{31}
\vertex[xpos=1, ypos=4, label position = above, label name = $x_2^{(3)}$]{32}
\vertex[xpos=2, ypos=4, label position = above, label name = $x_3^{(3)}$]{33}
\vertex[xpos=3, ypos=4, label position = above, label name = $x_4^{(3)}$]{34}
\vertex[xpos=4, ypos=4, label position = above, label name = $x_5^{(3)}$]{35}
\vertex[xpos=5, ypos=4, label position = above, label name = $x_6^{(3)}$]{36}

\edge{22}{31}

\edge{22}{32}

\edge{24}{34}
\edge{24}{33}

\edge{26}{35}

\edge{26}{36}

\edge{21}{31}
\edge{23}{33}
\edge{23}{34}
\edge{24}{34}
\edge{25}{35}

\end{tikzpicture}
    }
    \caption[]{Induced graph $\cG_{23}$ corresponding to the same clause set as \Cref{fig:Samplehypergraph}}
    \label{fig:Sampleinducedhypergraph}
\end{figure}

In \Cref{sec: Connectivity} we show that we can restrict our attention to the case where $\cG_{123}$ is a connected graph. The induced graph $\cG_{\alpha\beta}$ can be disconnected, and the different connected components of this graph (and representative elements from each) play an important role in the proof in \Cref{sec:K_moddingProof}.

\subsubsection{Defining Homomorphisms via Group Presentations}

We now recap a standard algebraic result which we shall use frequently when making arguments involving the groups $G_\alpha$ and $G_\alpha^E$. In the following Lemma we describe a group as being presented by a set of generators $S$ and relations $R$. It is understood that these relations correspond to the set of equations $\{r = 1\}$ for all $r \in R$. 

\begin{lem} \label{thm:alg_com_hom}
Let $G$ be a group presented by the set of generators $S$ and relations $R$. Let the group $H$ be arbitrary and 
\begin{align}
    f : S \rightarrow H
\end{align}
be some function mapping generators of $G$ to elements in the group $H$. Then $f$ can be extended to a homomorphism $f: G \rightarrow H$ which acts on inverses as
\begin{align}
    f(s^{-1}) = f(s)^{-1}
\end{align}
and on words $s_1 s_2 ... s_l \in G$ as
\begin{align}
    f(s_1 s_2 ... s_l) = f(s_1) f(s_2) ... f(s_l)
\end{align}
iff
\begin{align}
    f(r) = 1 
\end{align}
for all $r \in R$.
\end{lem}

\begin{proof} The only if direction is clear, since $f(r) \neq 1$ implies that $f(1) \neq 1$ and so $f$ can't be a homomorphism.

To prove the if direction, we first show is that $f$ is well defined. To see this, note that any two words $s_1 s_2 ... s_l$ and $t_1 t_2 ... t_k$ made up of elements from the generating set $S$ are equal in $G$ iff 
\begin{align}
    t_1 t_2 ... t_k = s_1 s_2 ... s_k \prod_{i} w_i r_i w_i^{-1}
\end{align}
where words $w_i \in G_\alpha$ are arbitrary, each $r_i$ is in $R$, and equality in the equation above now holds as words (that is, the only thing that needs to be cancelled are elements adjacent to their own inverse). Then
\begin{align}
    f(t_1 t_2 ... t_k) &= f\left(s_1 s_2 ... s_k \prod_{i} w_i r_i w_i^{-1}\right) \\
    &=f(s_1 s_2 ... s_k)  \prod_{i} f(w_i) f(r_i) f(w_i)^{-1} \\
    &=f(s_1 s_2 ... s_k)
\end{align}
and it is clear the function $f$ is well defined. From here it is clear that $f$ is a homomorphism, since for any words $w_1$ and $w_2$ we have 
\begin{align}
    f(w_1w_2) = f(w_1) f(w_2),
\end{align}
and we are done. 
\end{proof}

In practice, given a function $f$ mapping the generators of some group $G$ into a group $H$ and satisfying the conditions of \Cref{thm:alg_com_hom}, we will refer to the homomorphism $f: G \rightarrow H$ constructed using the above procedure as the homorphism constructed by ``extending $f$ in the natural way'', or with similar language.

 \subsection{Comparison with Linear Systems Games}
 \label{sec:comp}

A reader familiar with the work of Cleve, Liu and Slofstra concerning linear systems games~\cite{cleve2017perfect} may notice a similarity between the solution group defined in that paper and the clause group defined in this work. 
In this section we give a direct comparison between the two. Our goal in doing this is not to provide any deep insights -- we simply hope a direct comparison will help a reader already familiar with linear systems games to better understand our work. We do not define linear systems games here, and point readers to \cite{cleve2017perfect} for a formal introduction to them. This section is not critical and a reader can safely skip it without impacting their understanding of the rest of this paper. 

Following \cite{cleve2017perfect}, we consider a binary linear system of $m$ equations on $n$ variables $Mx = b$, with $M \in \mathbb{Z}_2^{m \times n}$ and $b \in \mathbb{Z}^m$. $M_{ij}$ specifies an individual entry in the matrix $M$, and $b_i$ specifies an entry from the vector $b$. The solution group of the binary linear system is a group with generators $g_1, g_2 ..., g_n, J$ and relations
\begin{enumerate}
    \item $g_i^2 = 1$ for all $i \in [n]$ and $J^2 = 1$
    \item $\com{g_i}{J} = 1$ for all $i \in [n]$
    \item $\com{g_i}{g_j} =1$ if $x_i$ and $x_j$ appear in the same equation (that is $M_{li} = M_{lj} = 1$ for some $l \in [m]$).
    \item $\prod_i \left( g_i^{M_{li}}\right) J^{b_l} = 1$ for all $l \in [m]$.
\end{enumerate}
In~\cite{cleve2017perfect} the authors showed the following result:
\begin{thm}[Implied by Theorem 4 of \cite{cleve2017perfect},  paraphrased] The linear system game associated to the system of equations $Mx = b$ has a perfect value commuting operator strategy iff in the associated solution group we have $J \neq 1$. 
\label{thm:LSG}
\end{thm}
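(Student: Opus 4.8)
The final statement to prove is Theorem~\ref{thm:LSG}, which is attributed to Cleve--Liu--Slofstra. Since the excerpt explicitly says they "do not define linear systems games here" and point to the reference for the proof, my plan is to reconstruct the standard argument, which closely parallels the proof of \Cref{thm:IFF} given above.

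\textbf{Overview of the approach.} The plan is to prove both directions separately, using the GNS-type construction on the left cosets of the relevant subgroup, exactly as in the proof of \Cref{thm:IFF}. The role played there by $\sigma \notin H$ will here be played by $J \neq 1$ in the solution group. The key dictionary is: the generators $g_i$ of the solution group correspond to strategy observables $G_i$ measured by the players, $J$ corresponds to $-\id$, and the relations (1)--(4) of the solution group are exactly the relations forced on the strategy observables by the requirement that a commuting operator strategy wins every equation of $Mx = b$ with certainty.

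\textbf{Direction $J = 1 \Rightarrow$ no perfect strategy.} Suppose for contradiction that $J = 1$ in the solution group and there is a perfect commuting operator strategy, given by a state $\ket{\psi}$ and observables $G_i$. Winning equation $l$ with certainty forces $\left(\prod_i G_i^{M_{li}}\right)(-1)^{b_l}\ket{\psi} = \ket{\psi}$, by the same norm argument as in \cref{eq:satcond}; the operators $G_i$ are self-adjoint order-two, commute when appearing in a common equation, and (being $\pm 1$-valued, built from PVMs on commuting factors) satisfy relations (1)--(3) as well. Since the $G_i$ and $-\id$ satisfy all the defining relations of the solution group, there is a homomorphism from the solution group to the group they generate sending $g_i \mapsto G_i$, $J \mapsto -\id$. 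If $J = 1$ in the solution group, then $-\id = \id$ acting on $\ket{\psi}$, i.e. $\ket{\psi} = -\ket{\psi}$, forcing $\ket{\psi} = 0$, a contradiction.

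\textbf{Direction $J \neq 1 \Rightarrow$ perfect strategy.} Here I would mimic the explicit construction in the proof of \Cref{thm:IFF}. Let $\Gamma$ be the solution group and consider the Hilbert space $\cH$ with orthonormal basis the elements of $\Gamma$ (or, to get the commuting-operator structure with a genuine tensor-like split, one uses the trivial subgroup, so cosets are group elements themselves). Define the left regular representation $\pi(\gamma)\ket{\gamma'} = \ket{\gamma\gamma'}$, set $\ket{\psi} = \frac{1}{\sqrt 2}(\ket{1} - \ket{J})$, which is nonzero precisely because $J \neq 1$. Then $\pi(J)\ket{\psi} = -\ket{\psi}$ since $J$ is central of order two, and for any word $w$ in the $g_i$'s, if $w J^{c} = 1$ in $\Gamma$ then $\pi(w)(-1)^{c}\ket{\psi} = \ket{\psi}$. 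Applying this to $w = \prod_i g_i^{M_{li}}$, $c = b_l$, relation (4) gives that the observables $\pi(g_i)$ and state $\ket{\psi}$ win every equation with certainty. Commutativity of observables for different players is inherited from relations (2)--(3): operators appearing in a common equation (hence assigned to questions that can be asked together) commute. This yields a perfect commuting operator strategy.

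\textbf{Main obstacle.} The genuine subtlety — and the reason Cleve--Liu--Slofstra's actual proof is more careful than this sketch — is matching the commuting-operator \emph{strategy} structure (two players, with the verifier sending a row index to one and a variable index to the other, requiring observables for \emph{different players} to commute and observables on the same side to be simultaneously measurable) to the single abstract group $\Gamma$; one must verify that the left regular representation, suitably organized, actually gives a legitimate two-party commuting operator strategy rather than merely a representation, and conversely that an arbitrary commuting operator strategy induces the homomorphism out of $\Gamma$. Since the excerpt defers entirely to \cite{cleve2017perfect} for the definition of linear systems games, I would state the theorem as quoted, note it is \cite[Theorem 4]{cleve2017perfect}, and indicate that the proof is the linear-systems analogue of the proof of \Cref{thm:IFF} given above, with $J \neq 1$ replacing $\sigma \notin H$.
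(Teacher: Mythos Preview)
The paper does not prove \Cref{thm:LSG}: it is stated as a citation to \cite{cleve2017perfect} and no argument is given, so there is nothing in the paper to compare your proposal against. Your own assessment---that the excerpt defers entirely to the reference and that the right thing to do is cite it and note the analogy with \Cref{thm:IFF}---is exactly what the paper does.

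That said, since you do offer a sketch, two concrete gaps are worth flagging. In the backward direction you use only the left regular representation and then worry about how to get a \emph{two}-player commuting operator strategy; the standard resolution (which is the actual content of the Cleve--Liu--Slofstra construction) is to let one player act by the left regular representation and the other by the right regular representation of $\Gamma$ on $\ell^2(\Gamma)$, which commute automatically. In the forward direction you speak of ``observables $G_i$'' as if a perfect strategy directly hands you a single family satisfying relations (1)--(4); in a linear system game Alice and Bob have \emph{different} observables (Alice is handed an equation and must return an assignment to all its variables, Bob is handed a single variable), and one first has to argue that in a perfect strategy Alice's and Bob's operators for the same variable agree on $\ket{\psi}$ before the relations of the solution group follow. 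You correctly identify this class of issue in your ``Main obstacle'' paragraph, so your final recommendation---state the theorem, cite \cite{cleve2017perfect}, and note the parallel with \Cref{thm:IFF}---matches the paper.
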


\Cref{thm:IFF} can be thought of as an analog of \Cref{thm:LSG} for 3XOR games. We can restate \Cref{thm:LSG} in a way that makes the comparison even more apparent. 

Given a system of equations $Mx = b$, define the group $G_{lsg}$ to be the group with generators $g_1, g_2, ... , g_n, J$ and relations 1-3 above. Note that $J \neq 1$ in this group. Next, define the subgroup $H_{lsg} \, \triangleleft \, G_{lsg}$ to be the normal closure in $G_{lsg}$ of the words corresponding to equations in the system of equations $Mx = b$ (that is, the words involved in relation 4 above) so 
\begin{align}
    H_{lsg} = \left\langle \left\{ \prod_i  \left(g_i^{M_{li}}\right) J^{b_l}  : l \in [m] \right\} \right\rangle ^{G_{lsg}}.
\end{align}
Using these definitions, an equivalent statement of  \Cref{thm:LSG} is: 
\begin{thm}[Restatement of \Cref{thm:LSG}] The linear system game associated to the system of equations $Mx = b$ has a perfect value commuting operator strategy iff $J \notin H_{lsg}$.
\label{thm:LSG2}
\end{thm}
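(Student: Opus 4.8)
The plan is to prove the equivalence of \Cref{thm:LSG2} with \Cref{thm:LSG} by showing the two group-theoretic conditions coincide, i.e.\ that in the solution group $J \neq 1$ if and only if $J \notin H_{lsg}$ when $H_{lsg}$ is the normal closure described above and the solution group is realized as $G_{lsg}/H_{lsg}$. The first step is to identify the solution group of \cite{cleve2017perfect} with the quotient $G_{lsg}/H_{lsg}$: the solution group is generated by $g_1,\dots,g_n,J$ subject to relations 1--3 (which are exactly the defining relations of $G_{lsg}$) together with relation 4 (the equation words being trivial). Since relation 4 asks precisely that each word $\prod_i g_i^{M_{li}} J^{b_l}$ equals the identity, and these relations are closed under conjugation automatically in any quotient, imposing them is the same as quotienting $G_{lsg}$ by the normal closure $H_{lsg}$ of those words. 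So the solution group is canonically isomorphic to $G_{lsg}/H_{lsg}$ via the map sending each generator to its coset.

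Given that identification, the second step is routine: the image of $J$ in $G_{lsg}/H_{lsg}$ is nontrivial exactly when $J$ does not lie in the kernel of the quotient map $G_{lsg} \to G_{lsg}/H_{lsg}$, and that kernel is by definition $H_{lsg}$. Hence ``$J \neq 1$ in the solution group'' is literally the statement ``$J \notin H_{lsg}$.'' Combining this with \Cref{thm:LSG} — which says the linear system game has a perfect commuting operator strategy iff $J \neq 1$ in the solution group — immediately yields the restated form in \Cref{thm:LSG2}. One small thing to check along the way is that $J \neq 1$ already in $G_{lsg}$ itself (before quotienting), which is noted in the text and follows because $G_{lsg}$ admits an obvious homomorphism onto $\mathbb{Z}_2$ sending $J$ to the nontrivial element and every $g_i$ to the identity; this homomorphism respects relations 1--3, so it is well defined and witnesses $J \neq 1$. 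This makes the phrasing of the restatement sensible (the condition $J \notin H_{lsg}$ is the only obstruction).

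I expect essentially no serious obstacle here — the result is a definitional unpacking rather than a theorem with content, which the authors themselves flag when they say the goal of the section is exposition rather than deep insight. The only place one must be slightly careful is in matching conventions: confirming that the relation set in the solution group of \cite{cleve2017perfect} is exactly relations 1--4 as reproduced, and that normal closure (rather than ordinary subgroup generation) is the correct operation — which it is, since $H_{lsg}$ must be normal for the quotient $G_{lsg}/H_{lsg}$ to be a group and for relation 4 to hold for all conjugates as group relations implicitly require. With those conventions pinned down, the proof is a two-line observation: $G_{lsg}/H_{lsg}$ is the solution group, $J \notin H_{lsg} \iff [J] \neq 1$ in the quotient, and then apply \Cref{thm:LSG}.
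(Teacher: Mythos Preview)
Your proposal is correct and matches the paper's approach: the paper presents \Cref{thm:LSG2} as a direct restatement of \Cref{thm:LSG} via the definitions of $G_{lsg}$ and $H_{lsg}$, with no further proof given, and your argument is precisely the definitional unpacking (solution group $\cong G_{lsg}/H_{lsg}$, so $J\neq 1$ in the quotient iff $J\notin H_{lsg}$) that justifies this restatement.
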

We can compare the above theorem and \Cref{thm:IFF} directly. We list, and briefly discuss, the key differences:
\begin{enumerate}[\it{i)}]
    \item {\it The group $G$ contains an element for every question player combination, while $G_{lsg}$ only contains an element for every question}. In a commuting operator (or tensor product) strategy for an XOR game, different players can measure completely different observables when sent the same question and so we need a different group element to correspond to each player-question combo.\footnote{Put (informally) in slightly different terms: XOR games can be very far from synchronous, as defined in \cite{helton2017algebras}.} Conversely, in linear systems games there is a close relationship between Alice and Bob's measurements given the same question, and both players measurement operators can be constructed from representations (right and left actions) of the same group elements. 
    
    \item {\it Generators of $G_{lsg}$ commute with each other if they appear in the same equation (relation 3 above). Generators of $G$ satisfy no such relation.} This difference reflects a difference between linear system games and XOR game strategies. In a linear systems game a single player must make simultaneous measurements of all the operators corresponding to a question in the game. This never happens in XOR games. From an algebraic point of view, these extra relations place a restriction on elements of $G_{lsg}$ that is not placed on elements of $G$.
    
    \item {\it The group $H_{lsg}$ is a normal subgroup of $G_{lsg}$, while $H$ is not a normal subgroup of $G$.} This has an algebraic consequence: asking if $J \in H_{lsg}$ is an instance of the word problem (mod out by the generators of $H_{lsg}$, then ask if $J$ equals the identity), while asking if $\sigma \in H$ is an instance of the subgroup membership problem. The word problem is in a sense ``easier" than the subgroup membership problem: there are groups with solvable word problem but undecidable subgroup membership problem~\cite{mihajlova1971occurrence}.  Still, both problems are undecidable in general. This difference also has consequences for game strategies. In a linear systems game, an identity of the form 
    \begin{align}
        \prod_i  \left(g_i^{M_{li}}\right) J^{b_l} = 1
    \end{align}
    holds in the group, hence holds as an operator identity on the strategy observables as well. In an XOR game, the operator identities codified in $H$ only need hold acting on the state $\ket{\psi}$ and there are games (for example, the GHZ game) where products of strategy observables act as the identity on $\ket{\psi}$, but the operators themselves do not multiply to the identity.
\end{enumerate}

We should also point out that a linear systems game can be defined for any system of equations of the form $Mx = b$, while XOR games require equations of a special form: exactly one variable corresponding to each player is involved in each equation. It is possible to define a slightly more general form of $k$XOR games with a subset of players, as opposed to all players, queried on each question but those are not considered here. 

\Cref{thm:wholeAvocado}, in combination with \cite{slofstra2020tsirelson} shows that there cannot exist a mapping which is computable in finite time and transforms linear systems games into XOR games while preserving the commuting operator value of the game. (Or else this mapping, in combination with \Cref{thm:wholeAvocado}, would give a finite time algorithm for deciding whether or not a linear systems game has perfect commuting-operator value. This is impossible by \cite{slofstra2020tsirelson}.)  The question of finding a natural map in the other direction remains open.

\subsection{Connectivity of the Clause Graph} \label{sec: Connectivity}

In \Cref{subsec: new def} we introduced the \emph{clause graph} $\cG_{123}$ -- a graphical representation of the clause structure of a 3XOR game. In this section we consider 3XOR games whose associated clause graph is not connected. Given such a game we can always define smaller games, each involving only the clauses corresponding to a single connected component of the clause graph. Here, we show a 3XOR game has $\omega^*_{co} = 1$ iff each of these smaller games has a perfect commuting operator strategy. 

This result is easy to prove from a strategies point of view. Recall that a clause $\clauseh{i}$ corresponds to a question vector $(a_i,b_i,c_i)$ that could be sent to the players in a round of the game. If a game has a disconnected clause graph $\cG_{123}$, players will never be sent a question vector asking them to make measurements from different connected components of the graph. Thus, players can consider the measurements in each connected component of $\cG_{123}$ independently when coming up with a strategy for the game. If they come up with strategies that win for each connected component of clauses they can always combine them  (given a question, a player follows the strategy corresponding to the connected component that question came from) to create a strategy that wins on the larger game. 

Below, we prove the result using algebraic techniques. The proof is considerably less natural in this setting, but provides a useful exercise in proving results about XOR games using the groups formalism.

\begin{thm} \label{thm:connectedComponents}
Let $G$ be a 3XOR game with clause set $S$, clause group $H$, and clause graph $\cG_{123}$.
Then $\sigma \in H$ iff there exists a subset of clauses $S' \subseteq S$ corresponding to all the edges in a connected component of $\cG_{123}$ with $\sigma \in \langle S' \rangle$. 
\end{thm}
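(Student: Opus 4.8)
The plan is to prove the two directions separately, with the forward direction being essentially trivial and the reverse direction requiring a short argument about how words in $H$ decompose along connected components of $\cG_{123}$.

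\textbf{Easy direction.} Suppose $S' \subseteq S$ consists of all edges in a connected component of $\cG_{123}$ and $\sigma \in \langle S' \rangle$. Since $S' \subseteq S$, we have $\langle S' \rangle \leq \langle S \rangle = H$, so $\sigma \in H$. Nothing more is needed here.

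\textbf{Main direction.} Assume $\sigma \in H$. Then $\sigma = h_{t_1} h_{t_2} \cdots h_{t_l}$ for some clauses $h_{t_i} \in S$. The key structural fact is that the clauses appearing in this product can be partitioned according to which connected component of $\cG_{123}$ their corresponding edges lie in: if $h_{t_i}$ and $h_{t_j}$ belong to edges in different components, then they involve disjoint sets of generators $x_k^{(\alpha)}$, and generators from disjoint vertex sets across \emph{all three} players commute (here I need that two clauses in different components share no vertex in any of the three player-layers, which is exactly what "different connected component of the hypergraph $\cG_{123}$" means). Hence the product $h_{t_1}\cdots h_{t_l}$ can be reordered — at the cost only of shuffling the $\sigma$ factors, which are central — so that all clauses from a given component are grouped together: $\sigma = w^{(1)} w^{(2)} \cdots w^{(r)} \sigma^{e}$ where $w^{(c)} \in \langle S_c \rangle$, $S_c$ is the clause set of the $c$-th component, and $e \in \{0,1\}$ accounts for accumulated $\sigma$'s. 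Moreover each $w^{(c)}$, being a product of clauses all drawn from disjoint generator-sets, lies in a subgroup of $G$ that intersects the analogous subgroups for other components trivially (the subgroup generated by generators on the vertex set of component $c$, together with $\sigma$, and these meet only in $\langle \sigma \rangle$).

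\textbf{Isolating a single component.} Now I apply the player-projection homomorphisms $\proj{\alpha}$ (or more simply, the homomorphism killing all generators outside component $c$'s vertex set while fixing $\sigma$ — which is well-defined since all relations of $G$ involve generators within a single component's vertex set, by disjointness). Call this homomorphism $\pi_c : G \to G$. Applying $\pi_c$ to the identity $\sigma = w^{(1)}\cdots w^{(r)}\sigma^e$: every $w^{(c')}$ with $c' \neq c$ maps to $\sigma^{s_{c'}}$ for some $s_{c'} \in \{0,1\}$ (its image loses all non-$\sigma$ content), $w^{(c)}$ is fixed, and we get $\sigma = w^{(c)} \cdot \sigma^{e'}$ in $G$ for an appropriate $e' \in \{0,1\}$, i.e. $w^{(c)} = \sigma^{1 - e'} \in \langle S_c \rangle$. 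If $e' = 0$ we are done: $\sigma \in \langle S_c \rangle$, and $S_c$ is the desired $S'$. The remaining worry is the case $e' = 1$ for \emph{every} component $c$, which would force $w^{(c)} = \id$ for all $c$ and then $\sigma = \sigma^e$ with $e$ the parity of the number of components contributing a $\sigma$; but by \Cref{lemma:HvsHE} (applied to each sub-game, or directly) each such $w^{(c)}$ that equals $\id$ or $\sigma$ is itself a product of clauses from $S_c$ evaluating to $\id$ or $\sigma$ in the subgroup, and a counting/parity argument on $\sigma$-exponents shows that if $\sigma = h_{t_1}\cdots h_{t_l}$ globally then the total $\sigma$-parity must be odd in a way pinned to exactly one component — the cleanest route is: whichever component $c$ has $\pi_c(\sigma\text{-word})$ equal to $\sigma$ rather than $\id$ is the one with $\sigma \in \langle S_c\rangle$. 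I expect the bookkeeping of $\sigma$-exponents across components to be the main (though still minor) obstacle; it should be handled by observing that $\pi_1(w)\pi_2(w)\cdots$ together with $\proj{\sigma}$ reconstructs $w$ (an analogue of \Cref{eq:playerDecomp2}) so the $\sigma$-contributions across the component-decomposition sum to the $\sigma$-exponent of $\sigma$ itself, namely $1$, forcing an odd number — hence at least one — component $c$ with $w^{(c)} = \sigma$, giving $\sigma \in \langle S_c \rangle$ as required.
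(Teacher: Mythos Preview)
Your overall strategy matches the paper's, but there is one genuine error that you should fix before the rest goes through.

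\textbf{The error.} You claim that clauses from different connected components of $\cG_{123}$ commute because ``generators from disjoint vertex sets across all three players commute.'' This is false. Two clauses $h_i, h_j$ in different components involve \emph{distinct} player-$1$ generators $x_{a_i}^{(1)}$ and $x_{a_j}^{(1)}$, but distinct generators within the same player subgroup $G_\alpha \cong (\mathbb{Z}_2)^{*N}$ do \emph{not} commute; the only commutation relations in $G$ are between generators of \emph{different} players. So the reordering step producing $\sigma = w^{(1)}\cdots w^{(r)}\sigma^e$ is not valid.

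\textbf{Why it doesn't matter, and how to fix it.} The commuting step is unnecessary. Your homomorphism $\pi_c$ (killing generators outside $V_c$, fixing $\sigma$) is well defined, and you can apply it directly to the original unreordered equation $h_{t_1}\cdots h_{t_l}=\sigma$. Since $\pi_c(h_{t_i})=h_{t_i}$ when $h_{t_i}\in S_c$ and $\pi_c(h_{t_i})=\sigma^{s_{t_i}}$ otherwise, and $\sigma$ is central, you get
\[
\Big(\prod_{i:\,h_{t_i}\in S_c} h_{t_i}\Big)\cdot \sigma^{m_c}=\sigma,\qquad m_c=\sum_{i:\,h_{t_i}\notin S_c} s_{t_i},
\]
with the product taken in the original order and hence genuinely in $\langle S_c\rangle$. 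Your parity argument then finishes it: $\proj{\sigma}$ applied to $h_{t_1}\cdots h_{t_l}=\sigma$ gives $\sum_i s_{t_i}\equiv 1$, so $\sum_c n_c\equiv 1$ where $n_c=\sum_{i:h_{t_i}\in S_c}s_{t_i}$, forcing some $n_c$ odd, i.e.\ some $m_c$ even, i.e.\ $\sigma\in\langle S_c\rangle$.

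\textbf{Comparison with the paper.} The paper does essentially this, but splits the work differently: it defines a clause-level restriction $\rho_i$ (not a group homomorphism) alongside a homomorphism $\pi_i$ that sends $\sigma\mapsto\id$, then proves $\proj{\alpha}(\rho_i(\cdot))=\proj{\alpha}(\pi_i(\cdot))=\id$ and handles $\proj{\sigma}$ separately. Your single homomorphism $\pi_c$ fixing $\sigma$ is a slightly cleaner packaging of the same idea --- once the spurious commuting claim is deleted.
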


\begin{proof}
First note that if the clause graph $\cG_{123}$ is connected \Cref{thm:connectedComponents} is trivial, since the only subset of $S$ corresponding to a connected component of $\cG_{123}$ is $S$ itself. Also note that one direction of the above claim is immediate by the observation that $\langle S' \rangle < \langle S \rangle$ and so $\sigma \in \langle S' \rangle \Longrightarrow \sigma \in \langle S \rangle = H$. 

To deal with the converse direction, consider a game $G$ with clause group $H \ni \sigma$ and a disconnected clause graph $\cG_{123}$. Let $S_1, S_2, ... , S_l$ be subsets of $S$ corresponding to all the edges in the connected components of the clause graph;  note that sets $S_1, ..., S_l$ partition the $S$. For all $i \in [l]$, define a map $\rho_i$ which acts on the generators of $H$ as\footnote{Somewhat surprisingly, we cannot extend this map to a homomorphism on $H$ (because it's action on $\sigma$ may be undefined).}
\begin{align}
    \rho_i(h_j) = 
    \begin{cases} 
    h_j &\text{ if } h_j \in S_i \\
    \id &\text{otherwise}
    \end{cases}
\end{align}

We have by assumption that $\sigma \in H$. Then there exists a sequence of clauses $h_{r_1}h_{r_2}...h_{r_t} = \sigma$. We prove two claims:
\begin{enumerate}
    \item For all $\alpha \in \{1,2,3\}$, $i \in [l]$ we have : $\proj{\alpha}(\rho_i(h_{r_1})\rho_i(h_{r_2})...\rho_i(h_{r_t})) = \id$. 
    \item For some $i' \in [l]$, we have $\rho_{i'}(h_{r_1})\rho_{i'}(h_{r_2})...\rho_{i'}(h_{r_t}) = \sigma$.
\end{enumerate}
To prove the first, define the set $V_i$ to consist of all generators $x_j^{(\alpha)}$ corresponding to vertices in the connected component of $\cG_{123}$ containing clauses $S_i$. Then, for all $\alpha \in \{1,2,3\}$, define $V_i^{(\alpha)} = V_i \cap G_{\alpha}$ to be the subset of generators in $V_i$ corresponding to player $\alpha$. Finally, we define the homomorphism $\pi_i: G \rightarrow G$ by its action on the generators of $G$:
\begin{align}
    \pi_i(x_j^{(\alpha)}) = \begin{cases} 
    x_j^{(\alpha)} &\text{ if } x_j^{(\alpha)} \in V_i \\
    \id &\text{ otherwise}
    \end{cases} \hspace{26pt} \text{and} \hspace{26pt} \pi_i(\sigma) = \id.
\end{align}
Routine calculation shows that $\pi_i$ preserves the relations of $G$, and thus, is a valid homomorphism. Now, to prove Claim 1 we show 
\begin{align}
    \proj{\alpha}(\rho_i(h_{r_1})\rho_i(h_{r_2})...\rho_i(h_{r_t})) = \proj{\alpha}(\pi_i(h_{r_1}h_{r_2}...h_{r_t})) = \proj{\alpha}(\pi_i(\sigma)) = \id.
\end{align}
The second equality follows because we assumed $h_{r_1}h_{r_2}...h_{r_t} = \sigma$, and the third equality holds by definition of $\proj{\alpha}$. All that remains to show is the first, but this is straightforward since 
\begin{align}
    \proj{1}(\rho_i(h_{r_j}))  = \proj{1}(\pi_i(h_{r_j})) = x_{a_{r_j}}^{(1)}
\end{align}
if $h_{r_j} \in S_i$ and 
\begin{align}
    \proj{1}(\rho_i(h_{r_j}))  = \proj{1}(\pi_i(h_{r_j})) = \id
\end{align}
otherwise, since $h_{r_j} \notin S_i \implies \proj{1}(h_{r_j}) \notin V_i$ by definition of $V_i$.

Now, to prove the second claim, note Claim 1 in combination with \Cref{eq:playerDecomp2} gives 
\begin{align}
    \rho_i(h_{r_1})\rho_i(h_{r_2})...\rho_i(h_{r_t}) &=  \proj{\sigma}(\rho_i(h_{r_1})\rho_i(h_{r_2})...\rho_i(h_{r_t})) \prod_{\alpha \in [3]} \proj{\alpha}(\rho_i(h_{r_1})\rho_i(h_{r_2})...\rho_i(h_{r_t})) \\
    &= \proj{\sigma}(\rho_i(h_{r_1})\rho_i(h_{r_2})...\rho_i(h_{r_t})).
\end{align}
If $\proj{\sigma}(\rho_i(h_{r_1})\rho_i(h_{r_2})...\rho_i(h_{r_t})) = \sigma$ for any $i \in [l]$ the above equation proves Claim 2. Assume for contradiction that $\proj{\sigma}(\rho_i(h_{r_1})\rho_i(h_{r_2})...\rho_i(h_{r_t})) = \id$ for all $i \in [l]$. Then we have
\begin{align}
    \proj{\sigma}(h_{r_1}h_{r_2}...h_{r_t}) &= \proj{\sigma}(h_{r_1})\proj{\sigma}(h_{r_2})...\proj{\sigma}(h_{r_t})\\
    &= \proj{\sigma}\left(\prod_{i \in [l]}\rho_i(h_{r_1})\right)\proj{\sigma} \left(\prod_{i \in [l]}\rho_i(h_{r_2})\right)... \proj{\sigma}\left(\prod_{i \in [l]}\left(h_{r_t}\right)\right) \\
    &= \prod_{i \in [l]} \proj{\sigma} \left( \rho_i( h_{r_1})\rho_i(h_{r_2})...\rho_i(h_{r_t}) \right) = 1
\end{align}
Where we used the fact that $\sigma$ commutes with all elements of $G$ to reorder elements and get from the second line to the third, and our assumption for the sake of contradiction on the final line. But, by our assumption at the start of this section we also have $\proj{\sigma}(h_{r_1}h_{r_2}...h_{r_t}) = \sigma$. The contradiction proves Claim 2. 

Finally, to complete the proof we note
\begin{align}
    \rho_{i'}(h_{r_1})\rho_{i'}(h_{r_2})...\rho_{i'}(h_{r_t}) &=  \proj{\sigma}(\rho_{i'}(h_{r_1})\rho_{i'}(h_{r_2})...\rho_{i'}(h_{r_t})) \\
    &= \sigma
\end{align}
by \Cref{eq:playerDecomp2}, Claim 1, and Claim 2, and $\rho_{i'}(h_{r_1})\rho_{i'}(h_{r_2})...\rho_{i'}(h_{r_t}) \in S_{i'}$ by definition of $\rho_{i'}$. Thus the claim holds with $S' = S_{i'}$.
\end{proof}

To prove the strongest form of \Cref{thm:K_modding}, we also need a version of \Cref{thm:connectedComponents} that applies to  words $\sigma \in H^E \pmod{K}$. We give that theorem next. The proof is very similar to the proof of \Cref{thm:connectedComponents}, with a few more technical details.\footnote{Actually, \cref{thm:connectedCompononetsEven} in combination with \Cref{thm:K_modding} provide an alternate proof of \Cref{thm:connectedComponents}. Here we proved \Cref{thm:connectedComponents} directly both because the proof serves as a good warm up to the proof of \Cref{thm:connectedCompononetsEven}, and to enphasize the result can be proved independtly from \Cref{thm:K_modding}.}

\begin{thm} \label{thm:connectedCompononetsEven}
Let $G$ be a 3XOR game with clause set $S$, clause group $H$, and clause graph $\cG_{123}$.
For any subset of clauses $S' \subseteq S$, define $H_{S'} = \angles{S'}$ to be the clause group generated by just the clauses in $S'$, and define $H_{S'}^E$ analogously. Then $\sigma \in H^{E} \pmod{K}$ iff there exists a subset of clauses $S' \subseteq S$ corresponding to all the edges in a connected component of $\cG_{123}$ with $\sigma \in H_{S'}^E \pmod{K}$. 
\end{thm}  

\begin{proof}
As with the proof of \Cref{thm:connectedComponents}, the case where $\cG_{123}$ is connected and the direction $\sigma \in H_{S'}^E \pmod{K} \Rightarrow \sigma \in H^{E} \pmod{K}$ are immediate. 

To deal with the remaining case, let $G$ be an XOR game with disconnected clause graph $\cG_{123}$ and $\sigma \in H^E \pmod{K}$. Let $S_1, S_2, ..., S_l$ be subsets of $S$ corresponding to all edges in the connected components of the clause graph. For each $S_i$, we pick some representative clause $\hat{h}_i \in S_i$. Then, define a map $\tilde{\rho}_i$ which acts on the generators of $H$ as 
\begin{align}
    \tilde{\rho}_i(h_j) = 
    \begin{cases}
    h_j &\text{ if } h_j \in S_i \\
    \hat{h}_i &\text{ otherwise.}
    \end{cases}
\end{align}

Note that for any generator of $h_j h_{j'}$ of the even clause group $H^E$ we have 
\begin{align}
    \tilde{\rho}_i(h_j) \tilde{\rho}_i(h_{j'}) \in H_{S_i}^E.
\end{align}
As in the proof of \Cref{thm:connectedComponents}, define the subset of generators $V_i$ to be the $x_i^{(\alpha)}$ corresponding to vertices in the same connected component as the edges in $S_i$. Then define the projector $\tilde{\pi}_i$ which acts on the generators of $G$ as 
\begin{align} \label{eq:typo2}
    \tilde{\pi}_i(x_j^{(\alpha)}) = \begin{cases} 
    x_j^{(\alpha)} & \text{ if } x_j^{(\alpha)} \in V_i \\
    \proj{\alpha}(\hat{h}_i) & \text{ otherwise}
    \end{cases} \hspace{26pt} \text{and} \hspace{26pt} \tilde{\pi}_i(\sigma) = \id.
\end{align}
An important observation is that $\tilde{\pi}_i$ maps commutators of even pairs of generators to commutators of even pairs of generators (or the identity) so $\tilde{\pi}_i(K) \subseteq K$. 

By assumption we have $\sigma \in H^{E} \pmod{K}$. Then there exists an even length sequence of clauses $h_{r_1}h_{r_2}...h_{r_t} = w$ with $w = \sigma w_k$ and $w_k \in K$. We claim: 
\begin{enumerate}
    \item For all $i \in [l]$, $\alpha \in \{1,2,3\}$ we have : 
    $\proj{\alpha}(\tilde{\rho}_i(h_{r_1})\tilde{\rho}_i(h_{r_2})...\tilde{\rho}_i(h_{r_t})) = \proj{\alpha}(\tilde{\pi}_i(w_k)) \in K.$
    \item There exists an $i' \in [l]$ satisfying $\proj{\sigma}(\tilde{\rho}_{i'}(h_{r_1})\tilde{\rho}_i(h_{r_2})...\tilde{\rho}_i(h_{r_t})) = \sigma.$
\end{enumerate}
The proof of the first equality in Claim 1 follows identically to the proof of Claim 1 in \Cref{thm:connectedComponents}. The second inequality holds because $\tilde{\pi}_i(K) \subseteq K$. 

Proving Claim 2 requires a little more work. The complicating issue is that we can encounter a case where $\proj{\sigma}(\tilde{\rho}_i(h_j)) = \sigma$ even if $h_j \notin S_i$. Thus the equation 
\begin{align} 
\proj{\sigma}(h_j) = \proj{\sigma}\left(\prod_i \tilde{\rho}_i(h_j)\right)
\end{align}
might not hold, and we can't simply copy the proof of Claim 2 in \Cref{thm:connectedComponents}. However, copying the proof of Claim 2 does give us that there exists an $i' \in [l]$ for which $\proj{\sigma}({\rho}_{i'}(h_{r_1}){\rho}_{i'}(h_{r_2})...{\rho}_{i'}(h_{r_t})) = \sigma$, that is, the claim holds when the map $\tilde{\rho}_i$ is replaced by the map $\rho_i$ defined in the proof of \Cref{thm:connectedComponents}. Let $n_{i'}$ be the number of clauses in the sequence $h_{r_1}h_{r_2}...h_{r_t}$ not contained in $S_{i'}$, that is 
\begin{align}
    n_{i'} = \abs{ \{j \in [l] : r_{r_j} \notin S_{i'} \} }. 
\end{align}
We claim $n_{i'}$ is even. To see this, note that any word $w \in K$ contains each generator $x_i^{(\alpha)}$ an even number of times, since the even commutators contain the generators $x_i^{(\alpha)}$ an even number of times, and the $x_i^{(\alpha)}$ are self-inverse. Then the number of occurrences of all the $x_i^{(1)} \notin V_{i'}$ in the word $h_{r_1}h_{r_2}...h_{r_t}$ must be even (the $1$ here is arbitrary, all that matters is that we fix a player). But this is equal to $n_{i'}$ mod 2, and we conclude $n_{i'}$ is even.  Finally, we note that 
\begin{align}
    \proj{\sigma}(\tilde{\rho}_{i'}(h_{r_1})\tilde{\rho}_{i'}(h_{r_2})...\tilde{\rho}_{i'}(h_{r_t})) 
    &= \proj{\sigma}({\rho}_{i'}(h_{r_1}){\rho}_{i'}(h_{r_2})...{\rho}_{i'}(h_{r_t})) (\proj{\sigma}(\hat{h}_{i'}))^{n_{i'}} \label{eq:typo3} \\
    &= \proj{\sigma}({\rho}_{i'}(h_{r_1}){\rho}_{i'}(h_{r_2})...{\rho}_{i'}(h_{r_t})) = \sigma, \label{eq:typo4}
\end{align}
since $n_{i'}$ is even and $\sigma$ has order two. 

Combining Claims 1 and 2 with \Cref{eq:playerDecomp2} gives 
\begin{align}
    \tilde{\rho}_{i'}(h_{r_1})\tilde{\rho}_{i'}(h_{r_2})...\tilde{\rho}_{i'}(h_{r_t}) &=  \proj{\sigma}\left(\tilde{\rho}_{i'}(h_{r_1})\tilde{\rho}_{i'}(h_{r_2})...\tilde{\rho}_{i'}(h_{r_t}) \right) \prod_{\alpha}\proj{\alpha}\left(\tilde{\rho}_{i'}(h_{r_1})\tilde{\rho}_{i'}(h_{r_2})...\tilde{\rho}_{i'}(h_{r_t})\right)\\
    &= \sigma \prod_{\alpha}\proj{\alpha}\left(\tilde{\rho}_{i'}(h_{r_1})\tilde{\rho}_{i'}(h_{r_2})...\tilde{\rho}_{i'}(h_{r_t})\right) 
\end{align}
with $\tilde{\rho}_{i'}(h_{r_1})\tilde{\rho}_{i'}(h_{r_2})...\tilde{\rho}_{i'}(h_{r_t}) \in H_{S_{i'}}^{E}$ and $\prod_{\alpha}\proj{\alpha}\left(\tilde{\rho}_{i'}(h_{r_1})\tilde{\rho}_{i'}(h_{r_2})...\tilde{\rho}_{i'}(h_{r_t})\right) \in K$.

\end{proof}

To close this section we observe that \Cref{thm:connectedCompononetsEven} implies that we can prove \Cref{thm:K_modding} for all 3XOR games by proving it in the special case of games whose clause graph $\cG_{123}$ is connected.
To see why, consider a 3XOR game $G$ with clause set $S$, a disconnected clause graph and $\sigma \in H^E \pmod{K}$. \Cref{thm:connectedCompononetsEven} says that we can find a connected subset of clauses $S' \subset S$ with $\sigma \in H_{S'}^E  \pmod{K}$. Then, we restrict to the 3XOR game $G'$ defined only on these clauses and note is has a fully connected clause graph. \Cref{thm:K_modding} then says $\sigma \in \langle S' \rangle$, which implies $\sigma \in \langle H \rangle$ for the original game $G$ as well. For this reason, we assume the clause graph $\cG_{123}$ is connected in \Cref{sec:K_moddingProof}.

\subsection{Proof of \texorpdfstring{\Cref{thm:K_modding}}{Theorem 2.6}}
\label{sec:K_moddingProof}

The proof is involved, and we will build up to it slowly over the course of many lemmas. First, we recap the theorem and give an outline of the first stages of the proof. Note that notation, particularly the $w, w'$ and $\tilde{w}$, in this outline is simplified, and does not match the notation used in the remainder of this section. 

\begin{repthm}{thm:K_modding}[Repeated]
Let $\sigma, H^E, K$ be defined relative to an $k$XOR game as described in \Cref{subsec:groups} and define $[\sigma]_K,  [H^E]_K$ as in \Cref{subsec:groups}. Then 
\begin{align}
    [\sigma]_K \in [H^E]_K \quad \Longleftrightarrow \quad \sigma \in H^E.
\end{align} 
\end{repthm}

\begin{proof}[Proof Outline (Part 1) of \Cref{thm:K_modding}]
The forwards direction is immediate from the discussion in \Cref{subsubsec:Sufficient_conditions}. The backwards direction takes work. 

Our starting point is the observation that $[\sigma]_K \in [H^E]_K$ iff there exists some $h \in H^E$ satisfying 
$h = \sigma w$, with $w \in K$. Our goal, given such an $h$ is to show that $\sigma \in H^E$. To do this we modify the word $h$ by right multiplying by words in $H^E$ until we have removed the $w$ portion, producing a word $\sigma \in H^E$. We refer to this process as ``clearing" the word $w$ from the word $h$. To begin, we break $w$ into three words: since $G_1, G_2$ and $G_3$ group elements all commute with each other we can separate them out and write $w = w_1w_2w_3$ with each $w_\alpha \in G_\alpha^E \cap K$. Then we clear the word $w$ one $w_\alpha$ at a time.

In \Cref{subsec:ProjAndRI} we show how to clear the $w_1$ part of the word $w$. 
To do this we define a homomorphism $\cmap{1}$ which maps any word $v_1 \in G_1$ to a word in $h \in H$ with the $G_1$ portion of the word $h$ equal to $v_1$.
Applying this homomorphism to $w_1$ produces a word $\cmap{1}(w_1) =  w_1\widetilde{w}_2\widetilde{w}_3 \in H^E$, where words $\widetilde{w}_2 \in G_2^E \cap K$ and $\widetilde{w}_3 \in G_3^E \cap K$ are arbitrary.  Now $h \cmap{1}(w_1)^{-1}$ is a word of the form $w' \sigma = w_2' w_3' \sigma$ with $w_2' \in G_2^E \cap K$ and $w_3' \in G_3^E \cap K$. Importantly $w'$ contains no terms in the $G_1$ subgroup, that is, we have successfully cleared the $G_1$ portion of the word $w$.

Our next step is to right multiply by a word which will clear the $w_2'$ term, while not introducing any new terms in the $G_1$ subgroup. We do this by constructing another homomorphism $\cmap{2,1}$, which takes a word $v_2$ in  $G_2^E$ and produces a word in $H^E$ which equals $v_2$ in the $G_2$ subgroup and projects to the identity in the $G_1$ subgroup whenever possible. Details are given in \Cref{subsec:IPRightInverse}.

\Cref{subesc:PreprocessingModK} performs the process of removing the $w_1$ and $w_2$ words from $h$. The final result is a word 
\begin{align}
    w'' = \cmap{2,1}\left( w_2' \right)^{-1} w' = w_3'' \sigma \in H^E,
\end{align}
where $w_3'' \in G_3^E \cap K$.

Finally, we want to clear the word $w_3''$ without introducing any words in the $G_1$ or $G_2$ subgroups. Unlike previous sections, we do not do this by constructing a homomorphism. Instead, in \Cref{subsec:WordPRocessGadget,subsec:proof_of_gadgets} we construct a series of gadget words designed to make a word easier to clear. Then, in \Cref{susec:finalProof} we right multiply the word $w_3''$ by the gadget words, and clear the word with the gadgets introduced. This procedure is elaborated on in Part 2 of this proof outline, in \Cref{subsec:WordPRocessGadget}.
\end{proof}

We now begin the proof in earnest.

\subsubsection{Projectors and a simple right inverse.} \label{subsec:ProjAndRI}

We start with some useful notation. Recall the projector  $\varphi_\alpha : G \rightarrow G_\alpha$ onto group elements corresponding to player $\alpha$ defined in \Cref{subsec: new def} . It is a homomorphism, defined by

\begin{align}
    \varphi_\alpha(x_i^{(\beta)}) := 
    \begin{cases} 
        x_i^{(\beta)} &\text{ if } \alpha = \beta \\
        \id &\text{ otherwise}
    \end{cases}
\end{align}
\nx{ $ \varphi_\alpha(x_i^{(\beta)})$}
and
\begin{align}
    \proj{\alpha}(\sigma) = \id.
\end{align}
We also defined a projector onto the $\sigma$ subgroup, $\proj{\sigma} : G \rightarrow \{\sigma, 1\}$ which satisfies 
\begin{align}
    \proj{\sigma}(x_i^{(j)}) = \id \text{ and } \proj{\sigma}(\sigma) = \sigma. 
\end{align}
Because the map $\proj{}$ is many to one, there are many choices of right inverse: in the course of the paper we will define several. We use the notation $\cmap{}$, with various subscripts, when referring to right inverses of $\proj{}$.

We first define the simple right inverse $\cmap{\alpha} : G_\alpha \rightarrow H$ 
\nx{ $\cmap{\alpha} : G_\alpha \rightarrow H$}
which maps each $x_i^{(\alpha)}$ to a single clause in $S$. For ease of notation, we give the definition when $\alpha = 1$. $\cmap{1}$ is a homomorphism which acts on the generators of $G_1$ by
\begin{align}
    \cmap{1}(x_i^{(1)}) = h_j 
\end{align}
where $j \in [m]$ is chosen so that $\proj{1}(h_j) = x_i^{(1)}$. Note that some clause $x_i^{(1)}x_j^{(2)}x_k^{(3)}\sigma^l$ must exist in $S$ or else the question $x_i^{(1)}$ is never asked, and the group element $x_i^{(1)}$ can be removed from the game group (this can be viewed as a special case of the proof given in \Cref{sec: Connectivity} that we can assume the game group is connected). If there are multiple clauses which contain the element $x_i^{(1)}$, we pick one arbitrarily. To verify $\cmap{1}$ is indeed a homomorphism, we can check
\begin{align}
    \cmap{1}(x_i^{(1)})^2 &= h_j^2 \\
    &= x_{a_j}^{(1)}x_{b_j}^{(2)}x_{c_j}^{(3)}\sigma^{s_j}x_{a_j}^{(1)}x_{b_j}^{(2)}x_{c_j}^{(3)}\sigma^{s_j} \\
    &= \left(x_{a_j}^{(1)}\right)^2\left(x_{b_j}^{(2)}\right)^2\left(x_{c_j}^{(3)}\right)^2\left(\sigma^{s_j}\right)^2 = \id.
\end{align}
$\cmap{\alpha}$ for general $\alpha$ is defined similarly. 

\subsubsection{Identity preserving right inverse.} \label{subsec:IPRightInverse}

The next right inverse we define, $\cmap{\alpha,\beta}$, acts as a right inverse to $\proj{\alpha}$ while also producing a word $h\in H$ satisfying $\proj{\beta}(h) = \id$ whenever such a mapping is possible. In order to define $\cmap{\alpha,\beta}$ as a homomorphism, we restrict it's action to the subgroup of even length words $G_{\alpha}^{E}$. 

Now we give a ``trick'' we will use repeatedly to construct homomorphisms on the even subgroups. 

\begin{lem} \label{lem:homFromGens}
Let $f : G_\alpha \rightarrow H$ be an arbitrary map. Define $\tilde{f} : G_\alpha^{E} \rightarrow H^{E}$ by its action on the generators of $G_\alpha^{E}$
\begin{align}
    \tilde{f}(x_i^{(\alpha)} x_j^{(\alpha)}) = f(x_i^{(\alpha)}) f( x_j^{(\alpha)})^{-1},
\end{align}
and extend it to act on elements in $G_\alpha^{E}$ in the natural way, so for any word 
\begin{align}
   w_\alpha = \prod_{l}x_{i_l}^{(\alpha)} x_{j_l}^{(\alpha)} \in G_{\alpha}^E. 
\end{align}
we have
\begin{align}
    \tilde{f}\left(\prod_{l}x_{i_l}^{(\alpha)} x_{j_l}^{(\alpha)} \right) = \prod_l \tilde{f}\left(x_{i_l}^{(\alpha)} x_{j_l}^{(\alpha)}\right) 
\end{align}

Then $\tilde{f}$ is a homomorphism.
\end{lem} 

\begin{proof} By \Cref{thm:alg_com_hom} the only thing we need to show is that $\tilde{f}$ respects the relations of $G_\alpha^E$. By \Cref{claim:G_alpha^E_pres} $G_\alpha^E$ has only two families of relations, namely that 
\begin{enumerate}[(1)]
    \item $x_i^{(\alpha)} x_j^{(\alpha)} x_j^{(\alpha)} x_k^{(\alpha)} = x_i^{(\alpha)}x_k^{(\alpha)}$ for all $i,j,k \in [N]$, and that \label{item:G^E_reln_1}
    \item $x_i^{(\alpha)} x_j^{(\alpha)} x_j^{(\alpha)} x_i^{(\alpha)} = 1$ for all $i,j \in [N]$. \label{item:G^E_reln_2} 
\end{enumerate}
We check that $\tilde{f}$ satisfies these through straightforward computation. Noting that 
\begin{align}
    \tilde{f}(x_i^{(\alpha)} x_j^{(\alpha)}x_j^{(\alpha)}x_k^{(\alpha)}) &= f(x_i^{(\alpha)})f( x_j^{(\alpha)})^{-1} f(x_j^{(\alpha)})f(x_k^{(\alpha)})^{-1} = \tilde{f}(x_i^{(\alpha)}x_k^{(\alpha)})
\end{align}
shows $\tilde{f}$ satisfies relation \labelcref{item:G^E_reln_1}, while noting that 
\begin{align}
    \tilde{f}(x_i^{(\alpha)} x_j^{(\alpha)}x_j^{(\alpha)}x_i^{(\alpha)}) &= f(x_i^{(\alpha)})f( x_j^{(\alpha)})^{-1} f(x_j^{(\alpha)})f(x_i^{(\alpha)})^{-1} = 1
\end{align}
shows $\tilde{f}$ satisfies relations \labelcref{item:G^E_reln_2}. 
\end{proof}

Now we turn to introducing an important homomorphism
$\cmap{a,b}$. Our organization
is unusual in that we give its properties 
first as \Cref{lem:making_homomorphisms}
and then  define it
and its key ingredients, \Cref{eq:rep_def,eq:cpath_def,eq:cmap_def}, 
during the proof of the lemma.
 We alert the reader that these  objects  will be re-used in future proofs.

\nx{ $\cmap{\alpha,\beta} : G_\alpha^{E} \rightarrow H^{E}$ satisfying \\
   A1. \  $\proj{\alpha}(\cmap{\alpha,\beta}(w)) = w$ for all $w \in G_\alpha^{E}$. \\
  A2. \  $\proj{\beta}\left( \cmap{\alpha,\beta}(w) \right) = \id$ whenever there exists an $h\in H^E$ satisfying $\proj{\alpha}(h) = w$ and $\proj{\beta}(h) = \id$.  
}
\begin{lem} \label{lem:making_homomorphisms}
For each $\alpha, \beta \in [3]$, with $\alpha \neq \beta$ there exists a homomorphism $\cmap{\alpha,\beta} : G_\alpha^{E} \rightarrow H^{E}$ satisfying 
\begin{enumerate}[label=A\arabic*., ref=A\arabic*]
    \item $\proj{\alpha}(\cmap{\alpha,\beta}(w)) = w$ for all $w \in G_\alpha^{E}$. \label{abprop:afix}
    \item $\proj{\beta}\left( \cmap{\alpha,\beta}(w) \right) = \id$ whenever there exists an $h\in H^E$ satisfying $\proj{\alpha}(h) = w$ and $\proj{\beta}(h) = \id$.  \label{abprop:bid}
\end{enumerate}
An important consequence of Property~\labelcref{abprop:bid} is that $\proj{\beta}\left( \cmap{\alpha,\beta} \left( \proj{\alpha} (h) \right) \right) = \id$ for any $h \in H^E$ satisfying $\proj{\beta}(h) = \id$.
\end{lem}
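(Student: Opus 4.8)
\emph{Plan.} By the symmetry of the three players it suffices to construct $\cmap{1,2}\colon G_1^E\to H^E$; the idea is to build it from a spanning forest of the induced clause graph $\cG_{12}$. Recall $\cG_{12}$ has vertex set the generators $x_i^{(1)},x_j^{(2)}$, with the clause $h_r$ contributing an edge joining $x_{a_r}^{(1)}$ and $x_{b_r}^{(2)}$; since every question is asked (see \Cref{sec: Connectivity}) no vertex is isolated, so every connected component of $\cG_{12}$ contains an edge, hence vertices of both players. In each component $C$ fix a spanning tree and choose a root $r_C=x_{j_C}^{(2)}$ that is a player-$2$ vertex. For a player-$1$ vertex $x_i^{(1)}$ lying in component $C$, the tree path from $r_C$ to $x_i^{(1)}$ alternates players and has odd length; let $f(x_i^{(1)})\in H$ be the product of the clauses labelling the edges of this path, read in the order the edges are traversed from $r_C$ to $x_i^{(1)}$. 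I then define $\cmap{1,2}$ by feeding $f$ into the recipe of \Cref{lem:homFromGens}, i.e.\ $\cmap{1,2}(x_i^{(1)}x_{i'}^{(1)})=f(x_i^{(1)})f(x_{i'}^{(1)})^{-1}$ extended to a homomorphism $G_1^E\to H^E$; this lands in $H^E$ because each $f(x_i^{(1)})$ is a word of odd length in the clauses, so each generator's image is an even-length clause word.

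\emph{Two pointwise identities and \ref{abprop:afix}.} The construction rests on the telescoping facts $\proj{1}\!\left(f(x_i^{(1)})\right)=x_i^{(1)}$ and $\proj{2}\!\left(f(x_i^{(1)})\right)=r_C$. Both follow by reading off $\proj{1}$ and $\proj{2}$ of the clauses along the path: the player-$1$ indices appear in consecutive equal pairs except for the terminal vertex $x_i^{(1)}$, and the player-$2$ indices appear in consecutive equal pairs except for the initial vertex $r_C$, and the paired occurrences cancel since the generators are involutions. The important consequence is the second identity, which says $\proj{2}\circ f$ is \emph{constant on each connected component of $\cG_{12}$}, equal to that component's root. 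Property \ref{abprop:afix} is now immediate: $\proj{1}\circ\cmap{1,2}$ is a homomorphism $G_1^E\to G_1$ sending each generator $x_i^{(1)}x_{i'}^{(1)}$ to $x_i^{(1)}(x_{i'}^{(1)})^{-1}=x_i^{(1)}x_{i'}^{(1)}$, so it is the identity inclusion.

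\emph{Verifying \ref{abprop:bid}.} Suppose $h=h_{r_1}\cdots h_{r_t}\in H^E$ (so $t$ is even) has $\proj{2}(h)=\id$ and $\proj{1}(h)=w$, so that $x_{b_{r_1}}^{(2)}\cdots x_{b_{r_t}}^{(2)}=\id$ in $G_2$ and $w=x_{a_{r_1}}^{(1)}\cdots x_{a_{r_t}}^{(1)}$. For each $k$ let $v_k\in G_2$ denote the root of the component of $\cG_{12}$ containing the edge $h_{r_k}$; note $v_k$ is simultaneously the root of the component of $x_{a_{r_k}}^{(1)}$ and of $x_{b_{r_k}}^{(2)}$, since the edge $h_{r_k}$ joins them. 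Let $\nu\colon G_2\to G_2$ be the homomorphism sending $x_j^{(2)}$ to the root of the component of $x_j^{(2)}$ (well defined, as it preserves $(x_j^{(2)})^2=\id$); applying $\nu$ to $x_{b_{r_1}}^{(2)}\cdots x_{b_{r_t}}^{(2)}=\id$ gives $v_1 v_2\cdots v_t=\id$. On the other hand, grouping $w=\prod_{p=1}^{t/2}\bigl(x_{a_{r_{2p-1}}}^{(1)}x_{a_{r_{2p}}}^{(1)}\bigr)$, applying the homomorphism $\proj{2}\circ\cmap{1,2}$, and using the second pointwise identity gives $\proj{2}(\cmap{1,2}(w))=\prod_{p=1}^{t/2} v_{2p-1}\,v_{2p}^{-1}=v_1 v_2\cdots v_t=\id$, where the last two equalities use that the roots are involutions. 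The closing assertion of \ref{abprop:bid} is the special case $w=\proj{1}(h)$.

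\emph{Main obstacle.} The only step that needs genuine care is \ref{abprop:bid}: one has to recognize the ``collapse to component roots'' homomorphism $\nu$ as the bookkeeping device that matches $\proj{2}\circ\cmap{1,2}$ against the reduction of $\proj{2}(h)$, the crux being that the two endpoints of a clause-edge lie in the same component of $\cG_{12}$ and hence share a root. Well-definedness of $\cmap{1,2}$ (via \Cref{lem:homFromGens}), the telescoping identities, and property \ref{abprop:afix} are all routine.
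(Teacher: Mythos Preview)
Your proof is correct and follows essentially the same route as the paper's. Both constructions pick a player-$2$ representative in each connected component of $\cG_{12}$, send each $x_i^{(1)}$ to a clause-word tracing a path to that representative, and build $\cmap{1,2}$ via \Cref{lem:homFromGens}; the verification of \ref{abprop:bid} in both cases hinges on the observation that the two endpoints of a clause-edge share a component (hence a representative), packaged by you as the homomorphism $\nu$ and by the paper as the map $\rep{1,2}$. The only cosmetic differences are that you use spanning-tree paths rather than arbitrarily chosen paths, and you verify \ref{abprop:bid} by computing directly on a clause word $h_{r_1}\cdots h_{r_t}$, whereas the paper checks the identity $\rep{1,2}(\proj{2}(h))=\proj{2}(\cmap{1,2}(\proj{1}(h)))$ on generators $h_ih_j$ of $H^E$.
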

\begin{proof}
For ease of notation, we prove the result when $\alpha = 1$, $\beta = 2$. The proof is identical for other $\alpha, \beta$. 

Recall the (multi)graph $\cG_{12}$, defined in \Cref{subsec: new def}. $\cG_{12}$ has $2N$ vertices, labeled by the group elements $x_1^{(1)},x_2^{(1)}, ... , x_N^{(1)}$, $x_1^{(2)},x_2^{(2)}, ... , x_N^{(2)}$. We identify vertices in the graph with generators of game group $G$, and abuse notation slightly by referring to the two objects interchangeably. Edges in the graph correspond to clauses; the graph has one edge $(x_i^{(1)},x_j^{(2)})$ for every clause $x_i^{(1)}x_j^{(2)}x_k^{(3)}\sigma^{(l)}$ in $S$. ($k \in [N]$ and $l \in \{0,1\}$ are arbitrary.) Then $\cG_{12}$ is bipartite, with the vertices $x_i^{(1)}$ for $i \in [N]$ forming one half of the graph and $x_j^{(2)}$ for $j \in [N]$ forming the other. See \Cref{fig:sampleG12} for an example. Recall that, sequences of edges in $\cG_{12}$ (and in particular, paths) can be identified with words in $H$.
\begin{figure}
    \centering
    \resizebox{10cm}{!}{
        \begin{tikzpicture}

\vertex[xpos=0, ypos=0, label name = $x_1^{(1)}$]{11}
\vertex[xpos=1, ypos=0, label name = $x_2^{(1)}$]{12}
\vertex[xpos=2, ypos=0, label name = $x_3^{(1)}$]{13}
\vertex[xpos=3, ypos=0, label name = $x_4^{(1)}$]{14}
\vertex[xpos=4, ypos=0, label name = $x_5^{(1)}$]{15}
\vertex[xpos=5, ypos=0, label name = $x_6^{(1)}$]{16}

\vertex[xpos=0, ypos=2, label position = above, label name = $x_1^{(2)}$]{21}
\vertex[xpos=1, ypos=2, label position = above, label name = $x_2^{(2)}$]{22}
\vertex[xpos=2, ypos=2, label position = above, label name = $x_3^{(2)}$]{23}
\vertex[xpos=3, ypos=2, label position = above, label name = $x_4^{(2)}$]{24}
\vertex[xpos=4, ypos=2, label position = above, label name = $x_5^{(2)}$]{25}
\vertex[xpos=5, ypos=2, label position = above, label name = $x_6^{(2)}$]{26}

\edge{11}{21}
\edge{11}{22}
\edge{11}{23}
\edge{12}{22}
\edge{12}{23}

\edge{13}{24}
\edge{14}{24}

\edge{15}{25}
\edge{15}{26}
\edge{16}{25}
\edge{16}{26}

\end{tikzpicture}
        }
    \caption{Sample graph $\cG_{12}$ for a game with alphabet size $N = 6$ and $m= 11$ clauses. The middle component for example corresponds to clauses $x_3^{(1)}x_4^{(2)}x_{k_1}^{(3)}\sigma^{l_1}$ and $x_4^{(1)}x_4^{(2)}x_{k_2}^{(3)}\sigma^{l_2}$, where $k_1, k_2 \in [N]$ and $l_1, l_2 \in \{0,1\}$ are arbitrary.}
    \label{fig:sampleG12}
\end{figure}

Now, consider a word $P\left(x_{i_1}^{(1)}, x_{j_t}^{(2)}\right)$ corresponding to a path in $\cG_{12}$ from a vertex associated with player 1 to a vertex associated with player 2. Note the path has odd length because $\cG_{12}$ is bipartite, so the word $P\left(x_{i_1}^{(1)}, x_{j_t}^{(2)}\right)$ consists of an odd sequence of clauses. All generators in $G_1$, $G_2$ other than $x_{i_1}^{(1)}$ and $x_{j_t}^{(2)}$ are repeated adjacent to each other in the word $P\left(x_{i_1}^{(1)}, x_{j_t}^{(2)}\right)$. 
These generators cancel, and so
\begin{align}
    P\left(x_{i_1}^{(1)}, x_{j_t}^{(2)}\right) 
    &=   x_{i_1}^{(1)}x_{j_1}^{(2)}x_{k_1}^{(3)}\sigma^{l_1}
    x_{i_2}^{(1)}x_{j_1}^{(2)}x_{k_2}^{(3)}\sigma^{l_2}
    x_{i_2}^{(1)}x_{j_2}^{(2)}x_{k_3}^{(3)}\sigma^{l_3}    
    ...
    x_{i_t}^{(1)}x_{j_t}^{(2)}x_{k_{2t+1}}^{(3)}\sigma^{l_{2t+1}}\\
    &=  x_{i_1}^{(1)}x_{j_t}^{(2)}x_{k_1}^{(3)}x_{k_2}^{(3)} ... x_{k_{2t+1}}^{(3)} \sigma^{l_1 + l_2 + ... l_{2t+1}}.    
\end{align}
Hence,
\begin{align}
    \proj{1}\left(P(x_{i_1}^{(1)}, x_{j_t}^{(2)})\right) = x_{i_1}^{(1)} \label{eq:1fix}
\end{align}
and 
\begin{align}
    \proj{2}\left(P(x_{i_1}^{(1)}, x_{j_t}^{(2)})\right) = x_{j_t}^{(2)}. \label{eq:2fix} 
\end{align}
We note that we can construct a path with the above properties between any two vertices $x_{i_1}^{(1)}, x_{j_t}^{(2)}$ in the same connected component of the multigraph $\cG_{12}$.

Next we develop some notation related to these connected components of $\cG_{12}$. Arbitrarily pick a pair of vertices $x^{(1)}_{j_1} \in G_1$ and $x^{(2)}_{j_2} \in G_2$ from each component. Call $x^{(1)}_{j_1}$ and $x^{(2)}_{j_2}$ representative vertices. 
\nx{ $
\repnew{1,2}{\alpha}{\beta} : G_\alpha \rightarrow  G_\beta    \;\;\;\;\; \text{for} \ \alpha,\beta \in \{1,2\}$ 
 }
Then define the maps 
\begin{align}
\repnew{1,2}{\alpha}{\beta} : G_\alpha \rightarrow  G_\beta    \;\;\;\;\; \text{for $\alpha,\beta \in \{1,2\}$ } \label{eq:rep_def}
\end{align} 
to take each generator of $G_\alpha$ (vertices in $\cG_{12}$) to the unique representative vertex in $G_\beta$ in the same component as that generator. Each function  $\repnew{1,2}{\alpha}{\beta}$ maps generators which square to the identity to generators which square to the identity, so can be extended to a homomorphism acting on words in $G_\alpha$. Note that the homomorphism $\repnew{1,2}{\alpha}{\beta}$ constructed in this way necessarily satisfies $\repnew{1,2}{\alpha}{\beta}(1) = 1$ for any $\alpha$, $\beta$.

\begin{figure}[ht]
    \centering
    \resizebox{10cm}{!}{
        \begin{tikzpicture}

\vertex[xpos=0, ypos=0, label name = $x_1^{(1)}$]{11}
\vertex[xpos=1, ypos=0, label name = $x_2^{(1)}$]{12}
\vertex[xpos=2, ypos=0, label name = $x_3^{(1)}$]{13}
\vertex[xpos=3, ypos=0, label name = $x_4^{(1)}$]{14}
\vertex[xpos=4, ypos=0, label name = $x_5^{(1)}$]{15}
\vertex[xpos=5, ypos=0, label name = $x_6^{(1)}$]{16}

\vertex[xpos=0, ypos=2, label position = above, label name = $x_1^{(2)}$, label color = red, color = red]{21}
\vertex[xpos=1, ypos=2, label position = above, label name = $x_2^{(2)}$]{22}
\vertex[xpos=2, ypos=2, label position = above, label name = $x_3^{(2)}$]{23}
\vertex[xpos=3, ypos=2, label position = above, label name = $x_4^{(2)}$, label color = red, color = red]{24}
\vertex[xpos=4, ypos=2, label position = above, label name = $x_5^{(2)}$, label color = red, color = red]{25}
\vertex[xpos=5, ypos=2, label position = above, label name = $x_6^{(2)}$]{26}

\edge{11}{21}
\edge{11}{22}
\edge{11}{23}
\edge{12}{22}
\edge{12}{23}

\edge{13}{24}
\edge{14}{24}

\edge{15}{25}
\edge{15}{26}
\edge{16}{25}
\edge{16}{26}

\end{tikzpicture}
        }
    \captionsetup{singlelinecheck=off}
    \caption[.]{Sample graph repeated from \Cref{fig:sampleG12} with a choice of representative vertices in $G_2$ indicated in red. As an example of our notation, consider the first connected component and note that
    \begin{align*}
    \repnew{1,2}{1}{2}(x_2^{(1)}) = \repnew{1,2}{1}{2}(x_1^{(1)}) = x_1^{(2)}
    \end{align*}
    and that 
    \begin{align*}
        \repnew{1,2}{2}{2}(x_1^{(2)}) = \repnew{1,2}{2}{2}(x_2^{(2)}) = \repnew{1,2}{2}{2}(x_3^{(2)}) = x_1^{(2)}.
    \end{align*}
    } 
    \label{fig:sampleG12repvertex}
\end{figure}

Next for each $x_i^{(1)} \in G_1$ fix a path, denoted 
\begin{align}
    \cpathnew{1,2}{{x^{(1)}_i}}{\repnew{1,2}{1}{2}\left({x^{(1)}_i}\right)}, \label{eq:cpath_def} 
\end{align}
between the vertex $x_i^{(1)}$ and the (connected) representative vertex (see \Cref{fig:sampleG12path}).\footnote{We emphasize that the path $\cpathnew{1,2}{{x^{(1)}_i}}{\repnew{1,2}{1}{2}\left({x^{(1)}_i}\right)}$ can be chosen arbitrarily.} Define the homomorphism $\cmap{1,2} : G_1^E \rightarrow H^E$ by its action on the generators of $G^E$,
\begin{align}
    \cmap{1,2} \left(x_i^{(1)}x_j^{(1)}\right) := \cpathnew{1,2}{{x^{(1)}_i}}{\repnew{1,2}{1}{2}\left({x^{(1)}_i}\right)}\cpathnew{1,2}{{x^{(1)}_j}}{\repnew{1,2}{1}{2}\left({x^{(1)}_j}\right)}^{-1}. \label{eq:cmap_def}
\end{align}
Recall the conflation of notation defined above, so $\cpathnew{1,2}{{x^{(1)}_i}}{\repnew{1,2}{1}{2}\left(x^{(1)}_i\right)}$ defines both a path in the graph $\cG_{12}$ and a word in $H$. The function $\cmap{1,2}$ is a valid homomorphism by \Cref{lem:homFromGens}. 

\begin{figure}[ht]
    \centering
    \resizebox{10cm}{!}{
        \begin{tikzpicture}

\vertex[xpos=0, ypos=0, label name = $x_1^{(1)}$]{11}
\vertex[xpos=1, ypos=0, label name = $x_2^{(1)}$, label color = black, color = blue]{12}
\vertex[xpos=2, ypos=0, label name = $x_3^{(1)}$]{13}
\vertex[xpos=3, ypos=0, label name = $x_4^{(1)}$]{14}
\vertex[xpos=4, ypos=0, label name = $x_5^{(1)}$]{15}
\vertex[xpos=5, ypos=0, label name = $x_6^{(1)}$]{16}

\vertex[xpos=0, ypos=2, label position = above, label name = $x_1^{(2)}$, label color = red, color = red]{21}
\vertex[xpos=1, ypos=2, label position = above, label name = $x_2^{(2)}$]{22}
\vertex[xpos=2, ypos=2, label position = above, label name = $x_3^{(2)}$]{23}
\vertex[xpos=3, ypos=2, label position = above, label name = $x_4^{(2)}$, label color = red, color = red]{24}
\vertex[xpos=4, ypos=2, label position = above, label name = $x_5^{(2)}$, label color = red, color = red]{25}
\vertex[xpos=5, ypos=2, label position = above, label name = $x_6^{(2)}$]{26}

\edge[color = blue]{11}{21}
\edge[color = blue]{11}{22}
\edge{11}{23}
\edge[color = blue]{12}{22}
\edge{12}{23}

\edge{13}{24}
\edge{14}{24}

\edge{15}{25}
\edge{15}{26}
\edge{16}{25}
\edge{16}{26}

\end{tikzpicture}
        }
    \captionsetup{singlelinecheck=off}
    \caption[.]{
    Sample graph with representative vertices indicated in red and the path $\cpathnew{1,2}{{x^{(1)}_2}}{\repnew{1,2}{1}{2}\left(x^{(1)}_2\right)} = \cpathnew{1,2}{{x^{(1)}_2}}{x_1^{(2)}}$ indicated in blue. This path corresponds to a word 
    \begin{align*}
        &\left(x_2^{(1)}x_2^{(2)}x_{k_1}^{(3)}\sigma^{l_1}\right)\left(x_1^{(1)}x_2^{(2)}x_{k_2}^{(3)}\sigma^{l_2}\right)\left(x_1^{(1)}x_1^{(2)}x_{k_3}^{(3)}\sigma^{l_3}\right)
        =
        x_2^{(1)}x_1^{(2)}x_{k_1}^{(3)}x_{k_2}^{(3)}x_{k_3}^{(3)} \sigma^{l_1}\sigma^{l_2}\sigma^{l_3}
        , 
    \end{align*}
    where $k_1, k_2, k_3 \in [N]$ and $l_1, l_2, l_3 \in \{0,1\}$ are arbitrary. 
    }
    \label{fig:sampleG12path}
\end{figure}
It remains to show $\cmap{1,2}$ satisfies Properties \labelcref{abprop:bid,abprop:afix}. Property~\labelcref{abprop:afix} requires that 
$
    \proj{1}(\cmap{1,2}(w)) = w
$
for all $w \in G_1^{E}$. To prove this property we show $\cmap{1,2}$ acts as desired on the generators of $G_1^E$.
This follows from \Cref{eq:1fix}, which gives
\begin{align}
    \proj{1}\left(\cmap{1,2}(x_i^{(1)}x_j^{(1)})\right) =  x_i^{(1)}(x_j^{(1)})^{-1} = x_i^{(1)}x_j^{(1)}.
\end{align}

Property \labelcref{abprop:bid} requires that 
$
    \proj{2}\left( \cmap{1,2}(w) \right) = \id
$
whenever there exists an $h\in H^E$ satisfying $\proj{1}(h) = w$ and  $\proj{2}(h) = \id$. To show this we first show that
\begin{align}
    \proj{2}(\cmap{1,2}(\proj{1}(h))) = \repnew{1,2}{2}{2}(\proj{2}(h)) = \repnew{1,2}{1}{2}(\proj{1}(h)) \label{eq:proj2r12}
\end{align}
for any $h \in H^{E}$ (we only need the first equality to prove Property~\labelcref{abprop:bid}, but the second equality is an easy consequence and will be useful to us later). The equality can be verified by checking the action of the two maps on generators $h_ih_j$ of $H^{E}$:
\begin{align}
    \proj{2}(\cmap{1,2}(\proj{1}(h_i h_j))) &=  \proj{2}(\cmap{1,2}(\proj{1}(x_{a_i}^{(1)}x_{b_i}^{(2)}x_{c_i}^{(3)}x_{a_j}^{(1)}x_{b_j}^{(2)}x_{c_j}^{(3)} \sigma^{s_i + s_j} ))) \\
    &= \proj{2}(\cmap{1,2}(x_{a_i}^{(1)}x_{a_j}^{(1)} )) \\
    &= \proj{2}\left(\cpathnew{1,2}{{x^{(1)}_{a_i}}}{\repnew{1,2}{1}{2}(x^{(1)}_{a_i})}\cpathnew{1,2}{{x^{(1)}_{a_j}}}{\repnew{1,2}{1}{2}({x^{(1)}_{a_j}})}^{-1}\right) \label{eq:preprephi12}\\
    &= \repnew{1,2}{1}{2}\left(x_{a_i}^{(1)}\right)\repnew{1,2}{1}{2}\left(x_{a_j}^{(1)}\right) \label{eq:prepphi12}\\
    &= \repnew{1,2}{2}{2}\left(x_{b_i}^{(2)}\right)\repnew{1,2}{2}{2}\left(x_{b_j}^{(2)}\right) \label{eq:keyphi12} \\
    &= \repnew{1,2}{2}{2}(\proj{2}(h_i h_j))
\end{align}
Line~\labelcref{eq:preprephi12} follows by definition of $\cmap{1,2}$ while Line~\labelcref{eq:prepphi12} follows from \Cref{eq:2fix}. The key observation comes in line~\labelcref{eq:keyphi12}; because $a_i$ and $b_i$ are both in the clause $h_i$, they are in the same connected component in the graph $\cG_{12}$. Then they have the same representative vertex and 
\begin{align}
    \repnew{1,2}{1}{2}\left(x_{a_i}^{(1)}\right) = \repnew{1,2}{2}{2}\left(x_{b_i}^{(2)}\right). 
\end{align}
Line \labelcref{eq:keyphi12} follows. This argument proves both equalities in \Cref{eq:proj2r12}.

Now any $h \in H$ satisfying $\proj{2}(h) = \id$ must have even length, so $h \in H^{E}$ and we have 
\begin{align}
    \proj{2}(\cmap{1,2}(\proj{1}(h))) &= \repnew{1,2}{2}{2}(\proj{2}(h)) = \repnew{1,2}{2}{2}(\id) = \id.
\end{align}
Using the fact that $\repnew{1,2}{2}{2}$ is a homomorphism in the last two equalities. This proves Property \labelcref{abprop:bid}, and completes the proof. 
\end{proof}

The next lemma proves that right inverses $\cmap{\alpha}$ and $\cmap{\alpha,\beta}$ map within the $K$ subgroup. That is, they map words in $K \cap G_\alpha^E$ to words in $K \cap H^E$. 

\begin{lem} \label{lem:Kfixed}
Let $v \in K \cap G_\alpha^{E}$ be arbitrary. Then 
\begin{align}
    \cmap{\alpha}(v) \in K \cap H^E
\end{align}
and 
\begin{align}
   \cmap{\alpha,\beta}(v) \in K \cap H^E
\end{align}
for all $\beta \neq \alpha$. 
\end{lem}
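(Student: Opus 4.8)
The plan is to exploit two facts: $K$ is the commutator subgroup $[G^E,G^E]$ of $G^E$, and both $\cmap{\alpha}$ and $\cmap{\alpha,\beta}$ restrict to \emph{homomorphisms} from the even player subgroup $G_\alpha^E$ into $H^E$. Since any homomorphism carries commutators to commutators, and $[H^E,H^E]\subseteq[G^E,G^E]=K$ because $H^E\leq G^E$, it will suffice to prove the purely algebraic identity
\[
  K\cap G_\alpha^E=[G_\alpha^E,G_\alpha^E],
\]
the inclusion $\supseteq$ being immediate from $G_\alpha^E\leq G^E$. In words: an even word in the player-$\alpha$ generators that lands in $K$ is already a product of commutators of \emph{player-$\alpha$} even words.

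First I would record that $K=[G^E,G^E]$. Modding $G^E$ out by $K$ kills every generator commutator $\com{x_i^{(\alpha)}x_j^{(\alpha)}}{x_k^{(\alpha)}x_l^{(\alpha)}}$; different-player generators already commute and $\sigma$ is central, so $G^E/K$ is abelian, giving $[G^E,G^E]\subseteq K$; conversely each generator of $K$ is a $G^E$-conjugate of a commutator, hence (using that $g\com{a}{b}g^{-1}=\com{gag^{-1}}{gbg^{-1}}$) lies in $[G^E,G^E]$. Consequently every $v\in K$ can be written $v=\prod_t\com{u_t}{w_t}$ with $u_t,w_t\in G^E$. Now recall that the projector $\proj{\alpha}:G\to G_\alpha$ is a homomorphism that is the identity on $G_\alpha$ and sends any $w\in G^E$ (which contains an even number of player-$\alpha$ generators) to an even word; thus $\proj{\alpha}$ restricts to a retraction $G^E\to G_\alpha^E$. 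Applying $\proj{\alpha}$ to the product and using $v\in G_\alpha^E$ yields $v=\proj{\alpha}(v)=\prod_t\com{\proj{\alpha}(u_t)}{\proj{\alpha}(w_t)}$ with every $\proj{\alpha}(u_t),\proj{\alpha}(w_t)\in G_\alpha^E$, i.e. $v\in[G_\alpha^E,G_\alpha^E]$, establishing the displayed identity.

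To finish, take $v\in K\cap G_\alpha^E$ and write $v=\prod_t\com{a_t}{b_t}$ with $a_t,b_t\in G_\alpha^E$ as above. Applying the homomorphism $\cmap{\alpha}$ (resp.\ $\cmap{\alpha,\beta}$), which maps $G_\alpha^E$ into $H^E$, gives $\cmap{\alpha}(v)=\prod_t\com{\cmap{\alpha}(a_t)}{\cmap{\alpha}(b_t)}$, a product of commutators of elements of $H^E$; hence $\cmap{\alpha}(v)\in[H^E,H^E]\subseteq H^E\cap K$, and identically for $\cmap{\alpha,\beta}(v)$. I do not anticipate a substantive obstacle here: the only points demanding care are bookkeeping ones — confirming that both right inverses are genuine homomorphisms (provided by \Cref{lem:homFromGens} and \Cref{lem:making_homomorphisms}) and that they land in $H^E$ when fed even words, and that $\proj{\alpha}$ indeed restricts to a retraction $G^E\to G_\alpha^E$ so that the commutator-subgroup identity goes through.
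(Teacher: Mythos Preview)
Your argument is correct and actually a bit cleaner than the paper's. The paper proceeds concretely: it writes $v\in K\cap G_1^E$ as $\prod_i u_i\com{x_{a_{i_1}}^{(1)}x_{a_{i_2}}^{(1)}}{x_{a_{i_3}}^{(1)}x_{a_{i_4}}^{(1)}}u_i^{-1}$ with $u_i\in G_1$, pushes $\cmap{1}$ (resp.\ $\cmap{1,2}$) through, and then verifies by explicit computation (in the $\cmap{1}$ case) or by invoking \Cref{lem:even_com_in_K} (in the $\cmap{1,2}$ case) that the image commutator lies in $K$. Your route is more abstract: you identify $K=[G^E,G^E]$ once and for all, use the retraction $\proj{\alpha}:G^E\to G_\alpha^E$ to get $K\cap G_\alpha^E=[G_\alpha^E,G_\alpha^E]$, and then simply use that any homomorphism into $H^E\leq G^E$ sends commutators into $[G^E,G^E]=K$. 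This buys you uniformity (no case split between $\cmap{\alpha}$ and $\cmap{\alpha,\beta}$) and in effect subsumes \Cref{lem:even_com_in_K}; the paper's computation, on the other hand, makes the mechanism visible at the level of clauses. One small remark: $\cmap{\alpha}$ is a homomorphism on all of $G_\alpha$ (checked directly via $\cmap{\alpha}(x_i^{(\alpha)})^2=1$), not via \Cref{lem:homFromGens}; only $\cmap{\alpha,\beta}$ is built with that lemma. This does not affect your argument.
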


\begin{proof}
For notational convenience we prove the result when $\alpha = 1$, $\beta = 2$. 

The proof is mechanical: any word $v \in K \cap G_1^{E}$ can be written
\begin{align}
    v = \prod_i u_i \com{x_{a_{i_1}}^{(1)}x_{a_{i_2}}^{(1)}}{x_{a_{i_3}}^{(1)}x_{a_{i_4}}^{(1)}} u_i^{-1}.
\end{align}
with $u_i \in G_1$ arbitrary. 
We pick labels $b_{i_1}, ... ,b_{i_4}, c_{i_1}, ... ,c_{i_4} \in [N]$ and  $s_{i_1}, ... ,s_{i_4} \in \{0,1\}$ so that 
\begin{align}
    \cmap{1}\left(x_{a_{i_j}}^{(1)}\right) = x_{a_{i_j}}^{(1)}x_{b_{i_j}}^{(2)}x_{c_{i_j}}^{(3)}\sigma^{s_{i_j}}
\end{align}
for all $x_{a_{i_j}}$. Then
\begin{align}
    \cmap{1}(v) &= \prod_i \cmap{1}\left(u_i\right) \com{\cmap{1}(x_{a_{i_1}}^{(1)})\cmap{1}(x_{a_{i_2}}^{(1)})}{\cmap{1}(x_{a_{i_3}}^{(1)})\cmap{1}(x_{a_{i_4}}^{(1)})} \cmap{1}\left(u_i\right)^{-1} \\
    &= \prod_i \cmap{1}\left(u_i\right) \com{x_{a_{i_1}}^{(1)}x_{a_{i_2}}^{(1)}}{x_{a_{i_3}}^{(1)}x_{a_{i_4}}^{(1)}} 
    \com{x_{b_{i_1}}^{(2)}x_{b_{i_2}}^{(2)}}{x_{b_{i_3}}^{(2)}x_{b_{i_4}}^{(2)}} 
   \com{x_{c_{i_1}}^{(3)}x_{c_{i_2}}^{(3)}}{x_{c_{i_3}}^{(3)}x_{c_{i_4}}^{(3)}} \label{eq:clauses_com_to_gen}
    \cmap{1}\left(u_i\right)^{-1}\in K.
\end{align}
noting that any factors of $\sigma$ cancel in the commutator.

A similar argument shows $\cmap{1,2}(v) \in K$. To start assume $ v \in K \cap G^E$ and write
\begin{align}
    \cmap{1,2}(v) &= \prod_i \cmap{1,2}\left(u_i\right) \com{\cmap{1,2}(x_{i_1}^{(1)}x_{i_2}^{(1)})}{\cmap{1,2}(x_{i_3}^{(1)}x_{i_4}^{(1)})} \cmap{1,2}\left(u_i\right)^{-1}  \\
    &=  \prod_i \cmap{1,2}\left(u_i\right)\left(\prod_{\alpha=1}^3 \com{\proj{\alpha}\left(\cmap{1,2}(x_{i_1}^{(1)}x_{i_2}^{(1)})\right)}{\proj{\alpha}\left(\cmap{1,2}(x_{i_3}^{(1)}x_{i_4}^{(1)})\right)} \right) \cmap{1,2}\left(u_i\right)^{-1} 
\end{align}
Then note the words $\proj{\alpha}\left(\cmap{1,2}(x_{i_1}^{(1)}x_{i_2}^{(1)})\right)$ and $\proj{\alpha}\left(\cmap{1,2}(x_{i_3}^{(1)}x_{i_4}^{(1)})\right)$ are in $G_\alpha^E$ for each $\alpha$. Repeatedly applying the commutator identities
\begin{align}
    \com{x}{yz} &= \com{x}{y}y^{-1}\com{x}{z}y \\
\text{ and } 
\com{xy}{z} &= y^{-1}\com{x}{z}y\com{y}{z}
\end{align}
shows those words are in $K$. The full argument is given in an appendix (\Cref{lem:even_com_in_K}).
\end{proof}
An important consequence of \Cref{lem:Kfixed} is the following corollary. 
\begin{cor} \label{cor:rightInvId}
Let $v \in K \cap G_\alpha^{E}$ be arbitrary and $\alpha \neq \beta$. Then 
\begin{align}
    \proj{\sigma}(\cmap{\alpha}(v)) = \proj{\sigma}(\cmap{\alpha,\beta}(v)) = \id.
\end{align}
\end{cor}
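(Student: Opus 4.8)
The plan is to derive the statement directly from \Cref{lem:Kfixed} together with the elementary fact that the projection $\proj{\sigma}$ kills every element of $K$. First I would record that fact: $\proj{\sigma} : G \rightarrow \{\sigma, \id\}$ is a homomorphism onto a group of order two, and $K$ is by definition the normal closure in $G^E$ of the commutators $\com{x_i^{(\alpha)}x_j^{(\alpha)}}{x_k^{(\alpha)}x_l^{(\alpha)}}$. Each such generating commutator is a word in the $x_\bullet^{(\bullet)}$ only, hence is sent to $\id$ by $\proj{\sigma}$; and for any $g \in G^E$ and any such commutator $c$ we have $\proj{\sigma}(gcg^{-1}) = \proj{\sigma}(g)\,\proj{\sigma}(c)\,\proj{\sigma}(g)^{-1} = \proj{\sigma}(g)\proj{\sigma}(g)^{-1} = \id$. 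Since every element of $K$ is a product of such conjugates and $\proj{\sigma}$ is a homomorphism, it follows that $\proj{\sigma}(w) = \id$ for all $w \in K$; that is, $K \leq \ker \proj{\sigma}$.

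With this in hand the corollary is immediate. Let $v \in K \cap G_\alpha^{E}$ and $\beta \neq \alpha$. \Cref{lem:Kfixed} gives $\cmap{\alpha}(v) \in K \cap H^{E}$ and $\cmap{\alpha,\beta}(v) \in K \cap H^{E}$; in particular both lie in $K$. Applying $K \leq \ker \proj{\sigma}$ then yields $\proj{\sigma}(\cmap{\alpha}(v)) = \proj{\sigma}(\cmap{\alpha,\beta}(v)) = \id$, which is exactly the assertion.

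There is no substantive obstacle here: the corollary is a one-line consequence of \Cref{lem:Kfixed} once one observes that $K$ is contained in the kernel of $\proj{\sigma}$, and that observation is just the short homomorphism computation above (the generators of $K$ contain no $\sigma$, and conjugation cannot introduce one since $\proj{\sigma}$ factors through an abelian quotient). The only care needed is to make sure the ``product of conjugates of generating commutators'' description of elements of the normal closure $K$ is the one being used, so that the homomorphism argument applies verbatim; I would state that description explicitly before running the computation.
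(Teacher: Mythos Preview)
Your proposal is correct and matches the paper's approach exactly: the paper cites \Cref{lem:Kfixed} to place $\cmap{\alpha}(v)$ and $\cmap{\alpha,\beta}(v)$ in $K$, then invokes the separately stated \Cref{lem:noSigmaInK} (whose proof is the same homomorphism computation you wrote out inline) to conclude $\proj{\sigma}$ vanishes on them. The only difference is cosmetic---you prove $K \leq \ker\proj{\sigma}$ on the spot rather than citing it as a lemma.
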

\begin{proof}
First observe that, because $\sigma$ is central and not contained in any generators of $K$, we have $\proj{\sigma}(k) = 1$ for all $k \in K$ (a detailed proof of this fact is given in \Cref{lem:noSigmaInK}). Then, by \Cref{lem:Kfixed} $\cmap{\alpha}(v) \in K$. Hence
\begin{align}
    \proj{\sigma}\left(\cmap{\alpha}\left(v \right) \right) = 1
\end{align}
The proof for $\proj{\sigma}\left(\cmap{\alpha, \beta}\left( v \right) \right)$ is identical. 
\end{proof}

\subsubsection{Clearing the \texorpdfstring{$G_1$}{G1} and \texorpdfstring{$G_2$}{G2} subgroups} \label{subesc:PreprocessingModK}
The next lemma makes critical use of right inverses $\cmap{\alpha}$ and $\cmap{\alpha,\beta}$. It should be be thought of as a ``pre-processing" step, that puts words in a convenient form to prove \Cref{thm:K_modding}. 

\begin{lem} \label{lem:pre-process}
If there exists a word $w \in H^E$ satisfying $w = \sigma \pmod{K}$, then there exists a word $w'$ in $H^E$ satisfying: 
\begin{enumerate}
    \item $w' = \sigma \pmod{K}$ \label{prepro:sigma}
    \item $\proj{1}(w') = \proj{2}(w') = \id$. \label{prepro:clear}
\end{enumerate}
\end{lem}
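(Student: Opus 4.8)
My plan follows the two-step ``pre-processing'' scheme hinted at in the proof outline of \Cref{lemma:K_modding}: starting from a word $w\in H^{E}$ with $w\equiv\sigma\pmod{K}$, I will first right-multiply $w$ by an element of $K\cap H^{E}$ that kills $\proj{1}(w)$, and then right-multiply the result by a second element of $K\cap H^{E}$ that kills its $G_2$-part \emph{without disturbing} the (now trivial) $G_1$-part. Since modifying by elements of $K\cap H^{E}$ changes neither membership in $H^{E}$ nor the coset mod $K$, the two conclusions of the lemma will follow once the projections $\proj{1},\proj{2}$ are controlled. A fact I will use repeatedly is that $\proj{1}$ and $\proj{2}$ carry $K$ into $K$ (each generating commutator of $K$ maps under $\proj{\gamma}$ either to itself or to $\id$, and $\proj{\gamma}(G^{E})\subseteq G^{E}$, so the whole normal closure is preserved); hence both projectors descend to the abelian quotient $G^{E}/K$.

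\textbf{Step 1 (clear $G_1$).} Put $v_1:=\proj{1}(w)\in G_1^{E}$. Applying the descended projector $\proj{1}$ to $w\equiv\sigma\pmod{K}$ and using $\proj{1}(\sigma)=\id$ shows $v_1\in K\cap G_1^{E}$. By \Cref{lem:Kfixed} we have $\cmap{1}(v_1)\in K\cap H^{E}$, and $\proj{1}(\cmap{1}(v_1))=v_1$ because $\cmap{1}$ is a right inverse of $\proj{1}$. Take $w^{(1)}:=\cmap{1}(v_1)^{-1}w\in H^{E}$; then $\proj{1}(w^{(1)})=v_1^{-1}v_1=\id$, and since $\cmap{1}(v_1)\in K$ we still have $w^{(1)}\equiv\sigma\pmod{K}$. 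Arguing for $\proj{2}$ just as we did for $v_1$ gives $v_2':=\proj{2}(w^{(1)})\in K\cap G_2^{E}$.

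\textbf{Step 2 (clear $G_2$, keep $G_1$ trivial).} The crucial observation, and the one step I expect to need real care, is that $w^{(1)}$ is \emph{itself} the witness required to invoke Property~\labelcref{abprop:bid} of the identity-preserving right inverse $\cmap{2,1}$ applied to $v_2'$: indeed $w^{(1)}\in H^{E}$, $\proj{1}(w^{(1)})=\id$, and $\proj{2}(w^{(1)})=v_2'$. Hence $\proj{1}(\cmap{2,1}(v_2'))=\id$, while Property~\labelcref{abprop:afix} gives $\proj{2}(\cmap{2,1}(v_2'))=v_2'$, and $\cmap{2,1}(v_2')\in K\cap H^{E}$ by \Cref{lem:Kfixed}. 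Setting $w':=\cmap{2,1}(v_2')^{-1}w^{(1)}\in H^{E}$ then yields $\proj{1}(w')=\proj{2}(w')=\id$ and, since $\cmap{2,1}(v_2')\in K$, also $w'\equiv\sigma\pmod{K}$, which is exactly the claim. Everything other than recognizing the witness in Step 2 is routine bookkeeping already packaged in \Cref{lem:making_homomorphisms}, \Cref{lem:Kfixed}, and \Cref{cor:rightInvId} (the latter ensuring no stray factors of $\sigma$ are introduced by the right inverses).
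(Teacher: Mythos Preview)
Your proposal is correct and follows essentially the same two-step approach as the paper: first clear the $G_1$-part via $\cmap{1}$, then clear the $G_2$-part via $\cmap{2,1}$, using Property~\labelcref{abprop:bid} with the intermediate word as witness and \Cref{lem:Kfixed} to stay in $K\cap H^{E}$. The only cosmetic differences are that the paper right-multiplies (using $\proj{1}(w^{-1})$) where you left-multiply (using $\proj{1}(w)$), and that your plan says ``right-multiply'' while your execution actually left-multiplies---neither affects correctness.
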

\begin{proof}

We construct $w'$ by right multiplying $w$ by $\cmap{1}\left(\proj{1}\left(w^{-1}\right)\right)$ to clear the $G_1$ subgroup elements, then multiplying by $\cmap{2,1}\left(\proj{2}\left(\left(w \cmap{1}\left(\proj{1}\left(w^{-1}\right)\right)\right)^{-1}\right)\right)$ to clear the $G_2$ subgroup. 
Formally:
\begin{align}
    w' = w \cmap{1}\left(\proj{1}\left(w^{-1}\right)\right) \cdot \cmap{2,1}\left(\proj{2}\left(\left(w \cmap{1}\left(\proj{1}\left(w^{-1}\right)\right)\right)^{-1}\right)\right).
\end{align}

First, we show that $\cmap{2,1}\left(\proj{2}\left(\left(w \cmap{1}\left(\proj{1}\left(w^{-1}\right)\right)\right)^{-1}\right)\right)$ is well defied, and that $w' = \sigma \pmod{K}$. By assumption, $w = \sigma \pmod{K}$. Equivalently, $w = k \sigma$, for some $k \in K$. Then
\begin{align}
    \proj{1}(w) = \proj{1}(k \sigma) = \proj{1}(k) \in K \cap G_1^E,
\end{align}
since $\proj{1}$ maps words in $K$ to words inside $K$ and words in $H^E$ to words in $G_1^{E}$. The map $\proj{1}$ is a homomorphism, so we also have $\proj{1}(w^{-1}) \in K \cap G_1^E$. Then, by \Cref{lem:Kfixed}, 
\begin{align}
\cmap{1}\left(\proj{1}(w^{-1})\right) \in K \cap H^E. \label{eq:halway:proj_1_in_K}
\end{align}
A similar argument shows $\proj{2}(w) \in K \cap G_2^E$. From this, and equation \ref{eq:halway:proj_1_in_K} it follows that 
\begin{align}
    \proj{2}\left(\left(w \cmap{1}\left(\proj{1}\left(w^{-1}\right)\right)\right)^{-1}\right) \in  K \cap G_2^E. 
\end{align} 
Then, by \Cref{lem:Kfixed}
\begin{align}
    \cmap{2,1}\left(\proj{2}\left(\left(w \cmap{1}\left(\proj{1}\left(w^{-1}\right)\right)\right)^{-1}\right)\right) \in K \cap H^E. 
\end{align}
Putting this all together gives
\begin{align}
    w \cdot \cmap{1}\left(\proj{1}\left(w^{-1}\right)\right) \cdot \cmap{2,1}\left(\proj{2}\left(\left(w \cmap{1}\left(\proj{1}\left(w^{-1}\right)\right)\right)^{-1}\right)\right) &= w \cdot \id \cdot \id \pmod{K} \\
    &= \sigma \pmod{K}, \label{eq:projsigmaword}
\end{align}
as desired.

To show $\proj{1}(w') = \proj{2}(w') = \id$, set $h = w  \cmap{1}\left(\proj{1}\left(w^{-1}\right)\right)$ and note
\begin{align}
    \proj{1}\left(h\right) &= \proj{1}\left(w \right) \; \proj{1}\left(\cmap{1}\left(\proj{1}\left(w^{-1}\right)\right)\right) \\
    &= \proj{1}(w) \; \proj{1}\left(w^{-1}\right) \\
    &= \id \label{eq:proj1h}.
\end{align}
 Also note $w \in H^E$ by assumption and $\cmap{1}\left(\proj{1}\left(w^{-1}\right)\right) \in H^E$ because $\Im(\cmap{1}) \in H^E$. Then $h \in H^E$ and, by Property~\labelcref{abprop:bid} of the map $\cmap{2,1}$ and \Cref{eq:proj1h}, we have 
\begin{align}
    \proj{1}\left(\cmap{2,1}\left(\proj{2}\left(h\right)\right)\right) = \id.
\end{align}
The maps $\proj{\alpha}, \cmap{\alpha},$ and $\cmap{\alpha,\beta}$ are all homomorphisms, so we also have
\begin{align}
    \proj{1}\left(\cmap{2,1}\left(\proj{2}\left(h^{-1}\right)\right)\right) = \id. \label{eq:proj1cmap21proj2hinv}
\end{align}
Then we put this all together to see 
\begin{align}
    \proj{1}(w') &= \proj{1}\left( h \; \cmap{2,1}\left(\proj{2}\left(h^{-1}\right)  \right)  \right) \\
    &= \proj{1}\left(h\right) \; \proj{1}\left( \cmap{2,1}\left(\proj{2}\left(h^{-1}\right)  \right)\right)  \\
    &= \id, \label{eq:proj1word}
\end{align}
using \Cref{eq:proj1h,eq:proj1cmap21proj2hinv} in the last line. 

Additionally, property~\labelcref{abprop:afix} of the map $\cmap{2,1}$ gives
\begin{align}
    \proj{2}(w') &= \proj{2} \left( h \; \cmap{2,1}\left(\proj{2}\left(h^{-1}\right)\right) \right) \\
    &= \proj{2} \left(h \right) \; \proj{2} \left( \cmap{2,1}\left(\proj{2}\left(h^{-1}\right)\right) \right) \\
    &= \proj{2} \left( h \right) \; \proj{2}\left(h^{-1}\right) \\
    &= \id. \label{eq:proj2word}
\end{align}
\Cref{eq:projsigmaword,eq:proj1word,eq:proj2word} complete the proof. 
\end{proof}

\subsubsection{Gadgets for processing words in \texorpdfstring{$G_3$}{G3}} \label{subsec:WordPRocessGadget}

 We are now almost ready to prove \Cref{thm:K_modding}. Before we do this, we introduce two final homomorphisms $f_1, f_2 : G_3^{E} \rightarrow H^{E}$.\footnote{We could define analogues of $f$ mapping from any $G_\alpha$. We only need the maps from $G_3$, so we give the more specific construction for notational simplicity.} As in \Cref{lem:making_homomorphisms} we introduce the properties of these homomorphisms in the following lemma, then define the homomorphisms in the lemma's proof.

\nx{ $f_\alpha$ for $\alpha \in \{1,2\}$  
homomorphism$G_3^{E} \rightarrow H^{E}$ and satisfy:
\\
    B1. \ If $\proj{\alpha}\left(\cmap{3,\alpha}(v)\right) = \id$, then $\proj{\alpha}\left(f_\alpha(v)\right) = \id$ \\
    B2. \  $\proj{\beta}\left(f_\alpha(v)\right) = \proj{\beta}\left(\cmap{3,\alpha}(v) \right)$, provided $\beta \in \{1,2\}$ and $\beta \neq \alpha$ 
    \\
    B3. \  $\proj{\alpha} \left( \cmap{3,\alpha} \left( \proj{3} \left( f_\alpha (v) \right) \right) \right) = 1$ 
    \\
    B4. \  $\proj{\beta} \left( \cmap{3,\beta} \left( \proj{3} \left( f_\alpha (v) \right) \right) \right) = \proj{\beta} \left( \cmap{3,\beta} (v) ) \right)$, provided $\beta \in \{1,2\}$ and $\beta \neq \alpha$ 
\\
for any $v \in G_3^{E}$. 
}

\begin{lem} \label{lem:gadgetmap}
There exist homomorphisms $f_\alpha$ for $\alpha \in \{1,2\}$  which map $G_3^{E} \rightarrow H^{E}$ and satisfy:
\begin{enumerate}[label=B\arabic*., ref=B\arabic*]
    \item If $\proj{\alpha}\left(\cmap{3,\alpha}(v)\right) = \id$, then $\proj{\alpha}\left(f_\alpha(v)\right) = \id$ \label{prop:mapid}
    \item $\proj{\beta}\left(f_\alpha(v)\right) = \proj{\beta}\left(\cmap{3,\alpha}(v) \right)$, provided $\beta \in \{1,2\}$ and $\beta \neq \alpha$ \label{prop:mapmatch}
    \item $\proj{\alpha} \left( \cmap{3,\alpha} \left( \proj{3} \left( f_\alpha (v) \right) \right) \right) = 1$ \label{prop:pmapid}   
    \item $\proj{\beta} \left( \cmap{3,\beta} \left( \proj{3} \left( f_\alpha (v) \right) \right) \right) = \proj{\beta} \left( \cmap{3,\beta} (v) ) \right)$, provided $\beta \in \{1,2\}$ and $\beta \neq \alpha$ \label{prop:pmapmatch}
\end{enumerate}
for any $v \in G_3^{E}$. 
\end{lem}

\begin{rmk}
 Properties \labelcref{prop:mapid,prop:mapmatch,prop:pmapmatch} are all satisfied if the homomorphism $f_\alpha$ is replaced by~$\cmap{3,\alpha}$. 
Property \labelcref{prop:pmapid} is not, but it is satisfied by $f_\alpha$ in the special case that the graph $\cG_{3\alpha}$ is connected. 
Thus, the homomorphism $f_\alpha$ can be thought of as producing words similar to those produced by the map $\cmap{3,\alpha}$, with the additional feature that they also behave as if the graph $\cG_{3\alpha}$ is connected and hence satisfy Property~\labelcref{prop:pmapid}. A motivated reader can also check that (with appropriately chosen conventions) the construction of $f_\alpha$ given later satisfies $f_\alpha = \cmap{3,\alpha}$ when $\cG_{3\alpha}$ is connected.
\end{rmk}

\Cref{lem:gadgetmap} is the last major result needed to prove \Cref{thm:K_modding}. Before proving the Lemma we build intuition for it's significance by sketching how Properties~\labelcref{prop:mapid,prop:mapmatch,prop:pmapid,prop:pmapmatch} are used in the proof of \Cref{thm:K_modding}.

\begin{proof}[Proof Outline (Part 2) of \Cref{thm:K_modding}]

Recall that \Cref{lem:pre-process} (as foreshadowed in Part1 of this proof outline) shows that existence of a word $u \in H^E$ with $u = \sigma \pmod{K}$ implies existence of a word $u'\sigma \in H^E$ with $u' \in G_3^E \cap K$. We now show how \Cref{lem:gadgetmap} lets us argue that $u'\sigma \in H^E$ implies that $\sigma \in H^E$. For simplicity,\footnote{This can be compared with the general case given in \Cref{eq:full_proof_q}.} we consider the case where $u'$ has the very basic form $u' = [v_1,v_2]$ with $v_1$ and $v_2$ in $G_3^E$. However, the intuition given here applies more generally. 

Properties~\labelcref{prop:mapid,prop:mapmatch} are used to reason about words $w \in G_3^E \cap H^E$. 
They show that (up to a factor of $\sigma$) existence of a word $w \in G_3^{E} \cap H^E$ implies that the words $\proj{3}(f_\alpha(w))$ are in $G_3^{E} \cap H^E$ for $\alpha \in \{1,2\}$.\footnote{For any $\alpha \in \{1,2\}$, we have $f_\alpha(w) \in H^E$ by definition. The content of this claim, then, is that $\proj{3}(f_\alpha(w))$ is \textit{also} in $H^E$, i.e. that $H^E$ also contains (up to a possible $\sigma$) a word with just the the $G_3^E$ portion of the word $f_\alpha(w)$.} 
To understand why, note for any $w \in G_3^{E} \cap H^E$ we have $\proj{3}(w) = w$ and $\proj{2}(w) = \id$, so $\proj{2}(\cmap{3,2}(w)) = 1$ by Property~\labelcref{abprop:bid}. 
Then $\proj{2}(\gadget{2}(w)) = \proj{2}(\cmap{3,2}(w)) = 1$ by Property \labelcref{prop:mapid} and $\proj{1}(\gadget{2}(w)) = \proj{1}(\cmap{3,2}(w))$ by Property \labelcref{prop:mapmatch}. 
Thus, 
\begin{align}
    \proj{1}((\cmap{3,2}(w))^{-1}\gadget{2}(w)) = \proj{2}((\cmap{3,2}(w))^{-1}\gadget{2}(w)) = \id. \label{eq:w'_proof_sketch}
\end{align}
Now, define $w' = w(\cmap{3,2}(w))^{-1}\gadget{2}(w)$. Since $\gadget{2}$ and $\cmap{3,2}$ both map into $H^E$ and $w \in H^E$, we have $w' \in H^E$.  We have  $\proj{1}(w') = \proj{2}(w') = \id$ by definition of $w$ and \Cref{eq:w'_proof_sketch}. Furthermore,
\begin{align}
    \proj{3}(w') &= \proj{3}(w) \proj{3}\left(\cmap{3,2}(w)^{-1}\right)\proj{3}(\gadget{2}(w)) \\
    &= w w^{-1} \proj{3}(\gadget{2}(w)) = \proj{3}(\gadget{2}(w))
\end{align}
by Property \labelcref{abprop:afix}.
We conclude that, up to a potential factor of $\sigma$, 
\begin{align}
    w \in G_3^E \cap H^E \implies  \proj{3}(\gadget{2}(w)) = w' \in G_3^E \cap H^E. \label{eq:w'_existence_proof_sketch}
\end{align} 
A similar argument applies to the homomorphism $f_1$ and proves 
\begin{align}
    w \in G_3^E \cap H^E \implies  \proj{3}(\gadget{1}(w)) \in G_3^E \cap H^E. \label{eq:w'_alt_existence_proof_sketch}
\end{align}

Property \labelcref{prop:pmapid} gives us a powerful tool for working with words of the form $\proj{3}(\gadget{\alpha}(w))$. Recall that we want to show that a word $[v_1, v_2]\sigma \in H^E$ with $v_1,v_2 \in G_3^E$ implies that $\sigma \in H^E$. Similar logic as used to show \Cref{eq:w'_existence_proof_sketch} can show 
\begin{align}
    \com{v_1}{v_2}\sigma \in H^E \implies \com{\proj{3}(f_1(v_1))}{\proj{3}(f_2(v_2))}\sigma \in H^E.
\end{align}
Now we define
\begin{align}
    q := \com{\cmap{3,1}(\proj{3}(f_1(v_1)))}{\cmap{3,2}(\proj{3}(f_2(v_2)))}
\end{align}
and note that $q \in H^E$ (because $\cmap{3,1}$ and $\cmap{3,2}$ map into $H^E$).
Using Property~\labelcref{prop:pmapid} we see 
\begin{align}
    \proj{1}\left(q \right) 
    = \com{1}{\proj{1}\left(\cmap{3,2}(\proj{3}(f_2(v_2)))\right)} = 1
\end{align}
and 
\begin{align}
    \proj{2}\left(q\right) 
    = \com{\proj{2}\left(\cmap{3,1}(\proj{3}(f_1(v_1)))\right)}{1}
    = 1
\end{align}
while Property~\labelcref{abprop:afix} of the maps $\cmap{3,\alpha}$ gives 
\begin{align}
    \proj{3}\left(q\right) 
    = \com{\proj{3}(f_1(v_1))}{\proj{3}(f_2(v_2))}
\end{align}
and direct computation gives 
\begin{align}
    \proj{\sigma}\left(q\right)
    = \com{\proj{\sigma}\left(\cmap{3,1}(\proj{3}(f_1(v_1)))\right)}{\proj{\sigma}\left(\cmap{3,2}(\proj{3}(f_2(v_2)))\right)} = 1
\end{align}
Thus, checking the action of each projection on $\proj{\alpha}$ on the word we see 
\begin{align}
    \sigma = q^{-1}  \com{\proj{3}(f_1(v_1))}{\proj{3}(f_2(v_2))}\sigma  \in H^E,
\end{align}
which is the containment needed to prove \Cref{thm:K_modding}. 

For technical reasons in the full proof of \Cref{thm:K_modding} we do not apply the homomorphisms $f_1$ and $f_2$ to separate parts of a word $w \in G_3^{E}$, but instead chain them together as $\proj{3}(f_1(\proj{3}(f_2(w))))$.\footnote{In the full proof, this composition is defined in \Cref{eq:full_proof_composition}} Property~\labelcref{prop:pmapmatch} is a technical result that tells us this chaining together of maps $f_\alpha$ behaves as desired. 

\end{proof}

\subsubsection{Proof of \texorpdfstring{\Cref{lem:gadgetmap}}{Lemma 3.13}} 
\label{subsec:proof_of_gadgets}

Now we turn to the proof of  \Cref{lem:gadgetmap}. To prepare, we construct ``gadget" words which will be used in the definition of $f_\alpha$. These words depend on the representative vertices chosen from the connected components of $\cG_{13}$ and $\cG_{23}$ when constructing the right inverses $\cmap{3,1}$ and $\cmap{3,2}$. 

To work with these representative vertices we define, for $\alpha,\beta \in \{1,3\}$, the functions $\repnew{3,1}{\alpha}{\beta}$ analogously to \Cref{eq:rep_def}.
These functions map a vertex $x_i^{(\alpha)}$ in $G_\alpha$ to the representative vertex of $G_\beta$ in the connected component of multigraph $\cG_{13}$ containing $x_i^{(\alpha)}$.  
We also define the functions $\repnew{3,2}{\alpha}{\beta}$, analogously for $\alpha, \beta \in \{2,3\}$. Next, recall the hypergraph $\cG_{123}$ defined in \Cref{subsec: new def}. Vertices are identified with elements $x_i^{(\alpha)}$, with $i \in [N]$, $\alpha \in \{1,2,3\}$. $\cG_{123}$ contains a hyperedge $(x_i^{(1)},x_j^{(2)},x_k^{(3)})$ for each clause $x_i^{(1)}x_j^{(2)}x_k^{(3)}\sigma^l \in S$, where $l$ has value $0$ or $1$. By the arguments of \Cref{sec: Connectivity}, we can assume this hypergraph is connected. Then there exist paths in $\cG_{123}$ between any two vertices. 

Now for each pair of vertices $x_i^{(\alpha)}, x_j^{(\beta)}$ let $Q(x_i^{(\alpha)},x_j^{(\beta)})$ denote some fixed minimal length path between these vertices. Then we fix some arbitrary vertex in $G_3$, wlog chosen to be $x_1^{(3)}$, and for each 
representative vertex $\repnew{3,\alpha}{\beta}{\alpha}(x_i^{(\beta)})$
with 
$\alpha \in \{1,2\}$ and $\beta \in \{3, \alpha\}$ 
consider the minimal length path $Q \left(\repnew{3,\alpha}{\beta}{\alpha}(x_i^{(\beta)}), x_1^{(3)} \right)$ from the representative vertex to $x_1^{(3)}$.
Each path corresponds to a sequence of clauses, and we can identify sequences of clauses with words in $H$. A sample hypergraph $\cG_{123}$ is introduced in \Cref{fig:hypergraph}, and a sample path is illustrated in \Cref{fig:hypergraphPath}.

Next, given a sequence of clauses $h_{p_1}, h_{p_2}, ..., h_{p_s}$ corresponding to a path in $\cG_{123}$, define the subsequence of clauses $s_{\beta}(h_{p_1}, h_{p_2}, ..., h_{p_s})$ to be the sequence including only pairs consisting of adjacent clauses  which are connected through the $G_\beta$ vertices. That is, $s_{\beta}(h_{p_1}, h_{p_2}, ..., h_{p_s})$ includes only adjacent clauses $h_{p_i}h_{p_{i+1}}$ which satisfy
\begin{align}
    \proj{\beta}(h_{p_i}) = \proj{\beta}(h_{p_{i+1}}). \label{eq:subesquence_condn}
\end{align}
Note $s_{\beta}(h_{p_1}, h_{p_2}, ..., h_{p_s})$ need not be a path. 

Finally, define words
\begin{samepage}
\begin{align}
    \protoGadget_1\left(x_i^{(\alpha)}, x_1^{(3)} \right) := s_2\left(Q\left(\repnew{3,1}{\alpha}{1}\left(x_i^{(\alpha)}\right), x_1^{(3)} \right)\right) \text{ for } \alpha \in \{1,3\} \label{eq:gamma1def}
\end{align}
and
\begin{align}
    \protoGadget_2\left(x_i^{(\alpha)}, x_1^{(3)} \right) := s_1\left(Q\left(\repnew{3,2}{\alpha}{2}\left(x_i^{(\alpha)}\right), x_1^{(3)} \right)\right) \text{ for } \alpha \in \{2,3\}. \footnotemark \label{eq:gamma2def}
\end{align}
\footnotetext{The $x_1^{(3)}$ entry in the definition of the words $\protoGadget_\beta \left(x_i^{(\alpha)}, x_1^{(3)}\right)$ is fixed, so nonessential. We keep it in the notation only to remind ourselves that the words correspond to a subsequence chosen from a sequence of clauses which corresponds to a path terminating at vertex $x_1^{(3)}$.} 
The full sequence of steps involved in the construction of $\protoGadget_2$ is visualized in \Cref{fig:hypergraphPathSubPath,fig:hypergraph,fig:hypergraphgraph,fig:hypergraphPath}. We alert the reader that we will most frequently use these gadget words with the fixed index $\alpha = 3$, but will occasionally require this more general definition.
\begin{figure}[p]
    \centering
    \captionsetup{singlelinecheck=off}
    \resizebox{10cm}{!}{
        \begin{tikzpicture}

\vertex[xpos=0, ypos=0, label name = $x_1^{(1)}$]{11}
\vertex[xpos=1, ypos=0, label name = $x_2^{(1)}$]{12}
\vertex[xpos=2, ypos=0, label name = $x_3^{(1)}$]{13}
\vertex[xpos=3, ypos=0, label name = $x_4^{(1)}$]{14}
\vertex[xpos=4, ypos=0, label name = $x_5^{(1)}$]{15}
\vertex[xpos=5, ypos=0, label name = $x_6^{(1)}$]{16}

\vertex[xpos=0, ypos=2, label position = above, label name = $x_1^{(2)}$, color = red, label color=red]{21}
\vertex[xpos=1, ypos=2, label position = above, label name = $x_2^{(2)}$]{22}
\vertex[xpos=2, ypos=2, label position = above, label name = $x_3^{(2)}$, color = red, label color=red]{23}
\vertex[xpos=3, ypos=2, label position = above, label name = $x_4^{(2)}$]{24}
\vertex[xpos=4, ypos=2, label position = above, label name = $x_5^{(2)}$, color = red, label color=red]{25}
\vertex[xpos=5, ypos=2, label position = above, label name = $x_6^{(2)}$]{26}

\vertex[xpos=0, ypos=4, label position = above, label name = $x_1^{(3)}$]{31}
\vertex[xpos=1, ypos=4, label position = above, label name = $x_2^{(3)}$]{32}
\vertex[xpos=2, ypos=4, label position = above, label name = $x_3^{(3)}$]{33}
\vertex[xpos=3, ypos=4, label position = above, label name = $x_4^{(3)}$]{34}
\vertex[xpos=4, ypos=4, label position = above, label name = $x_5^{(3)}$]{35}
\vertex[xpos=5, ypos=4, label position = above, label name = $x_6^{(3)}$]{36}

\threeedge[outline = false, fill = true, fill color = gray]{11}{21}{31}
\threeedge[outline = false, fill = true, fill color = gray]{11}{22}{31}
\threeedge[outline = false, fill = true, fill color = gray]{11}{23}{33}
\threeedge[outline = false, fill = true, fill color = gray]{12}{22}{32}
\threeedge[outline = false, fill = true, fill color = gray]{12}{23}{34}

\threeedge[outline = false, fill = true, fill color = gray]{13}{24}{34}
\threeedge[outline = false, fill = true, fill color = gray]{14}{24}{33}

\threeedge[outline = false, fill = true, fill color = gray]{15}{24}{34}
\threeedge[outline = false, fill = true, fill color = gray]{15}{26}{35}
\threeedge[outline = false, fill = true, fill color = gray]{15}{25}{35}
\threeedge[outline = false, fill = true, fill color = gray]{16}{26}{36}

\end{tikzpicture}
    }
    \caption[]{Sample hypergraph $\cG_{123}$ for a game with alphabet size $N = 6$ and 11 clauses. Representative vertices in the image of the map $\repnew{3,2}{3}{2}$ are indicated in red. The hypergraph is generated by clause set ($\sigma$ terms omitted since they don't affect the graph): 
    \begin{align*}
    S = \{ x_1^{(1)}x_1^{(2)}x_1^{(3)}, \
    x_1^{(1)}x_2^{(2)}x_1^{(3)}, \
    x_2^{(1)}x_2^{(2)}x_2^{(3)}, \
    x_1^{(1)}x_3^{(2)}x_3^{(3)}, \
    x_2^{(1)}x_3^{(2)}x_4^{(3)}, \
    x_3^{(1)}x_4^{(2)}x_4^{(3)}, \ \; \; \; \; \; \; \; \; \\
    x_4^{(1)}x_4^{(2)}x_3^{(3)}, \
    x_5^{(1)}x_4^{(2)}x_4^{(3)}, \
    x_5^{(1)}x_6^{(2)}x_5^{(3)}, \
    x_5^{(1)}x_5^{(2)}x_5^{(3)}, \
    x_6^{(1)}x_6^{(2)}x_6^{(3)}
    \}     
    \end{align*}
    }
    \label{fig:hypergraph}
\vspace{20pt}
    \resizebox{10cm}{!}{
        \begin{tikzpicture}

\vertex[xpos=0, ypos=2, label position = below, label name = $x_1^{(2)}$, color = red, label color=red]{21}
\vertex[xpos=1, ypos=2, label position = below, label name = $x_2^{(2)}$]{22}
\vertex[xpos=2, ypos=2, label position = below, label name = $x_3^{(2)}$, color = red, label color=red]{23}
\vertex[xpos=3, ypos=2, label position = below, label name = $x_4^{(2)}$]{24}
\vertex[xpos=4, ypos=2, label position = below, label name = $x_5^{(2)}$, color = red, label color=red]{25}
\vertex[xpos=5, ypos=2, label position = below, label name = $x_6^{(2)}$]{26}

\vertex[xpos=0, ypos=4, label position = above, label name = $x_1^{(3)}$]{31}
\vertex[xpos=1, ypos=4, label position = above, label name = $x_2^{(3)}$]{32}
\vertex[xpos=2, ypos=4, label position = above, label name = $x_3^{(3)}$]{33}
\vertex[xpos=3, ypos=4, label position = above, label name = $x_4^{(3)}$]{34}
\vertex[xpos=4, ypos=4, label position = above, label name = $x_5^{(3)}$]{35}
\vertex[xpos=5, ypos=4, label position = above, label name = $x_6^{(3)}$]{36}

\edge{21}{31}
\edge{22}{31}
\edge{23}{33}
\edge{22}{32}
\edge{23}{34}

\edge{24}{34}
\edge{24}{33}

\edge{24}{34}
\edge{26}{35}
\edge{25}{35}
\edge{26}{36}

\end{tikzpicture}
    }
    \caption{Graph $\cG_{23}$ corresponding to the same set of clauses as used to generate the hypergraph in \Cref{fig:hypergraph}. Representative vertices in $\cG_2$ are indicated in red.}
    \label{fig:hypergraphgraph}
\end{figure}

\end{samepage}

\begin{figure}[p]
    \centering
    \resizebox{10cm}{!}{
        \begin{tikzpicture}

\vertex[xpos=0, ypos=0, label name = $x_1^{(1)}$]{11}
\vertex[xpos=1, ypos=0, label name = $x_2^{(1)}$]{12}
\vertex[xpos=2, ypos=0, label name = $x_3^{(1)}$]{13}
\vertex[xpos=3, ypos=0, label name = $x_4^{(1)}$]{14}
\vertex[xpos=4, ypos=0, label name = $x_5^{(1)}$]{15}
\vertex[xpos=5, ypos=0, label name = $x_6^{(1)}$]{16}

\vertex[xpos=0, ypos=2, label position = above, label name = $x_1^{(2)}$, color = red, label color=red]{21}
\vertex[xpos=1, ypos=2, label position = above, label name = $x_2^{(2)}$]{22}
\vertex[xpos=2, ypos=2, label position = above, label name = $x_3^{(2)}$, color = red, label color=red]{23}
\vertex[xpos=3, ypos=2, label position = above, label name = $x_4^{(2)}$]{24}
\vertex[xpos=4, ypos=2, label position = above, label name = $x_5^{(2)}$, color = red, label color=red]{25}
\vertex[xpos=5, ypos=2, label position = above, label name = $x_6^{(2)}$]{26}

\vertex[xpos=0, ypos=4, label position = above, label name = $x_1^{(3)}$]{31}
\vertex[xpos=1, ypos=4, label position = above, label name = $x_2^{(3)}$]{32}
\vertex[xpos=2, ypos=4, label position = above, label name = $x_3^{(3)}$]{33}
\vertex[xpos=3, ypos=4, label position = above, label name = $x_4^{(3)}$]{34}
\vertex[xpos=4, ypos=4, label position = above, label name = $x_5^{(3)}$]{35}
\vertex[xpos=5, ypos=4, label position = above, label name = $x_6^{(3)}$]{36}

\threeedge[outline = false, fill = true, fill color = teal]{11}{21}{31}
\threeedge[outline = false, fill = true, fill color = teal]{11}{23}{33}
\threeedge[outline = false, fill = true, fill color = teal]{12}{23}{34}
\threeedge[outline = false, fill = true, fill color = teal]{15}{24}{34}
\threeedge[outline = false, fill = true, fill color = teal]{15}{25}{35}

\threeedge[outline = false, fill = true, fill color = gray]{11}{22}{31}

\threeedge[outline = false, fill = true, fill color = gray]{12}{22}{32}

\threeedge[outline = false, fill = true, fill color = gray]{13}{24}{34}
\threeedge[outline = false, fill = true, fill color = gray]{14}{24}{33}

\threeedge[outline = false, fill = true, fill color = gray]{15}{26}{35}

\threeedge[outline = false, fill = true, fill color = gray]{16}{26}{36}

\end{tikzpicture}
    }
    \caption{Hypergraph repeated from \Cref{fig:hypergraph}. A choice of path $Q(x_5^{(2)}, x_1^{(3)})$ is indicated in teal.}
    \label{fig:hypergraphPath}
    \resizebox{10cm}{!}{
        \begin{tikzpicture}

\vertex[xpos=0, ypos=0, label name = $x_1^{(1)}$]{11}
\vertex[xpos=1, ypos=0, label name = $x_2^{(1)}$]{12}
\vertex[xpos=2, ypos=0, label name = $x_3^{(1)}$]{13}
\vertex[xpos=3, ypos=0, label name = $x_4^{(1)}$]{14}
\vertex[xpos=4, ypos=0, label name = $x_5^{(1)}$]{15}
\vertex[xpos=5, ypos=0, label name = $x_6^{(1)}$]{16}

\vertex[xpos=0, ypos=2, label position = above, label name = $x_1^{(2)}$, color = red, label color=red]{21}
\vertex[xpos=1, ypos=2, label position = above, label name = $x_2^{(2)}$]{22}
\vertex[xpos=2, ypos=2, label position = above, label name = $x_3^{(2)}$, color = red, label color=red]{23}
\vertex[xpos=3, ypos=2, label position = above, label name = $x_4^{(2)}$]{24}
\vertex[xpos=4, ypos=2, label position = above, label name = $x_5^{(2)}$, color = red, label color=red]{25}
\vertex[xpos=5, ypos=2, label position = above, label name = $x_6^{(2)}$]{26}

\vertex[xpos=0, ypos=4, label position = above, label name = $x_1^{(3)}$]{31}
\vertex[xpos=1, ypos=4, label position = above, label name = $x_2^{(3)}$]{32}
\vertex[xpos=2, ypos=4, label position = above, label name = $x_3^{(3)}$]{33}
\vertex[xpos=3, ypos=4, label position = above, label name = $x_4^{(3)}$]{34}
\vertex[xpos=4, ypos=4, label position = above, label name = $x_5^{(3)}$]{35}
\vertex[xpos=5, ypos=4, label position = above, label name = $x_6^{(3)}$]{36}

\threeedge[outline = false, fill = true, fill color = teal, outline = true, color = black]{11}{21}{31}
\threeedge[outline = false, fill = true, fill color = teal, outline = true, color = black]{11}{23}{33}
\threeedge[outline = false, fill = true, fill color = teal, outline = true]{15}{25}{35}
\threeedge[outline = false, fill = true, fill color = teal, outline = true]{15}{24}{34}

\threeedge[outline = false, fill = true, fill color = gray]{11}{22}{31}

\threeedge[outline = false, fill = true, fill color = gray]{12}{22}{32}
\threeedge[outline = false, fill = true, fill color = teal]{12}{23}{34}

\threeedge[outline = false, fill = true, fill color = gray]{13}{24}{34}
\threeedge[outline = false, fill = true, fill color = gray]{14}{24}{33}

\threeedge[outline = false, fill = true, fill color = gray]{15}{26}{35}
\threeedge[outline = false, fill = true, fill color = gray]{16}{26}{36}

\end{tikzpicture}
    }
    \caption{Hypergraph repeated from \Cref{fig:hypergraph}. The path $Q(x_5^{(2)}, x_1^{(3)})$ is indicated in teal. The hyperedges making up $\protoGadget_2(x_5^{(2)}, x_1^{(3)})$ are outlined.}
    \label{fig:hypergraphPathSubPath}
\end{figure}
The following lemma summarizes the important properties of the gadget words $\protoGadget_2\left(x_i^{(3)}, x_1^{(3)}\right)$ and $\protoGadget_1\left(x_i^{(3)}, x_1^{(3)} \right)$. 

\nx{ $\protoGadget_2\left(x_i^{(3)}, x_1^{(3)} \right)$ and $\protoGadget_1\left(x_i^{(3)}, x_1^{(3)} \right)$, defined as in \Cref{eq:gamma2def}, satisfy:
\\
     C1. \  $\proj{1} \left( \protoGadget_2\left(x_i^{(3)}, x_1^{(3)} \right) \right) = \id$ \;\;\;\;\; and \;\;\;\;\;
    $\proj{2} \left( \protoGadget_1\left(x_i^{(3)}, x_1^{(3)} \right) \right) = \id$.
    \\
    C2. \ $\proj{2} \left(\cmap{3,2} \left( \proj{3} \left( \protoGadget_2 \left( x_i^{(3)}, x_1^{(3)} \right) \right) \right) \right)  = \proj{2} \left( \cmap{3,2} \left( x_i^{(3)}    x_1^{(3)}  \right) \right) $  \;\;\;\;\; and 
    \\ \mbox{} \qquad
    $\proj{1} \left(\cmap{3,1} \left( \proj{3} \left( \protoGadget_1 \left( x_i^{(3)}, x_1^{(3)} \right) \right) \right) \right)  = \proj{1} \left( \cmap{3,1} \left( x_i^{(3)}    x_1^{(3)}  \right) \right) $.
}


\begin{lem}
The words $\protoGadget_2\left(x_i^{(3)}, x_1^{(3)} \right)$ and $\protoGadget_1\left(x_i^{(3)}, x_1^{(3)} \right)$, defined as in \Cref{eq:gamma2def}, satisfy the following properties:
\begin{enumerate}[label=C\arabic*.,ref=C\arabic*, beginpenalty=10]
    \item $\proj{1} \left( \protoGadget_2\left(x_i^{(3)}, x_1^{(3)} \right) \right) = \id$ \;\;\;\;\; and \;\;\;\;\;
    $\proj{2} \left( \protoGadget_1\left(x_i^{(3)}, x_1^{(3)} \right) \right) = \id$.
    \label{prop:gadgetId}
    \item $\proj{2} \left(\cmap{3,2} \left( \proj{3} \left( \protoGadget_2 \left( x_i^{(3)}, x_1^{(3)} \right) \right) \right) \right)  = \proj{2} \left( \cmap{3,2} \left( x_i^{(3)}    x_1^{(3)}  \right) \right) $ \label{prop:gadgetCancel} \;\;\;\;\; and \\
    $\proj{1} \left(\cmap{3,1} \left( \proj{3} \left( \protoGadget_1 \left( x_i^{(3)}, x_1^{(3)} \right) \right) \right) \right)  = \proj{1} \left( \cmap{3,1} \left( x_i^{(3)}    x_1^{(3)}  \right) \right) $.
\end{enumerate}
\end{lem}

\begin{proof}
We show the $\protoGadget_2$ case. The proof in the  $\protoGadget_1$ case is identical up to a change of index. 

To begin the proof, we note the word $Q\left(\repnew{3,2}{3}{2}(x_i^{(3)}), x_1^{(3)}\right)$ corresponds to a minimal-length path and so there are never more than two adjacent clauses containing the same element in the $G_1$ subgroup. (If there were three or more adjacent hyperedges containing the same element in $G_1$, the middle hyperedges could be deleted and the path would remain connected, contradicting minimality).
Additionally, recall that each hyperedge in $\cG_{123}$ is of the form $(x_i^{(1)}, x_j^{(2)}, x_k^{(3)})$, i.e. the hyperedge contains exactly one vertex in each $G_\alpha$ for $\alpha \in \{1,2,3\}$. For these reasons the subsequence $s_1\left(Q\left(\repnew{3,2}{3}{2}(x_i^{(3)}), x_1^{(3)}\right)\right)$ consists of pairs of hyperedges $h_{y_j}, h_{z_j}$ which overlap on some vertex in $G_1$. Thus we can write
\begin{align}
    \protoGadget_2\left(x_i^{(3)}, x_1^{(3)}\right) = h_{y_1}h_{z_1} \; h_{y_2}h_{z_2} \; ... \; h_{y_L}h_{z_L} \label{eq:q2}
\end{align} 
where $\proj{1}\left(h_{y_j}h_{z_j}\right) = \id$. This shows Property~\labelcref{prop:gadgetId}. 

Next, we prove property~\ref{prop:gadgetCancel}. We start by numbering all clauses in the path $Q\left(\repnew{3,2}{3}{2}(x_i^{(3)}), x_1^{(3)}\right)$, so 
\begin{align}
    Q\left(\repnew{3,2}{3}{2}(x_i^{(3)}), x_1^{(3)}\right) = h_{p_1}h_{p_2} .... h_{p_R}. \label{eq:path}
\end{align}
Consider two adjacent hyperedges $h_{p_r} h_{p_{r+1}}$ in the path. Since these hyperedges appear in sequence they overlap on at least one vertex. 
\begin{enumerate}[a)]
    \item If this vertex is contained in $G_1$, then this pair of hyperedges is contained in the word $\protoGadget_2\left(x_i^{(3)}, x_1^{(3)}\right)$ and, using the notation of \Cref{eq:q2}, we have  $h_{p_r} h_{p_{r+1}} = h_{y_j} h_{z_j}$ for some~$j$.
    \item Otherwise these hyperedges overlap on a vertex corresponding to a generator of either $G_2$ or $G_3$ (equivalently, these hyperedges overlap on a vertex contained in the graph $\cG_{23}$). In that case $\proj{3}(h_{p_r})$ and $\proj{3}(h_{p_{r+1}})$ are in the same connected component in the graph $\cG_{23}$ so $\repnew{3,2}{3}{2}(\proj{3}(h_{p_r})) = \repnew{3,2}{3}{2}(\proj{3}(h_{p_{r+1}}))$. Consequently, 
    \begin{align}
        \proj{2}\left(\cmap{3,2}(\proj{3}(h_{p_r} h_{p_{r+1}}))\right) = \repnew{3,2}{3}{2}(\proj{3}(h_{p_r}h_{p_{r+1}})) = \id.
    \end{align}
    The first equality holds by \Cref{eq:proj2r12}. 
\end{enumerate} 
Now consider a contiguous string of hyperedges of the form $h_{z_j}, h_{p_{r+1}}, h_{p_{r+2}} ... , h_{p_{r + r'}}, h_{y_{j+1}}$ contained in the path \labelcref{eq:path}. Here $h_{z_j}$ and $h_{y_{j+1}}$ belong to the path $\protoGadget_2\left(x_i^{(3)}, x_1^{(3)}\right)$, but $h_{p_{r+1}} ... h_{p_{r+r'}}$ do not. By definition of the subsequence $\protoGadget_2\left(x_i^{(3)}, x_1^{(3)}\right)$, no adjacent hyperedges between $h_{z_j}$ and $h_{y_{j+1}}$ overlap on a vertex in the $G_1$ subspace, else they would be contained in the subsequence $\protoGadget_2\left(x_i^{(3)}, x_1^{(3)}\right)$, a contradiction. Now that the intermediate clauses $h_{p_{r+1}}, ..., h_{p_{r+r'}}$ are introduced we apply the observation of the previous paragraph inductively to see
\begin{align}
       \proj{2}\left(\cmap{3,2}(\proj{3}(h_{z_j}h_{p_{r+1}}))\right) = \proj{2}\left(\cmap{3,2}(\proj{3}(h_{p_{r+1}}h_{p_{r+2}}))\right) = ... 
       = \proj{2}\left(\cmap{3,2}(\proj{3}(h_{p_{r + r'}}h_{y_{j+1}}))\right) = 1
\end{align}
Multiplying these terms together and noting adjacent clauses cancel shows
\begin{align}
    \proj{2}\left(\cmap{3,2}(\proj{3}(h_{z_j} h_{y_{j+1}} ))\right) = \id.
\end{align}
for any $j < L$. Now we use this observation inductively, and compute
\begin{align}
    \proj{2}\left(\cmap{3,2}\left( \proj{3} \left( \protoGadget_2\left(x_i^{(3)}, x_1^{(3)}\right) \right) \right)\right) &= \proj{2}\left(\cmap{3,2}\left(\proj{3}\left( h_{y_1}h_{z_1}h_{y_2}h_{z_2}, ... , h_{y_L}h_{z_L} \right) \right) \right) \\
    &= \proj{2}\left(\cmap{3,2}\left(\proj{3}\left( h_{y_1} h_{z_L} \right) \right) \right) \\
    &= \proj{2} \left( \cmap{3,2}  \left( x_i^{(3)}  x_1^{(3)}  \right)\right)
\end{align}
where we used on the last line the fact that $\proj{3}(h_{y_1})$ was in the same connected component in $\cG_{23}$ as~$x_i^{(3)}$ and that $\proj{3}(h_{z_L}) = x_1^{(3)}$, so \begin{align}
\proj{2}\left(\cmap{3,2}\left(\proj{3}\left( h_{y_1} h_{z_L} \right) \right) \right) &= \repnew{3,2}{3}{2}\left(\proj{3}(h_{y_1})\right) \repnew{3,2}{3}{2}\left(\proj{3}(h_{z_L})\right)\\
&= \repnew{3,2}{3}{2}\left( x_i^{(3)} \right) \repnew{3,2}{3}{2}\left( x_1^{(3)} \right) \\
&= \proj{2} \left( \cmap{3,2}  \left( x_i^{(3)}  x_1^{(3)}  \right)\right).
\end{align}
by definition of $\repnew{3,2}{3}{2}$ and \Cref{eq:proj2r12}.
This proves Property~\labelcref{prop:gadgetCancel}. 
\end{proof}

In addition to the gadget words defined above, we will need to recall the properties of the paths $\cpathnewone{\alpha,\beta}$, defined analogously to the path $\cpathnewone{1,2}$ defined at \Cref{eq:cpath_def}; they are used to construct the homomorphisms $\cmap{\alpha,\beta}$. In particular, we care about the properties of those paths when $\alpha$ is $3$ and $\beta$ is $1$ or $2$. We recall the properties of those paths in the following lemma. 

\nx{
 $\cpathnew{3,\beta}{x_i^{(3)}}{\repnew{3, \beta }{3}{\beta}(x_i^{(3)})}$ a path satisfies:
\\ 
    D1. \ $\proj{3}\left(\cpathnew{3,\beta}{x_i^{(3)}}{\repnew{3, \beta}{3}{\beta}(x_i^{(3)})}\right) = x_i^{(3)}$ 
    \\
    D2. \ $\proj{\beta}\left(\cpathnew{3,\beta}{x_i^{(3)}}{\repnew{3, \beta }{3}{\beta}(x_i^{(3)})}\right) = \repnew{3, \beta }{3}{\beta}(x_i^{(3)})$ 
    \\
    D3. \ $\cpathnew{3,\beta}{x_i^{(3)}}{\repnew{3, \beta}{3}{\beta}(x_i^{(3)})} \cpathnew{3,\beta}{x_j^{(3)}}{\repnew{3, \beta}{3}{\beta}(x_j^{(3)})}^{-1} = \cmap{3,\beta}(x_i^{(3)}x_j^{(3)})$ 
    \\
    for $\beta \in \{1,2\}$ and $x_i^{(3)} \in G_3$
}
\begin{lem}
For $\beta \in \{1,2\}$ and $x_i^{(3)} \in G_3$, the path $\cpathnew{3,\beta}{x_i^{(3)}}{\repnew{3, \beta }{3}{\beta}(x_i^{(3)})}$ satisfies the following properties 
\begin{enumerate}[label=D\arabic*.,ref=D\arabic*, beginpenalty=10]
    \item $\proj{3}\left(\cpathnew{3,\beta}{x_i^{(3)}}{\repnew{3, \beta}{3}{\beta}(x_i^{(3)})}\right) = x_i^{(3)}$ 
    \label{prop:cpath3}
    \item $\proj{\beta}\left(\cpathnew{3,\beta}{x_i^{(3)}}{\repnew{3, \beta }{3}{\beta}(x_i^{(3)})}\right) = \repnew{3, \beta }{3}{\beta}(x_i^{(3)})$ \label{prop:cpath12}
    \item $\cpathnew{3,\beta}{x_i^{(3)}}{\repnew{3, \beta}{3}{\beta}(x_i^{(3)})} \cpathnew{3,\beta}{x_j^{(3)}}{\repnew{3, \beta}{3}{\beta}(x_j^{(3)})}^{-1} = \cmap{3,\beta}(x_i^{(3)}x_j^{(3)})$ \label{prop:cpath_cmap}
\end{enumerate}
\end{lem}
\begin{proof}
Properties~\labelcref{prop:cpath12,prop:cpath3} follow from the properties of paths in the graph $\cG_{3, \beta}$, as discussed in the proof of \Cref{lem:making_homomorphisms}. Property~\labelcref{prop:cpath_cmap} is just the definition of $\cmap{3,\beta}$,
analogous to \Cref{eq:cmap_def}.
\end{proof}

Now we use the gadget words $\protoGadget_1(x_i^{(3)}, x_1^{(3)})$ and $\protoGadget_2(x_i^{(3)}, x_1^{(3)})$ along with the paths $\cpathnewone{3,\beta}$ to prove \Cref{lem:gadgetmap}.

\begin{proof}[Proof (\Cref{lem:gadgetmap})]
Recall that \Cref{lem:gadgetmap} claimed the existence of homomorphisms $f_1$ and $f_2$ which map $G_3^E \rightarrow H^E$ and satisfy certain desiderata (Properties \labelcref{prop:pmapmatch,prop:pmapid,prop:mapmatch,prop:mapid}). We will now give an explicit construction of these homomorphisms.

Define the homomorphism $f_1 : G_3^E \rightarrow H^E$ by its action on the basis elements
\begin{align}
    &f_1(x_{i}^{(3)}x_{j}^{(3)} ) \nonumber \\
    &:= \cpathnew{3,1}{x_i^{(3)}}{\repnew{3, 1}{3}{1}(x_i^{(3)})}
    \protoGadget_1\left(x_{i}^{(3)}, x_1^{(3)}\right)
    \left(
    \cpathnew{3,1}{x_j^{(3)}}{\repnew{3, 1}{3}{1}(x_j^{(3)})}
    \protoGadget_1\left(x_{j}^{(3)}, x_1^{(3)}\right)
    \right)^{-1} \\
    &= 
    \cpathnew{3,1}{x_i^{(3)}}{\repnew{3, 1}{3}{1}(x_i^{(3)})}
    \protoGadget_1\left(x_{i}^{(3)}, x_1^{(3)}\right)
    \protoGadget_1\left(x_{j}^{(3)}, x_1^{(3)}\right)^{-1}
    \cpathnew{3,1}{x_j^{(3)}}{\repnew{3, 1}{3}{1}(x_j^{(3)})}^{-1},
\end{align}
with $f_2$ defined similarly. Both maps are homomorphisms by \Cref{lem:making_homomorphisms}. It remains to show they satisfy Properties \labelcref{prop:pmapmatch,prop:pmapid,prop:mapmatch,prop:mapid}.  We will show explicitly that the homomorphism $f_1$ satisfies these properties; the reader can see that the argument for $f_2$ is identical.

Property \labelcref{prop:mapmatch} applied to homomorphism $f_1$ requires that 
\begin{align}
    \proj{2}\left(f_1(v)\right) = \proj{2}\left(\cmap{3,1}(v) \right)
\end{align}
for any $v \in G_3^{E}$.  We prove this by checking the action of $f_1$ on the generators of $G_3^{E}$. Direct calculation gives
\begin{align}
    &\proj{2}\left( f_1 (x_{i}^{(3)}x_{j}^{(3)})\right) \nonumber \\ 
    &= \proj{2}\left(\cpathnew{3,1}{x_i^{(3)}}{\repnew{3, 1}{3}{1}(x_i^{(3)})}
    \protoGadget_1\left(x_{i}^{(3)}, x_1^{(3)}\right)
    \protoGadget_1\left(x_{j}^{(3)}, x_1^{(3)}\right)^{-1}
    \cpathnew{3,1}{x_j^{(3)}}{\repnew{3, 1}{3}{1}(x_j^{(3)})}^{-1}\right)
    \\
    &= \proj{2}\left(\cpathnew{3,1}{x_i^{(3)}}{\repnew{3, 1}{3}{1}(x_i^{(3)})}
    \cpathnew{3,1}{x_j^{(3)}}{\repnew{3, 1}{3}{1}(x_j^{(3)})}^{-1}\right)\\
    &=\proj{2} \left(\cmap{3,1}\left(x_{i}^{(3)}x_{j}^{(3)}\right) \right),
\end{align}
where we used Property~\ref{prop:gadgetId} of the words $\protoGadget_1(x_i^{(3)}, x_1^{(3)})$ to go from the second line to the third, and Property~\ref{prop:cpath_cmap} of the words $\cpathnewone{3,1}$. to go from the third line to fourth.

Property \labelcref{prop:mapid} applied to homomorphism $f_1$ requires that 
\begin{align}
 \proj{1}\left(f_1(v)\right) = \id
\end{align}
for any $v \in  G_3^{E}$ with $\proj{1}\left(\cmap{3,1}(v)\right) = \id$. The proof of this is similar to the proof of Property~\ref{abprop:bid} of the map $\cmap{\alpha,\beta}$. Recall the function $\repnew{3,1}{\alpha}{1}$, defined to map a vertex $x_i^{(\alpha)}$ in $G_\alpha$ with $\alpha \in \{1,3\}$ to the representative vertex $x_{j}^{(1)}$ in the connected component of graph $G_{13}$ containing $x_i^{(\alpha)}$. Define the homomorphism $\lambda_1 : G_1^{E} \rightarrow G_1^{E}$ by extending
\begin{align} \label{eq:typo5}
    \lambda_1 (x_i^{(1)} x_j^{(1)}) = \repnew{3,1}{1}{1}\left(x_i^{(1)}\right)\proj{1}\left( \protoGadget_1 \left(x_i^{(1)}, x_1^{(3)} \right) \right)  \left( \repnew{3,1}{1}{1}\left(x_j^{(1)}\right)\proj{1}\left(\protoGadget_1\left(x_j^{(1)}, x_1^{(3)}\right) \right)  \right)^{-1}
\end{align}
as in \Cref{lem:homFromGens}.

Then we claim
\begin{align}
    \lambda_1(\proj{1}(h)) = \proj{1}\left( \gadget{1}\left(\proj{3}\left( h \right) \right) \right) \label{eq:A2claim}
\end{align}
for any $h \in H^E$. 
As in the proof of Property~\labelcref{abprop:bid}, we check this claim directly on the generators of $H^E$:
\begin{align}
    &\proj{1}\left( \gadget{1}\left(\proj{3}\left( h_i h_j \right) \right) \right)\nonumber \\ 
    &= \proj{1}\left( \gadget{1}\left(\proj{3}\left( x_{a_i}^{(1)}x_{b_i}^{(2)}x_{c_i}^{(3)}x_{a_j}^{(1)}x_{b_j}^{(2)}x_{c_j}^{(3)}\sigma^{s_i + s_j}\right) \right) \right) \\
    &= \proj{1}\left( \gadget{1}\left(x_{c_i}^{(3)}x_{c_j}^{(3)} \right) \right) \\
    &= \proj{1} \left( \cpathnew{3,1}{x_{c_i}^{(3)}}{\repnew{3, 1}{3}{1}(x_{c_i}^{(3)})}
    \protoGadget_1\left(x_{c_i}^{(3)}, x_1^{(3)}\right)
    \protoGadget_1\left(x_{c_j}^{(3)}, x_1^{(3)}\right)^{-1}
    \cpathnew{3,1}{x_{c_j}^{(3)}}{\repnew{3, 1}{3}{1}(x_{c_j}^{(3)})}^{-1} \right) \\
    &= \repnew{3,1}{3}{1}\left(x_{c_i}^{(3)}\right)\proj{1}\left(\protoGadget_1\left(x_{c_i}^{(3)}, x_1^{(3)} \right)
    \left(\protoGadget_1\left(x_{c_j}^{(3)}, x_1^{(3)} \right)\right)^{-1}\right)
    \left(\repnew{3,1}{3}{1}\left(x_{c_j}^{(3)}\right)\right)^{-1}
    \label{eq:projecting_on_gadgets}\\
    &= \repnew{3,1}{1}{1}\left(x_{a_i}^{(1)}\right)\proj{1}\left(\protoGadget_1\left(x_{a_i}^{(1)}, x_1^{(3)} \right)
    \left(\protoGadget_1\left(x_{a_j}^{(1)}, x_1^{(3)} \right)\right)^{-1}\right)
    \left(\repnew{3,1}{1}{1}\left(x_{a_j}^{(1)}\right)\right)^{-1}
    \label{eq:shiftingVertex}\\
    &= \lambda_1 \left( x_{a_i}^{(1)} x_{a_j}^{(1)} \right) \label{eq:typo6}\\
    &= \lambda_1(\proj{1}(h_i h_j))
\end{align}
Note to get line~\labelcref{eq:projecting_on_gadgets} we used Property~\labelcref{prop:cpath12} of the paths $\cpathnewone{3,1}$. 
The key argument comes in getting to line \ref{eq:shiftingVertex} where we used the fact that $x_{a_i}^{(1)}$ and $x_{c_i}^{(3)}$ are both contained in the clause $h_i$, so the vertices corresponding to $x_{a_i}^{(1)}$ and $x_{c_i}^{(3)}$ are in the same connected component of $G_{13}$ and consequently,
\begin{align}
    \repnew{3,1}{3}{1}\left(x_{c_i}^{(3)}\right) = \repnew{3,1}{1}{1}\left(x_{a_i}^{(1)}\right)
\end{align}
and 
\begin{align}
    \protoGadget_1\left(x_{c_i}^{(3)}, x_1^{(3)}\right) = \protoGadget_1\left(x_{a_i}^{(1)}, x_1^{(3)}\right).
\end{align}
Since $\lambda_1, \proj{1}, \gadget{1},$ and $\proj{3}$ are all homomorphisms, this proves the claim. 

Next, for any $v \in G_3^E$ satisfying $\proj{1}\left(\cmap{3,1}(v)\right) = 1$ we use \Cref{eq:A2claim} with $h = \cmap{3,1}(v)$ to conclude (recalling that $v = \proj{3}\left(\cmap{3,1}(v)\right)$ by Property~\labelcref{abprop:afix}): 
\begin{align}
    \proj{1}\left(f_1(v)\right) &= \proj{1}\left(f_1\left( \proj{3} \left( \cmap{3,1} \left( v \right) \right) \right)\right) \\
    &= \lambda_1 \left( \proj{1}\left( \cmap{3,1} \left( v \right) \right) \right) \\
    &= \lambda_1 (1) = 1
\end{align}
which proves property \labelcref{prop:mapid}.

Property~\ref{prop:pmapmatch} applied to the homomorphism $f_1$ requires that 
\begin{align}
    \proj{2} \left( \cmap{3,2} \left( \proj{3} \left( f_1 (v) \right) \right) \right) = \proj{2} \left( \cmap{3,2} (v) ) \right) \label{eq:propB4}
\end{align}
for any $v \in G_3^{E}$.
This  follows from Property~\ref{prop:gadgetId} of the words $\protoGadget_1(x_i^{(3)}, x_1^{(3)})$ and Property~\ref{abprop:bid} of the map $\cmap{3,2}$.
Property~\ref{prop:gadgetId} gives 
\begin{align}
    \proj{2}\left(\protoGadget_1(x_i^{(3)}, x_1^{(3)})\right) = \id
\end{align}
and then Property~\ref{abprop:bid} gives 
\begin{align}
    \proj{2}\left(\cmap{3,2} \left(\proj{3} \left( \protoGadget_1(x_i^{(3)}, x_1^{(3)})\right)   \right) \right) = \id. \label{eq:longcancel}
\end{align}
The idea is that the gadget words inserted by the map $f_1$ map to the identity under $\proj{2}\left(\cmap{3,2} \left( \proj{3} \right) \right)$ and Property~\ref{prop:pmapmatch} follows. We verify Property~\ref{prop:pmapmatch} 
algebraically by checking that \Cref{eq:propB4} holds on the generators of $G_3^{E}$: 
\begin{align}
    &\proj{2}\left(\cmap{3,2} \left( \proj{3} \left( f_1(x_{i}^{(3)}x_{j}^{(3)} ) \right) \right) \right) \\
    &\hspace{20pt}= 
    \proj{2}\left(\cmap{3,2} \left( \proj{3} \left( \cpathnew{3,1}{x_i^{(3)}}{\repnew{3, 1}{3}{1}(x_i^{(3)})}
    \right) \right) \right) \nonumber \\
    &\hspace{40pt}\proj{2}\left(\cmap{3,2} \left( \proj{3} \left(
    \protoGadget_1\left(x_{i}^{(3)}, x_1^{(3)}\right)
    \right) \right) \right) 
    \proj{2}\left(\cmap{3,2} \left( \proj{3} \left(
    \protoGadget_1\left(x_{j}^{(3)}, x_1^{(3)}\right)^{-1}
    \right) \right) \right) \nonumber \\
    &\hspace{60pt}\proj{2}\left(\cmap{3,2} \left( \proj{3} \left(
    \cpathnew{3,1}{x_j^{(3)}}{\repnew{3, 1}{3}{1}(x_j^{(3)})}^{-1}
    \right) \right) \right) \\
    &\hspace{20pt}= 
    \proj{2}\left(\cmap{3,2} \left( \proj{3} \left( \cpathnew{3,1}{x_i^{(3)}}{\repnew{3, 1}{3}{1}(x_i^{(3)})}
    \right) \right) \right) \nonumber \\
    &\hspace{40pt}\proj{2}\left(\cmap{3,2} \left( \proj{3} \left(
    \cpathnew{3,1}{x_j^{(3)}}{\repnew{3, 1}{3}{1}(x_j^{(3)})}^{-1}
    \right) \right) \right) \\
    & \hspace{20pt} = \proj{2}\left(\cmap{3,2} \left( x_i^{(3)} x_j^{(3)} \right) \right).
\end{align}
Where we used \Cref{eq:longcancel} to go from the second line to the third, and Property \labelcref{prop:cpath3} of the paths $\cpathnewone{3,1}$ to go from the third line to the fourth. 

Finally, Property~\ref{prop:pmapid} applied to $f_1$ requires that 
\begin{align}
    \proj{1} \left( \cmap{3,1} \left( \proj{3} \left( f_1 (v) \right) \right) \right) = 1
\end{align}
for any $v \in G_3^{E}$. This relies heavily on 
Property~\ref{prop:gadgetCancel} of the words $\protoGadget_1(x_i^{(3)}, x_1^{(3)})$. Because $v$ has even length, we can write 
\begin{align}
    v = \prod_i  x_{o_i}^{(3)} x_{e_i}^{(3)}.
\end{align}
Then 
\begin{align}
    f_1(v) &= \prod_i \bigg(\cpathnew{3,1}{x_{o_i}^{(3)}}{\repnew{3, 1}{3}{1}(x_{o_i}^{(3)})}
    \protoGadget_1\left(x_{o_i}^{(3)}, x_1^{(3)}\right)\nonumber\\
    &\hspace{60pt}\protoGadget_1\left(x_{e_i}^{(3)}, x_1^{(3)}\right)^{-1}
    \cpathnew{3,1}{x_{e_i}^{(3)}}{\repnew{3, 1}{3}{1}(x_{e_i}^{(3)})}^{-1}\bigg)
\end{align}
and using Property~\ref{prop:gadgetCancel} of the words $\protoGadget_1(x_i^{(3)}, x_1^{(3)})$ and Property~\ref{prop:cpath3} of the paths $\cpathnewone{3,1}$ gives
\begin{align}
   &\proj{1} \left( \cmap{3,1} \left( \proj{3} \left( f_1(v) \right) \right) \right) \nonumber\\
   &\hspace{20pt}= \prod_i \proj{1} \bigg( \cmap{3,1} \bigg( \proj{3} \bigg( \cpathnew{3,1}{x_{o_i}^{(3)}}{\repnew{3, 1}{3}{1}(x_{o_i}^{(3)})}
    \protoGadget_1\left(x_{o_i}^{(3)}, x_1^{(3)}\right)\nonumber\\
    &\hspace{110pt}\protoGadget_1\left(x_{e_i}^{(3)}, x_1^{(3)}\right)^{-1}
    \cpathnew{3,1}{x_{e_i}^{(3)}}{\repnew{3, 1}{3}{1}(x_{e_i}^{(3)})}^{-1}\bigg)\bigg)\bigg)\\
    &\hspace{20pt}= \prod_{i} \proj{1} \left( \cmap{3,1} \left( \;\; x_{o_i}^{(3)}  (x_{o_i}^{(3)} x_1^{(3)})  (x_1^{(3)} x_{e_i}^{(3)})  x_{e_i}^{(3)} \;\; \right) \right) \\
    &\hspace{20pt}= \prod_{i} 1 = \id. 
\end{align}
This shows Property~\labelcref{prop:pmapid} and completes the proof of \Cref{lem:gadgetmap}.
\end{proof}

One final nice property of the maps $\gadget{1}, \gadget{2}$ that we need to show is that they map words inside the $K$ subgroup to words inside the $K$ subgroup.  We show that in the following lemma. 

\begin{lem} \label{lem:gadgetKtoK}
For any $v \in K \cap G_3^E$ we have 
\begin{align}
    \gadget{1}(v), \gadget{2}(v) \in K.
\end{align}
\end{lem}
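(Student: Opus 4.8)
The plan is to reduce everything to the single structural fact that $K$ is the full commutator subgroup $[G^E,G^E]$ of $G^E$. This is immediate from the definition of $K$ as the normal closure of the commutators of the generators of $G^E$ (so $K \subseteq [G^E,G^E]$, since $[G^E,G^E]$ is normal and contains those commutators) together with the observation used in the proof of \Cref{thm:solvableModK} that $G^E/K$ is abelian (so $[G^E,G^E]\subseteq K$). Consequently every element of $K$ is a finite product of commutators $\com{g}{h}$ with $g,h\in G^E$; and since $\gadget{1},\gadget{2}$ are homomorphisms into $H^E\leq G^E$, the images $\gadget{1}(v),\gadget{2}(v)$ will automatically be products of commutators of elements of $G^E$, hence in $K$ --- \emph{provided} we first rewrite $v$ using only commutators coming from $G_3^E$, so that $\gadget{1}$ and $\gadget{2}$, which are only defined on $G_3^E$, can be applied term by term.

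So, first I would take $v\in K\cap G_3^E$ and write $v=\prod_i \com{g_i}{h_i}$ with $g_i,h_i\in G^E$. Then I would apply the projector $\proj{3}$: it is a homomorphism fixing $v$ (since $v\in G_3^E$), and it carries $G^E$ into $G_3^E$, because every element of $G^E$ has an even number of player-$3$ generators, so $\proj{3}(g_i),\proj{3}(h_i)\in G_3^E$ for every $i$. This gives $v=\proj{3}(v)=\prod_i\com{\proj{3}(g_i)}{\proj{3}(h_i)}$, exhibiting $v$ as a product of commutators of elements of $G_3^E$.

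Finally I would apply $\gadget{1}$ (the argument for $\gadget{2}$ being identical): by \Cref{lem:gadgetmap} the map $\gadget{1}\colon G_3^E\to H^E$ is a homomorphism, so $\gadget{1}(v)=\prod_i\com{\gadget{1}(\proj{3}(g_i))}{\gadget{1}(\proj{3}(h_i))}$, a product of commutators of elements of $H^E\leq G^E$, hence an element of $[G^E,G^E]=K$. The only point that needs care is the bookkeeping of the evenness conventions --- checking that $\proj{3}$ genuinely maps $G^E$ into $G_3^E$, and that the domains line up so that $\gadget{1}$ and $\gadget{2}$ may legitimately be applied to the $\proj{3}(g_i)$ --- but no real computation is required here, unlike in \Cref{lem:Kfixed}, since we work directly with the description of $K$ as a commutator subgroup rather than tracking the images of its generating commutators.
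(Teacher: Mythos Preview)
Your proof is correct and is in spirit the same as the paper's, but it is streamlined in two ways that are worth noting.

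The paper writes $v\in K\cap G_3^E$ as a product of conjugates of the \emph{generating} commutators $\com{x_{a_{i_1}}^{(3)}x_{a_{i_2}}^{(3)}}{x_{a_{i_3}}^{(3)}x_{a_{i_4}}^{(3)}}$ with conjugating words $u_i$, applies $\gadget{1}$ termwise, and then invokes \Cref{lem:even_com_in_K} to check that the resulting commutators $\com{\gadget{1}(x_{a_{i_1}}^{(3)}x_{a_{i_2}}^{(3)})}{\gadget{1}(x_{a_{i_3}}^{(3)}x_{a_{i_4}}^{(3)})}$ of even-length words lie in $K$. You bypass that auxiliary lemma entirely by observing at the outset that $K=[G^E,G^E]$, so that any commutator of elements of $G^E$ is automatically in $K$. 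You also make explicit a step the paper leaves implicit: to guarantee that the commutator factors live in the domain $G_3^E$ of $\gadget{1}$, you apply $\proj{3}$ first, whereas the paper silently assumes a decomposition with $u_i$ and commutators already in $G_3^E$ (which is justified by the product structure $G^E\cong G_1^E\times G_2^E\times G_3^E\times\langle\sigma\rangle$, but is not spelled out). Both arguments buy the same conclusion; yours is a bit more self-contained.
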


\begin{proof}
By assumption, we can write 
\begin{align}
    v = \prod_i u_i \com{x_{a_{i_1}}^{(3)}x_{a_{i_2}}^{(3)}}{x_{a_{i_3}}^{(3)}x_{a_{i_4}}^{(3)}} u_i^{-1}.
\end{align}
Then,
\begin{align}
    \gadget{1}(v) &= \prod_i \gadget{1}\left(u_i\right) \gadget{1}\left(\com{x_{a_{i_1}}^{(3)}x_{a_{i_2}}^{(3)}}{x_{a_{i_3}}^{(3)}x_{a_{i_4}}^{(3)}}\right) \gadget{1}\left(u_i^{-1}\right) \\
    &= \prod_i \gadget{1}\left(u_i\right) \com{\gadget{1}\left(x_{a_{i_1}}^{(3)}x_{a_{i_2}}^{(3)}\right)}{\gadget{1}\left(x_{a_{i_3}}^{(3)}x_{a_{i_4}}^{(3)}\right)} \gadget{1}\left(u_i^{-1}\right)
\end{align}
We have $\gadget{1}\left(x_{a_{i_1}}^{(3)}x_{a_{i_2}}^{(3)}\right), \gadget{1}\left(x_{a_{i_3}}^{(3)}x_{a_{i_4}}^{(3)}\right) \in G^E$, so (by \Cref{lem:even_com_in_K} in the appendix)
\begin{align}
    \com{\gadget{1}\left(x_{a_{i_1}}^{(3)}x_{a_{i_2}}^{(3)}\right)}{\gadget{1}\left(x_{a_{i_3}}^{(3)}x_{a_{i_4}}^{(3)}\right)} \in K.
\end{align}
But $K$ is normal, so we also have 
\begin{align}
    \gadget{1}\left(u_i\right) \com{\gadget{1}\left(x_{a_{i_1}}^{(3)}x_{a_{i_2}}^{(3)}\right)}{\gadget{1}\left(x_{a_{i_3}}^{(3)}x_{a_{i_4}}^{(3)}\right)} \gadget{1}\left(u_i^{-1}\right) \in K
\end{align}
for all $i$, hence 
\begin{align}
   \gadget{1}(v) = \prod_i \gadget{1}\left(u_i\right) \com{\gadget{1}\left(x_{a_{i_1}}^{(3)}x_{a_{i_2}}^{(3)}\right)}{\gadget{1}\left(x_{a_{i_3}}^{(3)}x_{a_{i_4}}^{(3)}\right)} \gadget{1}\left(u_i^{-1}\right) \in K.
\end{align}
The proof for $\gadget{2}$ is identical. 
\end{proof}

As a corollary, we note that the maps $f_1$, $f_2$ don't introduce any undesired factors of $\sigma$.

\begin{cor} \label{lem:noSigmaInfs}
For any word $v \in K \cap G_3^{E}$, we have 
\begin{align}
    \proj{\sigma}(\gadget{1}(v)) = \proj{\sigma}(\gadget{2}(v)) = 1
\end{align}
\end{cor}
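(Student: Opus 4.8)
The plan is to derive this immediately from the two ingredients already in hand: \Cref{lem:gadgetKtoK}, which says the gadget homomorphisms $\gadget{1}$ and $\gadget{2}$ send $K \cap G_3^E$ into $K$, and \Cref{lem:noSigmaInK}, which says $\proj{\sigma}$ annihilates every element of $K$ (i.e. $\proj{\sigma}(k) = 1$ for all $k \in K$). So there is essentially nothing to do beyond composing these two facts.

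Concretely, first I would take an arbitrary $v \in K \cap G_3^E$ and invoke \Cref{lem:gadgetKtoK} to conclude $\gadget{1}(v) \in K$ and $\gadget{2}(v) \in K$. Then I would apply \Cref{lem:noSigmaInK} to each of these elements to get $\proj{\sigma}(\gadget{1}(v)) = 1$ and $\proj{\sigma}(\gadget{2}(v)) = 1$, which is exactly the claim. The one-line display
\begin{align}
    \proj{\sigma}(\gadget{1}(v)) = \proj{\sigma}(\gadget{2}(v)) = 1
\end{align}
then follows.

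I do not anticipate any real obstacle here; this corollary is purely a bookkeeping consequence packaging \Cref{lem:gadgetKtoK} together with the structural fact that $K$ contains no $\sigma$ (each generating commutator of $K$, and hence every conjugate and product, has even total $\sigma$-parity, which is why $\proj{\sigma}$ kills it). The only thing to be slightly careful about is making sure the hypothesis $v \in K \cap G_3^E$ is precisely the hypothesis of \Cref{lem:gadgetKtoK}, which it is, so the composition is clean. If one wanted to be fully self-contained one could also note that $\gadget{1}, \gadget{2}$ are homomorphisms into $H^E$ and that $K \triangleleft G^E$, but this is already folded into \Cref{lem:gadgetKtoK}, so the proof can remain a two-sentence argument.
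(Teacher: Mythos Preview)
Your proposal is correct and matches the paper's own proof essentially verbatim: apply \Cref{lem:gadgetKtoK} to get $\gadget{1}(v),\gadget{2}(v)\in K$, then invoke \Cref{lem:noSigmaInK} to conclude $\proj{\sigma}$ of each is $1$. There is nothing to add or change.
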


\begin{proof}
Similarly to the proof of \Cref{cor:rightInvId}, note that $\gadget{1}(v) \in K$ by \Cref{lem:gadgetKtoK}, so $\proj{\sigma}\left(\gadget{1}(v)\right) = 1$ by \Cref{lem:noSigmaInK}. The proof for $\gadget{2}$ is similar. 
\end{proof}

\subsubsection{Proof of \texorpdfstring{\Cref{thm:K_modding}}{Theorem 2.6}} \label{susec:finalProof}

Finally, we are ready to prove \Cref{thm:K_modding}.

\begin{proof}[Proof or \Cref{thm:K_modding}.]
It is immediate that 
\begin{align}
    \sigma \in H^E \implies [\sigma]_K \in H^E \pmod{K}.
\end{align}
To see the reverse direction, assume that $[\sigma]_K \in H^E \pmod{K}$. Then there exists some $w \in H^E$ satisfying $w = \sigma \pmod{K}$. By \Cref{lem:pre-process}, there exists a word $w' \in H$ satisfying $\proj{1}(w') = \proj{2}(w') = 1$ and $w' = \sigma \pmod{K}$. Note that the last condition implies that $w' = \sigma k$ for some $k \in K$, hence \begin{align}
    \proj{3}(w') = \proj{3}(\sigma k) = k \in K \cap G_3^{E}. \label{eq:primeinK}
\end{align}
We choose words $u_i \in G_3^{E}$ and indices $a_{i_1}, ... , a_{i_4} \in [N]$ so that
\begin{align}
\proj{3}(w') = \prod_i u_i \com{x_{a_{i_1}}^{(3)}x_{a_{i_2}}^{(3)}}{x_{a_{i_3}}^{(3)}x_{a_{i_4}}^{(3)}} u_i^{-1}.
\end{align}
Now we multiply gadgets onto $w'$. Consider the word
\begin{align}
    w'' = w' \cmap{3,1}\left(\proj{3}(w')\right)^{-1} f_1(\proj{3}\left(w'\right))
\end{align}
Note that $\proj{1}(w') = 1$, and $w' \in H$. Hence 
\begin{align}
    \proj{1}\left(\cmap{3,1}\left(\proj{3}(w') \right)\right) = \id \qquad \text{ and } \qquad \proj{1}\left(f_1\left(\proj{3}(w') \right)\right) = 1, \label{eq:final_proof_w'_properties}
\end{align}
the first by Property~\ref{abprop:bid} of the map $\cmap{3,1}$ and the second by property~\ref{prop:mapid} of $f_1$. Putting this all together,  
\begin{align}
    \proj{1}\left( w'' \right) &= \proj{1}\left( w' \right) \proj{1}\left(\cmap{3,1}\left(\proj{3}(w')\right)^{-1} \right) \proj{1}\left(f_1(\proj{3}\left(w'\right)) \right) = 1.
\end{align}
By Property~\ref{prop:mapmatch} of the map $f_1$ we have 
\begin{align}
    \proj{2}\left( w'' \right) &= \proj{2}\left( w' \right) \proj{2}\left(\cmap{3,1}\left(\proj{3}(w')\right)^{-1} \right) \proj{2}\left(f_1(\proj{3}\left(w'\right)) \right) \\
    &= \proj{2}\left(\cmap{3,1}\left(\proj{3}(w')\right)^{-1} \right) \proj{2}\left(\cmap{3,1}(\proj{3}\left(w'\right)) \right) = \id
\end{align} 
Finally 
\begin{align}
    \proj{3}\left( w'' \right) = \proj{3}\left(f_1(\proj{3}\left(w'\right)) \right)
\end{align}
by Property~\ref{abprop:afix} of the map $\cmap{3,1}$. Also note that $\proj{3}\left(f_1(\proj{3}\left(w'\right)) \right) \in K$ by \Cref{lem:gadgetKtoK} and the fact that $\proj{3}$ maps words in $K$ to words in $K$ (\Cref{lem:projKtoK}).

We summarize: 
\begin{align}
    \proj{1}(w'') = \proj{2}(w'') = 1, \label{eq:primeprimeis1}
\end{align}
and 
\begin{align}
    \proj{3}(w'') = \proj{3}\left(f_1(\proj{3}\left(w'\right)) \right) \in K. \label{eq:primeprimeinK}
\end{align}

Now we again multiply gadgets onto $w''$ with the $1$ and $2$ indices swapped. Recall
\begin{align}
    w'' =  w' \cmap{3,1}\left(\proj{3}(w')\right)^{-1} f_1(\proj{3}\left(w'\right)),
\end{align}
then define 
\begin{align}
    w''' = w'' \cmap{3,2}\left(\proj{3}(w'')\right)^{-1} f_2(\proj{3}\left(w''\right)) 
\end{align}
The same arguments as used to show \Cref{eq:primeprimeinK,eq:primeprimeis1} then give
\begin{align}
    \proj{1}(w''') = \proj{2}(w''') = \id \label{eq:penul_cancel}
\end{align}
and 
\begin{align}
    \proj{3}(w''') &= \proj{3}\left(f_2\left(\proj{3}\left( w'' \right)\right)\right) \\
    &= \proj{3}\left(f_2\left(\proj{3}\left(f_1(\proj{3}\left(w'\right)) \right)\right)\right) \in K.
\end{align}
We have, by assumption, 
\begin{align}
    \proj{3}(w') = \prod_i u_i \com{x_{a_{i_1}}^{(3)}x_{a_{i_2}}^{(3)}}{x_{a_{i_3}}^{(3)}x_{a_{i_4}}^{(3)}} u_i^{-1}. \label{eq:full_proof_q}
\end{align}
We define a composition of maps $F : G_3^E \rightarrow G_3^E$
\begin{align} 
    \total :=  \proj{3} \circ f_2 \circ \proj{3} \circ f_1. \label{eq:full_proof_composition}
\end{align}
Then we have
\begin{align}
    \proj{3}(w''') &= \total \left( \prod_i  u_i \com{x_{a_{i_1}}^{(3)}x_{a_{i_2}}^{(3)}}{x_{a_{i_3}}^{(3)}x_{a_{i_4}}^{(3)}} u_i^{-1} \right)\\
    &= \prod_i \total \left(  u_i \com{x_{a_{i_1}}^{(3)}x_{a_{i_2}}^{(3)}}{x_{a_{i_3}}^{(3)}x_{a_{i_4}}^{(3)}} u_i^{-1} \right) \\
    &= \prod_i \total \left(u_i\right) \com{ \total \left(x_{a_{i_1}}^{(3)}x_{a_{i_2}}^{(3)} \right)}{ \total \left(x_{a_{i_3}}^{(3)}x_{a_{i_4}}^{(3)}\right)} \total \left(u_i^{-1} \right). \label{eq:Full_proof_F_appliction}
\end{align}
where we used the fact that each word $u_i \com{x_{a_{i_1}}^{(3)}x_{a_{i_2}}^{(3)}}{x_{a_{i_3}}^{(3)}x_{a_{i_4}}^{(3)}} u_i^{-1}$ has even length on the first line, and that each word $u_i$ has even length on the second.

Now
\begin{align}
    \proj{2}\left(\cmap{3,2} \left( \total \left( x_j^{(3)}x_k^{(3)} \right) \right) \right)  &= \proj{2}\left(\cmap{3,2} \left( \proj{3} \left( f_{2} \left( \proj{3} \left( f_{1} \left( x_j^{(3)}x_k^{(3)} \right) \right) \right) \right)\right) \right) = \id \label{eq:final2cancel}
\end{align}
by Property~\ref{prop:pmapid}. Next 
\begin{align}
    \proj{1}\left(\cmap{3,1}\left( \total \left( x_j^{(3)}x_k^{(3)} \right) \right) \right) 
    &= \proj{1}\left(\cmap{3,1}\left( \proj{3} \left( f_{2} \left( \proj{3} \left( f_{1} \left( x_j^{(3)}x_k^{(3)} \right) \right) \right) \right)\right)\right) \\
    &= \proj{1}\left(\cmap{3,1}\left( \proj{3} \left( f_{1} \left( x_j^{(3)}x_k^{(3)} \right) \right) \right)\right) = \id \label{eq:final1cancel}
\end{align}
where we used Property~\ref{prop:pmapmatch} and then Property~\ref{prop:pmapid} of the maps $f_{2}$ and $f_{1}$.

Finally, consider the word\footnote{Below we could have replaces the $\cmap{3}$ appearing in the term $\cmap{3} \left( \total \left( {u}_i \right) \right)$ with either $\cmap{3,1}$ or $\cmap{3,2}$ and the proof would remain correct.} 
\begin{align}
    w^{''''} = \prod_i  \cmap{3} \left( \total \left( {u}_i \right) \right)  \com{\cmap{3,1}\left( \total \left( x_{a_{i_1}}^{(3)}x_{a_{i_2}}^{(3)} \right) \right)}
    {\cmap{3,2}\left( \total \left( x_{a_{i_3}}^{(3)}x_{a_{i_4}}^{(3)}\right) \right)} \cmap{3} \left( \total \left(  u_i^{-1}\right) \right) .
\end{align}
We have 
\begin{align}
    \proj{3}(w'''') &= \prod_i  \total \left(u_i \right) \com{\total \left(x_{a_{i_1}}^{(3)}x_{a_{i_2}}^{(3)}\right)}{\total \left(x_{a_{i_3}}^{(3)}x_{a_{i_4}}^{(3)}\right)} \total \left(u_i^{-1}\right) = \proj{3}(w''') \label{eq:final_proj3}.
\end{align}
\Cref{eq:final2cancel} gives 
\begin{align}
    &\proj{2}(w'''') \\
    &\hspace{15pt} = \prod_i  \proj{2} \left( \cmap{3} \left( \total \left( {u}_i \right) \right)  \right) \com{\proj{2} \left( \cmap{3,1}\left( \total \left( x_{a_{i_1}}^{(3)}x_{a_{i_2}}^{(3)} \right) \right) \right)}
    {\proj{2} \left( \cmap{3,2}\left( \total \left( x_{a_{i_3}}^{(3)}x_{a_{i_4}}^{(3)}\right) \right) \right)} \proj{2} \left( \cmap{3} \left( \total \left( u_i^{-1}\right) \right) \right)\\
    &\hspace{15pt} = \prod_i \proj{2}\left( \cmap{3} \left( \total \left(u_i\right) \right) \right) \com{\proj{2}\left(\cmap{3,1} \left( \total \left(x_{a_{i_1}}^{(3)}x_{a_{i_2}}^{(3)}\right)\right) \right)}{\id} \proj{2}\left( \cmap{3} \left( \total \left(u_i\right) \right) \right)^{-1}  = \id \label{eq:final_inv_id}
\end{align}
A similar argument using \Cref{eq:final1cancel} shows $\proj{1}(w'''') = \id$.
Finally, elements in the image of $\proj{\sigma}$ commute with each other (by an argument similar to the proof of \Cref{lem:noSigmaInfs}) hence
\begin{align}
    &\proj{\sigma}\left(w'''' \right)  \\
    &\hspace{15pt} = \prod_i \proj{\sigma}\left( \cmap{3} \left( \total \left( {u}_i \right) \right) \right) \proj{\sigma}\left( \com{\cmap{3,1}\left( \total \left( x_{a_{i_1}}^{(3)}x_{a_{i_2}}^{(3)} \right) \right)}
    {\cmap{3,2}\left( \total \left( x_{a_{i_3}}^{(3)}x_{a_{i_4}}^{(3)}\right) \right)} \right) \proj{\sigma} \left( \cmap{3} \left( \total \left( u_i^{-1}\right) \right) \right) \\
    &\hspace{15pt} = \prod_i \proj{\sigma}\left( \cmap{3} \left( \total \left( {u}_i \right) \right) \right)
    \com{ \proj{\sigma}\left( \cmap{3,1}\left( \total \left( x_{a_{i_1}}^{(3)}x_{a_{i_2}}^{(3)} \right) \right) \right) } {\proj{\sigma} \left( \cmap{3,2}\left( \total \left( x_{a_{i_3}}^{(3)}x_{a_{i_4}}^{(3)}\right) \right)\right)}
    \proj{\sigma} \left( \cmap{3} \left( \total \left( u_i^{-1}\right) \right) \right)  \\
    &\hspace{15pt} = \prod_i \proj{\sigma}\left( \cmap{3} \left( \total \left( {u}_i \right) \right) \right)
    \;\; \proj{\sigma} \left( \cmap{3} \left( \total \left( u_i^{-1}\right) \right) \right)  = 1. \label{eq:noSigmaInPenul}
\end{align}

To put this all together and complete the proof, consider the word $w''' w''''^{-1}$. Using equations \cref{eq:final_inv_id,eq:penul_cancel}
\begin{align}
    \proj{2}(w''' w''''^{-1}) &= \proj{2}(w''') \proj{2}( w''''^{-1}) = 1 
\end{align}
with a similar argument giving 
\begin{align}
    \proj{1}(w''' w''''^{-1}) &= \proj{1}(w''') \proj{1}( w'''')^{-1} = 1. 
\end{align}
\Cref{eq:final_proj3} gives
\begin{align}
    \proj{3}(w''' w''''^{-1}) &= \proj{3}(w''')\proj{3}(w'''')^{-1} \\
    &= \proj{3}(w''')\proj{3}(w''')^{-1} = 1.
\end{align}
Finally, \Cref{eq:noSigmaInPenul}, \Cref{lem:noSigmaInfs}, and \Cref{cor:rightInvId} 
give
\begin{align}
  \proj{\sigma}(w''' w''''^{-1}) &= \proj{\sigma}(w''')\\
  &= \proj{\sigma}(w'' \cmap{3,2}\left(\proj{3}(w'')\right)^{-1} f_2(\proj{3}\left(w''\right))) \\
  &= \proj{\sigma}(w'') \\
  &= \proj{\sigma}\left( w' \cmap{3,1}\left(\proj{3}(w')\right)^{-1} f_1(\proj{3}\left(w'\right)) \right) \\
  &= \proj{\sigma}\left( w' \right) \\
  &= \sigma.
\end{align}
We conclude $\sigma \in H^E$ and thus the proof is complete. 
\end{proof}

\begin{appendix}

\section{Properties of \texorpdfstring{$K$}{K} and its Interactions}

Here we prove several small facts used in the proof of 
\Cref{thm:K_modding}
as well as
some which add perspective
on $K$.

\subsection{Properties of \texorpdfstring{$K$}{K}}

\begin{lem} \label{lem:even_com_in_K}
Let $u,v$ be two even length words in $G_\alpha$. Then $\com{u}{v} \in K$. 
\end{lem}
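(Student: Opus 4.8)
The plan is to reduce the general statement to the special case of commutators of the standard generators of $G_\alpha^E$, which lie in $K$ by definition, and then propagate this through products using the standard commutator identities. First I would write $u = g_1 g_2 \cdots g_{2p}$ and $v = h_1 h_2 \cdots h_{2q}$ as words in the generators $x_i^{(\alpha)}$ of $G_\alpha$, each of even length. Grouping the letters in consecutive pairs, $u = (g_1 g_2)(g_3 g_4)\cdots(g_{2p-1}g_{2p})$ and similarly for $v$, expresses both $u$ and $v$ as products of generators of $G_\alpha^E$ (elements of the form $x_i^{(\alpha)}x_j^{(\alpha)}$). So it suffices to show: if $u = s_1 s_2 \cdots s_p$ and $v = t_1 t_2 \cdots t_q$ with each $s_k, t_l$ a generator of $G_\alpha^E$, then $\com{u}{v} \in K$.

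The main tool is the pair of commutator identities already quoted in the paper,
\begin{align}
    \com{x}{yz} &= \com{x}{y}\, y^{-1}\com{x}{z}y, \\
    \com{xy}{z} &= y^{-1}\com{x}{z}y\,\com{y}{z}.
\end{align}
Applying the second identity repeatedly, I would expand $\com{u}{v} = \com{s_1 s_2 \cdots s_p}{v}$ into a product of conjugates of terms of the form $\com{s_k}{v}$; then applying the first identity, each $\com{s_k}{v} = \com{s_k}{t_1 \cdots t_q}$ expands into a product of conjugates of $\com{s_k}{t_l}$. Since each $s_k$ is some $x_i^{(\alpha)}x_j^{(\alpha)}$ and each $t_l$ is some $x_k^{(\alpha)}x_l^{(\alpha)}$, every $\com{s_k}{t_l}$ is exactly one of the defining generators $\com{x_i^{(\alpha)}x_j^{(\alpha)}}{x_k^{(\alpha)}x_l^{(\alpha)}}$ of $K$. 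All conjugating elements are words in $G_\alpha \le G^E$, and $K$ is a normal subgroup of $G^E$ by construction, so every conjugate of such a commutator stays in $K$, and a product of elements of $K$ is in $K$. Hence $\com{u}{v} \in K$.

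The only mild subtlety — and the step I would be most careful about — is bookkeeping the conjugating factors when iterating the two identities, since each application introduces conjugation by a suffix (or its inverse) of one of the words; one should check that these suffixes are themselves in $G^E$ (or at least in $G_\alpha \le G^E$) so that normality of $K$ in $G^E$ genuinely applies. This is immediate here because every letter and every partial product involved lies in $G_\alpha$, which sits inside $G^E$. A clean way to present this is by a double induction: first on $q = $ (number of $G_\alpha^E$-generator factors of $v$) using the $\com{x}{yz}$ identity with base case $\com{s_k}{t_1} \in K$, establishing $\com{s_k}{v} \in K$ for a single generator $s_k$; then on $p$ using the $\com{xy}{z}$ identity with base case the previous claim, establishing $\com{u}{v}\in K$ in general. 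I do not expect any real obstacle beyond this routine induction.
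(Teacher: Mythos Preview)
Your approach is essentially the same as the paper's: both arguments reduce to the base case $\com{x_i^{(\alpha)}x_j^{(\alpha)}}{x_k^{(\alpha)}x_l^{(\alpha)}}\in K$ via the two commutator identities and induction (the paper inducts on the total length $l(u)+l(v)$, you do a double induction on the number of $G_\alpha^E$-generator factors, which amounts to the same thing).

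One correction: your justification that the conjugating elements lie in $G^E$ contains a false claim, namely ``$G_\alpha \le G^E$.'' This is not true---a single generator $x_i^{(\alpha)}$ is in $G_\alpha$ but not in $G^E$. Fortunately your argument does not actually need this: the conjugators that arise from iterating the identities are products of the $s_k$'s and $t_l$'s (or their inverses), each of which is already a generator of $G_\alpha^E$, so the conjugators lie in $G_\alpha^E \le G^E$ and normality of $K$ in $G^E$ applies. Just replace ``$G_\alpha \le G^E$'' with ``$G_\alpha^E \le G^E$'' and the bookkeeping is clean.
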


\begin{proof}
Let $l(u)$ denote the length of $u$, with $l(v)$ defined similarly. Define $L = l(u) + l(v)$. We prove by induction on $L$. 

When $L = 4$, $u$ and $v$ must both have length 2, hence $\com{u}{v}$ is a generator of $K$. Then the result is immediate. 

Otherwise, we must have that either $l(u)$ or $l(v)$ is greater than 2. For now we assume $l(v) > 2$. Then we can write 
\begin{align}
    v = v' v''
\end{align}
with $v'$ and $v''$ both even length words. Note that 
\begin{align}
    l(v') + l(v'') = l(v)
\end{align}
so $v'$ and $v''$ both have length less than $v$. Then we can write 
\begin{align}
    \com{u}{v} &= \com{u}{v'v''} \\
    &= \com{u}{v'}v'^{-1}\com{u}{v''}v'
\end{align}
where we have used the commutator identity
\begin{align}
    \com{x}{yz} &= \com{x}{y}y^{-1}\com{x}{z}y
\end{align}
on the second line. $l(u) + l(v')$ and $l(u) + l(v'')$ are both less than $L$, so by the induction hypothesis we have $\com{u}{v'}$ and $\com{u}{v''}$ are both in $K$. Since $K$ is normal, that also implies 
\begin{align}
    v'^{-1}\com{u}{v''}v' \in K, 
\end{align}
and since $K$ is a group 
\begin{align}
    \com{x}{y}y^{-1}\com{x}{z}y \in K.
\end{align}
The proof when $l(u) > 2$ is almost identical, except we use the commutator identity 
\begin{align}
\com{xy}{z} &= y^{-1}\com{x}{z}y\com{y}{z}
\end{align}
\end{proof}



\def\ee{{\phi}}
\def\ees{{\phi}^*}

\def\smu{{S^{must}}}

\def\sok{{S^{ok}}}
\def\s2ok{{S^{2ok}}}

\def\snot{{ S^{not}  }  }

\def\eps{{\varepsilon}}

\def\cC{{\mathcal C}}

\def\cE{{\mathcal E}}

\def\cL{{\mathcal L}}

\def\cT{{\mathcal T}}

\def\cZ{{\mathcal Z}}

\def\etass{\eta_{poss} }
\def\etano{\eta_{not} }

\def\ben{\begin{enumerate} }
\def\een{\end{enumerate} }

\def\sK{{\sim_K}}
\def\tm{{\tilde m}}

\subsubsection{Canonical form for monomials mod K}

Consider the game group  $G$ is defined for $k$ players and let $\sim_K$ denote the equivalence relation on $G$ defined by modding out by $K$.
In this subsection we shall write down a canonical selection from the equivalence classes. This is not used in the proofs here, but might be in other proofs and it is certainly
useful in  computer experiments.
While $G$ is defined for $k$ players modding out by $K$ acts independently on  the variables
$x_j^{(\alpha)} \ \ j=1, \dots, n$ associated 
with each player $\alpha$. Thus, without loss of generality we can take $k=1$.
Also $G$ contains $\sigma$ but we shall ignore it, since $\sigma$ has no impact on the canonical form.

\bigskip

The core observation is  the following lemma.

\begin{lem}
\label{lem:canK}
Suppose $G$ is the game group of a 1-XOR quantum qame.
Monomials of the form
$$ w abcd q \quad  and  \quad
w cbad q \quad and \quad
 w adcb q$$
are all equal mod $K$. 
Here $a,b,c,d$
are generators of the $G$ and 
$w$ and $q$ are
arbitrary monomials.

For degree 3 or more monomials
this immediately implies that interchanging any two even
position  variables 
or any two odd position variables in a monomial $m$
produces a monomial $\tm$ with $m \sK{} \tm$.
\end{lem}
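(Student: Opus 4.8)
\textbf{Proof plan for Lemma \ref{lem:canK}.}

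The plan is to reduce the three-term statement to a single elementary algebraic fact: that for generators $a,b$ and arbitrary monomials $w,q$, we have $w\,ab\,q \sim_K w\,ba\,q$ precisely when the ``total parity context'' lets us insert a commutator of even words. The cleanest route is to work entirely inside $G^E$ (or a translate of it), since $K$ is by definition a normal subgroup of $G^E$ generated by commutators of even-length words. First I would observe that because $K$ is normal in $G^E$, for any even-length words $u,v$ we have $u v \sim_K v u$; indeed $uv(vu)^{-1} = u v u^{-1} v^{-1} = \com{u}{v}$, and $\com{u}{v} \in K$ by \Cref{lem:even_com_in_K}. Conjugation invariance of $\sim_K$ (again from normality of $K$) then lets me move such a swap to any position in a longer word: if $m = w\,u\,v\,q$ with $u,v$ even length, then $m \sim_K w\,v\,u\,q$, because $m (w v u q)^{-1} = w\, \com{u}{v}\, w^{-1} \in K$.

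Next I would handle the degree-4 statements. Consider $w\,abcd\,q$. Writing $u = ab$ and $v = cd$ (each a length-2, hence even, word), the previous paragraph gives $w\,abcd\,q = w\,(ab)(cd)\,q \sim_K w\,(cd)(ab)\,q = w\,cdab\,q$. That is not yet one of the three target forms, so I need one more ingredient: swapping the two letters inside an even block. For this I use $ab \sim_K ba$? — no, $ab(ba)^{-1} = abab$ which is a generator of $G_\alpha^E$ but need not lie in $K$. So instead I proceed differently: to get $w\,cbad\,q$ from $w\,abcd\,q$, take $u = ab$, $v = cb$ — wait, these share the letter $b$. The correct move is: $w\,abcd\,q \sim_K w\,cbad\,q$ iff $(abcd)(cbad)^{-1} = abcd d^{-1}a^{-1}b^{-1}c^{-1} = abc\,a^{-1}b^{-1}c^{-1}$ (using $d^2=1$) $= \com{ab}{c\cdots}$... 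I would instead verify directly that $abcd(cbad)^{-1} = \com{ab}{(cb)^{-1}}$-type expression lies in $K$ by writing it as a product of commutators of even words and invoking \Cref{lem:even_com_in_K} together with normality. Concretely: $abcd\,d a b c = ab\,c\,ab\,c = \com{ab}{c}\cdot c\,(ab)\,c\,(ab)$... the bookkeeping here is where I expect the only real friction, so I would organize it as: (i) $abcd \sim_K cdab$ (shown above), (ii) $cdab \sim_K cbad$ by applying step (i)-style reasoning to the even blocks $db$ vs $ad$ after noting $cdab = c\,(db)(a)\,b$ — better, factor $cdab$ and $cbad$ to share the prefix $c$ and suffix $b$ appropriately. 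Similarly $adcb$ is reached by the symmetric manipulation fixing the outer letters $a,\ b$ and swapping the inner even block.

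The second assertion of the lemma — that in any monomial of degree $\ge 3$, interchanging two variables in even positions, or two variables in odd positions, gives a $\sim_K$-equivalent monomial — then follows by a short induction. Given $m$ of degree $d \ge 3$ and two positions $i < j$ of the same parity, the difference $j - i$ is even, so the sub-block $x_i x_{i+1} \cdots x_j$ has odd length and I can split off an even-length chunk on each side to realize the transposition $x_i \leftrightarrow x_j$ as a composition of the degree-4 moves (each time using three consecutive letters plus padding). Actually the cleanest phrasing: an adjacent transposition of positions $i, i+2$ (same parity, distance $2$) is exactly the degree-3 special case $wabc q \sim_K wcba q$ obtained from the lemma's first part with $d$ absorbed into $q$; and an arbitrary same-parity transposition is a product of such adjacent-by-two transpositions, a standard fact about the symmetric group on the even (resp. odd) positions. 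So I would state this as a corollary of the degree-4 identity and a one-line generation argument for $S_k$. The main obstacle, as noted, is the explicit verification that the specific words $w\,abcd\,q$, $w\,cbad\,q$, $w\,adcb\,q$ differ by elements of $K$; everything else is formal manipulation using normality of $K$ and \Cref{lem:even_com_in_K}.
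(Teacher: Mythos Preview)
Your framework is right—show the quotients lie in $K$—but you never actually finish the key degree-4 computation, and the ``swap two even blocks'' move you try to build everything on is a detour that does not reach the targets. From $w(ab)(cd)q \sim_K w(cd)(ab)q$ you only get $cdab$, and no amount of further even-block swapping gets you to $cbad$ or $adcb$, because the remaining moves you'd need pair an even block with an odd one. You recognize this and fall back to a direct computation, but then trail off with ``the bookkeeping here is where I expect the only real friction'' without doing it.

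That friction is illusory; the paper's proof is a two-line calculation. Using that every generator is an involution, $(adcb)^{-1}(abcd) = bcda\cdot abcd = bcd\,bcd = (bc)(dc)(cb)(cd) = \com{bc}{dc}$, which is literally a generator of $K$. The $cbad$ case is identical in spirit: $(abcd)(cbad)^{-1} = abcd\cdot dabc = abc\,abc = (ab)(cb)(ba)(bc) = \com{ab}{cb} \in K$. You actually wrote down $ab\,c\,ab\,c$ at one point, which is $abcabc$; you were one step from recognizing it as $\com{ab}{cb}$ but got sidetracked trying $\com{ab}{c}$ with an odd-length $c$. Your second paragraph on the degree-$\ge 3$ reduction (transpositions at distance $2$ generate the symmetric group on same-parity positions) is fine and matches the paper's use of $abc\,xx \sim_K cba\,xx$.
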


\proof We first show
$ abcd \ \sK \  adcb $ by noting
\begin{align}
\left(adcb\right)^{-1} abcd  = 
bcd \ bcd  =
bc \; dc \ cb \; cd \ \sK \ 1. 
\end{align}
where the last equation is true by definition of $K$. The proof  that the  first and third monomials are equivalent goes similarly. 
 
 If $m$ has degree 3 write it as $abc$, then the
 property just proved for  degree 4 gives
\begin{align}
    abc \sK \  abc xx  \ \iff 
 \  cba xx \sK \  cba 
\end{align}
 as claimed.
\qed

\bigskip

Given an ordering on the generators of $G$,
a canonical form
of a
monomial $m$ is seen easily from
the lemma.
We describe  it in terms of an algorithm.

{\bf {Algorithm $K,Q$}}
\ben
\item
Find  its even (resp, odd)  part, namely 
 the monomial whose entries are the variables
in the even (resp odd)  locations of $m$.
For example: 
take $m=zgabcdfzz$, then
$$  
even[m]:= gbdfz \qquad odd[m]:= zacez 
$$
\item 
Select a variable, say  $v$, and count how many times, $e$, it appears in
$even[m]$ and $o$ times in  $odd[m]$.\\
If $o \leq e$, then remove all variables 
$v$ from the list $odd[m]$ and also remove $o$
of the $v$'s from  $even[m]$. 
\ \ \ 
If $e \leq o$, then remove all $v$ from the list $even[m]$
and also remove $e$ of the $v$'s from  $odd[m]$.
The order of removal does not matter.
Do this for all variables (not just $v$) to get 
$evQ[m]$ and $oddQ[m]$.

Example revisited: take
$G$ to have generators equal to the alphabet $a, \dots, z$
with each generator having square equal to 1.
$e= 1$ and $o=2$
for the variable $z$. 
So
$evQ[m]=gbdf $ and $oddQ[m]= zace$.

\item
Order both lists. \ \ 
$alph[even] := bdfg,
\ \  alph[odd] := acez $
\item
Recombine these words to make one word.
$
canon[m]:= abcdefzg
$
\qed
\een

Application of Lemma \ref{lem:canK} proves
the Algorithm succeeds as is formalized  by the following.
\begin{prop}
For monomials  of degree $\geq 3$, we have that 
$ canon[m]$ is uniquely determined
and 
$m \sK \; canon[m]$.
That is, $canon[m] $ is a canonical form for $m$.

\end{prop}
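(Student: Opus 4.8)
\medskip

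The plan is to prove three things: (i) the algorithm is deterministic — its output does not depend on which variable is processed first or on the order of removals; (ii) $m \sK canon[m]$; and (iii) $canon$ is constant on $\sK$-classes. Together these say $canon[m]$ is a canonical form for $m$.

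Claim (i) is pure bookkeeping. Write $e_v$ (resp. $o_v$) for the number of even (resp. odd) positions of $m$ holding the generator $v$. Then Step~2 produces the multiset $evQ[m]$ containing each $v$ with multiplicity $\max(e_v-o_v,0)$ and $oddQ[m]$ containing it with multiplicity $\max(o_v-e_v,0)$; these depend only on $m$, not on any choices. Since $\sum_v (o_v-e_v)$ equals the number of odd positions of $m$ minus the number of even positions, which is $0$ or $1$, the two multisets have sizes differing by at most one, so Steps~3--4 (sort each multiset, then place the sorted $oddQ[m]$ at the odd positions and the sorted $evQ[m]$ at the even positions) determine $canon[m]$ unambiguously. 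I would record that in particular $canon[m]$ has no substring $vv$, since consecutive letters always come from the disjoint supports of $oddQ[m]$ and $evQ[m]$.

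For Claim (ii) I would use only the relations $x^2=1$ and \Cref{lem:canK}. If some generator $v$ occurs at both an even position and an odd position of $m$, I apply \Cref{lem:canK} (legitimate since $\deg m\ge 3$) to permute the odd-position entries so that one such $v$ lands in slot $2k-1$, and to permute the even-position entries so that another $v$ lands in slot $2k$; these two entries are now adjacent, cancel via $v^2=1$, and deleting them shifts every later position by exactly two, hence preserves all parities. This produces a shorter monomial with $e_v,o_v$ each reduced by one and all other counts unchanged. Iterating removes every ``cross-parity'' occurrence and leaves, up to reordering within even and within odd positions, a monomial whose even-position multiset is $evQ[m]$ and whose odd-position multiset is $oddQ[m]$; a final application of \Cref{lem:canK} sorts these, giving exactly $canon[m]$. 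The delicate points here are the parity bookkeeping in the deletion step and the degree-$3$ base case, where there is a single even position: there a cross-parity cancellation can only be of the form $vvx$ or $xvv$ and is handled directly by $v^2=1$.

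For Claim (iii) I would introduce $\tau(m):=(o_v-e_v)_v$ and observe, from the multiplicity formula in Claim (i), that $canon[m]$ depends only on $\tau(m)$; it then suffices to show $\tau$ is a complete $\sK$-invariant. Here I would invoke that $K$ is exactly the commutator subgroup of $G^E$ (already used in the paper, and valid because generators attached to distinct players commute and $\sigma$ is central, so the commutators listed in the definition of $K$ are the only nontrivial commutators of generators of $G^E$), so that $G^E/K$ is the abelianization of $G^E$. For a single player this abelianization is the free abelian group on the $x_ix_j$ modulo $x_ix_j\cdot x_jx_k=x_ix_k$, i.e. $\{u\in\mathbb Z^N:\sum_i u_i=0\}$ via $x_ix_j\mapsto e_i-e_j$; writing an even-length monomial as $(x_{i_1}x_{i_2})(x_{i_3}x_{i_4})\cdots$ identifies its image with $\tau(m)$, so $\tau$ is complete on even-length monomials. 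The odd-length case then reduces to it: $m\sK m'$ iff $m^{-1}m'\in K$, and the abelianization image of $m^{-1}m'\in G^E$ works out to $\tau(m)-\tau(m')$. The main obstacle I anticipate is not any deep step but keeping the position/parity bookkeeping of Claim (ii) honest in all cases (especially low degree), together with the short verification that $K$ really is the full commutator subgroup of $G^E$; everything else is routine.
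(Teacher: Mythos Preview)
Your proof is correct and considerably more thorough than the paper's, which consists of the single line ``Application of \Cref{lem:canK} proves the Algorithm succeeds.'' The paper effectively only asserts your Claim~(ii) (and tacitly~(i)), leaving the fact that $canon$ is constant on $\sK$-classes implicit; you supply this via the invariant $\tau$ and the identification $G^E/K \cong \{u\in\mathbb Z^N:\sum u_i=0\}$, which is a genuine addition.

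Two small comments. First, for Claim~(iii) you write that it ``suffices to show $\tau$ is a \emph{complete} $\sK$-invariant,'' but in fact you only need that $\tau$ is an invariant: since $canon[m]$ depends only on $\tau(m)$, the implication $m\sK m'\Rightarrow \tau(m)=\tau(m')$ already gives $canon[m]=canon[m']$. That $\tau$ is an invariant is immediate from $K=[G^E,G^E]$, without computing the abelianization explicitly. Your stronger completeness statement is true and worth knowing, but it is not required for the proposition. Second, your handling of the low-degree transitions in Claim~(ii) is fine: once the reduction drops below degree $3$ there is at most one odd-position letter and one even-position letter left, so no further sorting via \Cref{lem:canK} is needed; you might state this explicitly rather than leaving it to the ``delicate points'' remark.
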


\subsection{The interaction of \texorpdfstring{$\proj{\sigma}$}{phi sigma} and \texorpdfstring{$\proj{\alpha}$}{phi alpha} with \texorpdfstring{$K$}{K}}

\begin{lem} \label{lem:noSigmaInK}
For any $k \in K$, 
\begin{align}
    \proj{\sigma}(k) = 1. 
\end{align}
\end{lem}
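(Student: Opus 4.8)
The plan is to exploit the fact that $\proj{\sigma}$ is a group homomorphism onto an abelian group, so it must kill every commutator and hence the whole of $K$.

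First I would record (or recall from \Cref{subsec: new def}) that $\proj{\sigma}$, defined on generators by $\proj{\sigma}(x_i^{(\alpha)}) = 1$ and $\proj{\sigma}(\sigma) = \sigma$, does extend to a homomorphism $\proj{\sigma} : G \to \angles{\sigma}$, where $\angles{\sigma} = \{1,\sigma\} \cong \mathbb{Z}_2$. This is the routine check that the defining relations of $G$ are respected: $\proj{\sigma}\big((x_i^{(\alpha)})^2\big) = 1$, $\proj{\sigma}\big(\com{x_i^{(\alpha)}}{x_j^{(\beta)}}\big) = 1$, $\proj{\sigma}(\sigma^2) = \sigma^2 = 1$, and $\proj{\sigma}\big(\com{\sigma}{x_i^{(\alpha)}}\big) = \com{\sigma}{1} = 1$. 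The restriction $\proj{\sigma}|_{G^E}$ is then also a homomorphism, and its kernel $N := \ker\big(\proj{\sigma}|_{G^E}\big)$ is a normal subgroup of $G^E$.

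Next I would observe that $N$ contains every generator of $K$. Indeed, since the target $\angles{\sigma}$ is abelian, $\proj{\sigma}$ sends every commutator in $G$ to $1$; in particular, for any $i,j,k,l \in [n]$ and $\alpha \in [3]$,
\begin{align}
    \proj{\sigma}\!\left(\com{x_i^{(\alpha)} x_j^{(\alpha)}}{x_k^{(\alpha)} x_l^{(\alpha)}}\right) = \com{\proj{\sigma}(x_i^{(\alpha)} x_j^{(\alpha)})}{\proj{\sigma}(x_k^{(\alpha)} x_l^{(\alpha)})} = \com{1}{1} = 1 ,
\end{align}
so each such commutator lies in $N$. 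Because $K$ is by definition the normal closure in $G^E$ of exactly these commutators and $N$ is a normal subgroup of $G^E$ containing them, we get $K \leq N$, i.e. $\proj{\sigma}(k) = 1$ for every $k \in K$, as claimed. (Equivalently, one can just say $\proj{\sigma}|_{G^E}$ factors through the abelianization of $G^E$, and $K$ lies in the commutator subgroup of $G^E$.)

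I do not expect any real obstacle here: the whole argument is a one-line consequence of $\proj{\sigma}$ being a homomorphism to an abelian group together with the definition of $K$ as a normal closure of commutators. The only point deserving a moment's care is confirming that $\proj{\sigma}$ is genuinely well defined on all of $G$ (hence on $G^E$), which is the relation check above; everything after that is formal.
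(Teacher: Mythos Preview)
Your proof is correct and follows essentially the same approach as the paper: both use that $\proj{\sigma}$ is a homomorphism into the abelian group $\{1,\sigma\}$, hence kills commutators, and that $K$ is generated (as a normal closure) by such commutators. Your phrasing via the kernel $N$ and the minimality of the normal closure is slightly cleaner than the paper's explicit product decomposition $k = \prod_i u_i[\cdot,\cdot]u_i^{-1}$, but the content is the same.
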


\begin{proof}
We can write
\begin{align}
k = \prod_{i} u_i \com{x_{a_i}^{\alpha_i}x_{b_i}^{\alpha_i}}{x_{c_i}^{\alpha_i}x_{d_i}^{\alpha_i}} u_i^{-1}
\end{align}
Then,
\begin{align}
    \proj{\sigma}(k) &= \proj{\sigma}\left(\prod_{i} u_i \com{x_{a_i}^{\alpha_i}x_{b_i}^{\alpha_i}}{x_{c_i}^{\alpha_i}x_{d_i}^{\alpha_i}} u_i^{-1} \right) \\
    &= \prod_i \proj{\sigma}(u_i) \com{\proj{\sigma}\left(x_{a_i}^{\alpha_i}x_{b_i}^{\alpha_i}\right)}{\proj{\sigma}\left(x_{c_i}^{\alpha_i}x_{d_i}^{\alpha_i}\right)} \proj{\sigma}\left(u_i^{-1}\right) \\
    &= \prod_i \proj{\sigma}(u_i)\proj{\sigma}\left(u_i^{-1}\right) = 1
\end{align}
where we used that $\Im(\proj{\sigma}) = \{\sigma,1\}$ is a commutative group to show the commutator terms were the identity. 
\end{proof}

\begin{lem} \label{lem:projKtoK}
For any $k \in K$, and $\alpha \in \{1,2,3\}$:
\begin{align}
    \proj{\alpha}(k) \in K.
\end{align}
\end{lem}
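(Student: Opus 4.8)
This is the exact analogue of the argument used for \Cref{lem:noSigmaInK}: we use that $\proj{\alpha}$ is a homomorphism, that it sends the defining generators of $K$ into $K$, and that it sends $G^E$ into $G^E$ (so that normality of $K$ in $G^E$ can be invoked). No genuinely hard step is involved; the only point requiring care is checking where $\proj{\alpha}$ sends $G^E$.

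\textbf{Step 1: unpack an element of $K$.} Since $K$ is the normal closure in $G^E$ of the set of commutators $\com{x_{i}^{(\beta)}x_{j}^{(\beta)}}{x_{k}^{(\beta)}x_{l}^{(\beta)}}$, any $k\in K$ can be written as a finite product
\begin{align}
    k = \prod_r u_r \com{x_{a_r}^{(\beta_r)}x_{b_r}^{(\beta_r)}}{x_{c_r}^{(\beta_r)}x_{d_r}^{(\beta_r)}}^{\epsilon_r} u_r^{-1},
\end{align}
with each $u_r \in G^E$, each $\beta_r \in \{1,2,3\}$, and $\epsilon_r \in \{\pm 1\}$; inverses may be absorbed by swapping the two sides of the commutator, so without loss of generality $\epsilon_r = 1$.

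\textbf{Step 2: push $\proj{\alpha}$ through.} Because $\proj{\alpha}$ is a homomorphism,
\begin{align}
    \proj{\alpha}(k) = \prod_r \proj{\alpha}(u_r)\,\com{\proj{\alpha}\!\left(x_{a_r}^{(\beta_r)}x_{b_r}^{(\beta_r)}\right)}{\proj{\alpha}\!\left(x_{c_r}^{(\beta_r)}x_{d_r}^{(\beta_r)}\right)}\,\proj{\alpha}(u_r)^{-1}.
\end{align}
For the commutator factors there are two cases. If $\beta_r = \alpha$, then $\proj{\alpha}$ fixes the generators appearing, so the factor is $\com{x_{a_r}^{(\alpha)}x_{b_r}^{(\alpha)}}{x_{c_r}^{(\alpha)}x_{d_r}^{(\alpha)}}$, a defining generator of $K$. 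If $\beta_r \neq \alpha$, then $\proj{\alpha}$ kills each generator, so the factor is $\com{\id}{\id} = \id \in K$. In either case the commutator factor lies in $K$.

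\textbf{Step 3: conjugating elements stay in $G^E$, conclude.} For $u_r \in G^E$, the word $\proj{\alpha}(u_r)$ retains an even number of the $x^{(\alpha)}$-generators, discards all others, and sends $\sigma \mapsto \id$; hence $\proj{\alpha}(u_r) \in G_\alpha^E \subseteq G^E$ (this is the same observation that makes $\proj{\alpha}$ well-defined as a homomorphism $G \to G_\alpha$). Since $K \triangleleft G^E$, conjugating an element of $K$ by $\proj{\alpha}(u_r) \in G^E$ again yields an element of $K$; thus each factor in the displayed product for $\proj{\alpha}(k)$ lies in $K$, and as $K$ is a subgroup, the product $\proj{\alpha}(k)$ lies in $K$. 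The only thing to watch is precisely this $G^E$-containment of the $\proj{\alpha}(u_r)$, which is what licenses the appeal to normality; everything else is formal manipulation of homomorphisms exactly as in the proof of \Cref{lem:noSigmaInK}.
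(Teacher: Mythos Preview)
Your proof is correct and follows essentially the same approach as the paper: both verify that $\proj{\alpha}$ sends the defining commutators of $K$ either to themselves (when $\beta_r=\alpha$) or to the identity (when $\beta_r\neq\alpha$), and that $\proj{\alpha}(G^E)\subseteq G_\alpha^E\subseteq G^E$ so that normality of $K$ in $G^E$ carries the conjugates through. The paper compresses Steps~1--3 into the single observation that a homomorphism $G^E\to G_\alpha^E$ sending the generating set $C$ into $C$ must send the normal closure $\langle C\rangle^{G^E}$ into $\langle C\rangle^{G_\alpha^E}\subseteq K$, but the content is identical.
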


\begin{proof}
Define the set $C$ to be all commutators of pairs, that is 
\begin{align}
    C = \left\{\com{x_i^\alpha x_j^\alpha}{x_k^\alpha x_l^\alpha} : i,j,k,l \in [n], \alpha \in [3] \right\}.
\end{align}
Recall that K was defied to be the normal closure of $C$ in $G^E$, that is:  
\begin{align}
    K = \angles{C}^{G^E}. 
\end{align}
We first show that 
\begin{align}
    \proj{\alpha}\left(c\right) \in C
\end{align}
for all $c \in C$. To see this, note
\begin{align}
    \proj{\alpha}\left( \com{x_i^{(\beta)} x_j^{(\beta), }}{x_k^{(\beta)} x_l^{(\beta)}} \right) = \com{x_i^{(\alpha)} x_j^{(\alpha), }}{x_k^{(\alpha)} x_l^{(\alpha)}} \in K 
\end{align}
for $\alpha = \beta$, and 
\begin{align}
\proj{\alpha}\left( \com{x_i^{(\beta)} x_j^{(\beta), }}{x_k^{(\beta)} x_l^{(\beta)}} \right) = 1 \in K. 
\end{align}
for $\alpha \neq \beta$. 

Then, since $\proj{\alpha}$ is a homomorphism mapping $G^E \rightarrow G_3^{E}$, and $\proj{\alpha}(C) \subset C$, we have 
\begin{align}
    \proj{\alpha} : \angles{C}^{G^E} \hookrightarrow \angles{C}^{G_3^E} \subset K. 
\end{align}
The result follows. 
\end{proof}

\subsection{Equivalence between a PREF and \texorpdfstring{$\sigma \in H \pmod{K}$}{sigma in H (mod K)}}
\label{app:PREF-sigmaInH}

In \cite{watts2018algorithms} an object called a \textit{parity refutation} was defined. A (paraphrased) version of that definition using the language of \Cref{subsec:groups} is repeated here. First, we define a \df{parity preserving permutation}.

\begin{defn} \label{def:pper}
A parity preserving permutation of a sequence of generators (written here as a product)
\begin{align}
    x_{a_1}^{(1)}x_{a_2}^{(1)} ... x_{a_{l_1}}^{(1)}x_{b_1}^{(2)}...x_{b_{l_2}}^{(2)}x_{c_{1}}^{(3)}...x_{c_{l_3}}^{(3)} \sigma^{s}
\end{align} 
is a permutation $P$ which satisfies 
\begin{align}
    P(x_{a_i}^{(1)}) = x_{a_j}^{(1)}
\end{align}
with $i = j \pmod{2}$, similar restrictions for $P(x_{b_{i'}}^{(2)})$ and $P(x_{c_{i''}}^{(3)})$ and the condition $P(\sigma) = \sigma$.
\end{defn}

An equivalent definition of parity preserving permutations which will be useful to use later are permutations $P$ which can be decomposed into products of transpositions of the form $\pi_{j,j+2}^{(\alpha)}$ with $\alpha \in [3]$ and 
\begin{align}
    \pi^{(\alpha)}_{j,j+2} \left(x^{(\alpha)}_{a_1}x^{(\alpha)}_{a_2} ... x^{(\alpha)}_{a_j}x^{(\alpha)}_{a_{j+1}}x^{(\alpha)}_{a_{j+2}} ... x^{(\alpha)}_{a_l} \right) = x^{(\alpha)}_{a_1}x^{(\alpha)}_{a_2} ... x^{(\alpha)}_{a_{j+2}}x^{(\alpha)}_{a_{j+1}}x^{(\alpha)}_{a_{j}} ... x^{(\alpha)}_{a_l} 
\end{align}

Parity preserving permutations can be used to define an equivalence relation on the words $g \in G$

\begin{defn}
Two words $g_1, g_2 \in G$ are \df{parity permutation equivalent}, written $g_1 \sim_p g_2$, if there is a sequence of generators
\begin{align}
x_{a_1}^{(1)}x_{a_2}^{(1)} ... x_{a_{l_1}}^{(1)}x_{b_1}^{(2)}...x_{b_{l_2}}^{(2)}x_{c_{1}}^{(3)}...x_{c_{l_3}}^{(3)}\sigma^{s} = g_1
\end{align}  
and a \df{parity preserving permutation} $P$ acting on that sequence of generators  satisfying
\begin{align}
    P(x_{a_1}^{(1)}x_{a_2}^{(1)} ... x_{a_{l_1}}^{(1)}x_{b_1}^{(2)}...x_{b_{l_2}}^{(2)}x_{c_{1}}^{(3)}...x_{c_{l_3}}^{(3)}\sigma^{s}) = g_2
\end{align}
\end{defn}
Routine calculation (given in \cite{watts2018algorithms}) shows $\sim_p$ is an equivalence relation on elements of $G$. Finally, we define a \df{parity refutation} (PREF). 
\begin{defn} \label{def:PREF}
A sequence of clauses 
$h_{r_1}, h_{r_2}, ..., h_{r_l}$
is called a \df{parity refutation} if $h_{r_1} h_{r_2} ... h_{r_l} \sim_p \sigma$. 
\end{defn}
Existence of a parity refutation is exactly equivalent to a word $\sigma \in H \pmod{K}$, as we show in the following theorem. (Actually, a stronger statement is true: the equivalence relation $\sim_p$ is exactly the same as the equivalence relation on $G$ induced by modding out by $K$. Small modifications to the proof below give that result.)
\begin{thm}
A sequence of clauses $h_{r_1} h_{r_2} ... h_{r_l}$ is a parity refutation iff the word $h_{r_1} h_{r_2} ... h_{r_l} \in H$ obtained by multiplying the clauses together satisfies
\begin{align}
    h_{r_1} h_{r_2} ... h_{r_l} = \sigma \pmod{K}
\end{align}
\end{thm}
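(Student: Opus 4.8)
The plan is to prove both directions by unwinding the two equivalence relations involved: the relation $\sim_p$ generated by parity-preserving permutations, and the relation $\sim_K$ of being equal modulo $K$. The key structural fact is that both relations are generated by the same elementary moves on words, namely interchanging two generators of the same player that sit two positions apart (in the same parity class), together with the trivial relations $\left(x_i^{(\alpha)}\right)^2 = 1$ and commutation of generators belonging to different players and of $\sigma$ with everything. The heart of the argument is therefore the identity, already essentially recorded in \Cref{lem:canK}, that a transposition $\pi^{(\alpha)}_{j,j+2}$ changes a word only by multiplication by an element of $K$: concretely, if $w' = \pi^{(\alpha)}_{j,j+2}(w)$ then $w^{-1} w'$ is a conjugate (by the prefix of $w$) of a word of the form $cbad q \cdot (abcd q)^{-1}$ restricted to player $\alpha$, which the proof of \Cref{lem:canK} shows lies in $K$. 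Normality of $K$ in $G^E$ finishes the claim that $w \sim_K w'$.

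First I would establish the forward direction ($\Rightarrow$). Suppose $h_{r_1}\cdots h_{r_l}$ is a parity refutation, so $h_{r_1}\cdots h_{r_l} \sim_p \sigma$. By the equivalent characterization of parity-preserving permutations in terms of the generators $\pi^{(\alpha)}_{j,j+2}$, there is a finite sequence of such transpositions carrying (a chosen generator-expansion of) $h_{r_1}\cdots h_{r_l}$ to (a generator-expansion of) $\sigma$. Applying the elementary identity above at each step, each transposition multiplies the current word by an element of $K$; since $K$ is a subgroup and normal in $G^E$, and all words in sight lie in $G^E$ (they are products of an even number of clauses — one must first observe a parity refutation has even length, which follows since $\sigma$ has even length in the $\sim_p$ class, each $h_{r_i}$ contributing one generator per player), the product of all these corrections is an element of $K$. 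Hence $h_{r_1}\cdots h_{r_l} = \sigma k$ for some $k \in K$, i.e. $h_{r_1}\cdots h_{r_l} = \sigma \pmod K$. I would also need to note that the relations $\left(x_i^{(\alpha)}\right)^2 = 1$ and inter-player commutation used in reducing generator-expansions are consequences of relations in $G$, so they do not change the coset mod $K$ either (in fact they hold on the nose in $G$).

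For the reverse direction ($\Leftarrow$), suppose $h_{r_1}\cdots h_{r_l} = \sigma \pmod K$, so $h_{r_1}\cdots h_{r_l} = \sigma k$ with $k \in K$. Expanding $k$ as a product $\prod_i u_i [\,x_{a_i}^{(\alpha_i)} x_{b_i}^{(\alpha_i)},\, x_{c_i}^{(\alpha_i)} x_{d_i}^{(\alpha_i)}\,] u_i^{-1}$, each generator $[\,ab,\,cd\,] = abcd\,(abcd)^{-1}$ restricted to one player is, by the computation in \Cref{lem:canK}, exactly the word realizing a parity-preserving transposition on a suitable six-letter (or four-letter, after cancellation) monomial; conjugation by $u_i$ corresponds to performing that transposition further into a longer word. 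Thus each generator of $K$ can be "absorbed" into the product $h_{r_1}\cdots h_{r_l}$ as a parity-preserving permutation, showing $\sigma k \sim_p \sigma$ and hence $h_{r_1}\cdots h_{r_l} \sim_p \sigma$. The one point requiring care is bookkeeping: I must expand everything into a common generator-sequence, track which positions are even/odd for each player, and verify that multiplying by the $K$-generators and re-grouping via $\left(x_i^{(\alpha)}\right)^2=1$ only ever realizes parity-preserving moves — the $\left(x_i^{(\alpha)}\right)^2=1$ and inter-player-commutation reductions are trivially parity preserving since they either delete an adjacent repeated pair or swap across players, both allowed.

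The main obstacle I expect is the bookkeeping in matching the two presentations cleanly: making precise the claim "$\pi^{(\alpha)}_{j,j+2}$ multiplies the word by a conjugate of a generator of $K$" requires being careful about prefixes/suffixes and about the fact that $K$ is the \emph{normal} closure (so conjugation by arbitrary $u_i \in G_\alpha$, not just by prefixes of the current word, is allowed — which is exactly why normality is used), and conversely that every generator of $K$ with its conjugator can be recognized as an instance of such a transposition inside a long enough monomial. Once the elementary identity $w \sim_K \pi^{(\alpha)}_{j,j+2}(w)$ and its converse are nailed down, both implications are short. I would isolate that identity as a preliminary claim and then present the two directions as straightforward inductions on the number of transpositions, respectively on the number of $K$-generators in the expansion of $k$.
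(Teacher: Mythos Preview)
Your approach is essentially the paper's: decompose parity-preserving permutations into the transpositions $\pi^{(\alpha)}_{j,j+2}$, show each such transposition alters the word by an element of $K$, and for the converse show every (conjugated) generator of $K$ is $\sim_p 1$.

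Two small points. First, the formula $[\,ab,\,cd\,] = abcd\,(abcd)^{-1}$ is a slip---that expression is identically $1$; the commutator is $abcd\,ba\,dc$. Second, your reverse direction can be streamlined: rather than ``absorbing'' $K$-generators into the word $h_{r_1}\cdots h_{r_l}$, the paper simply observes that $\sim_p$ respects multiplication and then checks directly that $w\,[x_i^{(\alpha)}x_j^{(\alpha)},x_s^{(\alpha)}x_t^{(\alpha)}]\,w^{-1} \sim_p 1$ by first permuting $w^{-1}$ past the eight-letter commutator (parity-preserving since the commutator has even length in every player) and then swapping the two inverse halves. This avoids the bookkeeping you flag as the main obstacle.
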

\begin{proof}
Both directions of the proof are nontrivial. We first show that if a sequence of clauses $h_{r_1} h_{r_2} ... h_{r_l}$ forms a parity refutation then $h_{r_1} h_{r_2} ... h_{r_l} = \sigma \pmod{K}$. 
Recall that any parity preserving permutation $P$ can be decomposed into transpositions of the form $\pi^{(\alpha)}_{j,j+2}$, where 
\begin{align}
    \pi^{(\alpha)}_{j,j+2} \left(x^{(\alpha)}_{a_1}x^{(\alpha)}_{a_2} ... x^{(\alpha)}_{a_j}x^{(\alpha)}_{a_{j+1}}x^{(\alpha)}_{a_{j+2}} ... x^{(\alpha)}_{a_l} \right) = x^{(\alpha)}_{a_1}x^{(\alpha)}_{a_2} ... x^{(\alpha)}_{a_{j+2}}x^{(\alpha)}_{a_{j+1}}x^{(\alpha)}_{a_{j}} ... x^{(\alpha)}_{a_l} 
\end{align}
But we also have 
\begin{align}
    K \ni \com{x^{(\alpha)}_{a_{j+2}}x^{(\alpha)}_{a_{j+1}}}{x^{(\alpha)}_{a_{j}}x^{(\alpha)}_{a_{j+1}}} = x^{(\alpha)}_{a_{j+2}}x^{(\alpha)}_{a_{j+1}}x^{(\alpha)}_{a_{j}}x^{(\alpha)}_{a_{j+2}}x^{(\alpha)}_{a_{j+1}}x^{(\alpha)}_{a_{j}}
\end{align}
hence 
\begin{align}
    x^{(\alpha)}_{a_j}x^{(\alpha)}_{a_{j+1}}x^{(\alpha)}_{a_{j+2}} &= x^{(\alpha)}_{a_j}x^{(\alpha)}_{a_{j+1}}x^{(\alpha)}_{a_{j+2}} x^{(\alpha)}_{a_{j+2}}x^{(\alpha)}_{a_{j+1}}x^{(\alpha)}_{a_{j}}x^{(\alpha)}_{a_{j+2}}x^{(\alpha)}_{a_{j+1}}x^{(\alpha)}_{a_{j}} \pmod{K} \\
    &= x^{(\alpha)}_{a_{j+2}}x^{(\alpha)}_{a_{j+1}}x^{(\alpha)}_{a_{j}} \pmod{K}.
\end{align}
As a consequence, we also have 
\begin{align}
    x^{(\alpha)}_{a_1}x^{(\alpha)}_{a_2} ... x^{(\alpha)}_{a_j}x^{(\alpha)}_{a_{j+1}}x^{(\alpha)}_{a_{j+2}} ... x^{(\alpha)}_{a_l} &= x^{(\alpha)}_{a_1}x^{(\alpha)}_{a_2} ... x^{(\alpha)}_{a_{j+2}}x^{(\alpha)}_{a_{j+1}}x^{(\alpha)}_{a_{j}} ... x^{(\alpha)}_{a_l}  \pmod{K} \\
    &= \pi^{(\alpha)}_{j,j+2} \left(x^{(\alpha)}_{a_1}x^{(\alpha)}_{a_2} ... x^{(\alpha)}_{a_j}x^{(\alpha)}_{a_{j+1}}x^{(\alpha)}_{a_{j+2}} ... x^{(\alpha)}_{a_l} \right) 
    \pmod{K}. \label{eq:parityTransposeInK}
\end{align}
Since the word $x_{a_1}^{(\alpha)}x_{a_2}^{(\alpha)} ... x_{a_l}^{(\alpha)}$ was arbitrary and we could decompose $P$ into products of transpositions of the form $\pi_{j,j+2}^{(\alpha)}$ we conclude 
\begin{align}
    h_{r_1}h_{r_2}...h_{r_l} &= x_{a_{r_1}}^{(1)}x_{a_{r_2}}^{(1)} ... x_{c_{r_l}}^{(3)} \sigma^{s_{r_1} + s_{r_2} + ... s_{r_l}} \\
    &= P(x_{a_{r_1}}^{(1)}x_{a_{r_2}}^{(1)} ... x_{c_{r_l}}^{(3)}\sigma^{s_{r_1} + s_{r_2} + ... s_{r_l}}) \pmod{K} \label{eq:PinK} \\ 
    &= \sigma \pmod{K} \label{eq:parityRefImpliesSigmaInH}
\end{align}
Where line \labelcref{eq:PinK} follows from equation \labelcref{eq:parityTransposeInK} and line \labelcref{eq:parityRefImpliesSigmaInH} follows from the definition of a parity refutation. This completes the proof in one direction. 

It remains to show that if $h_{r_1}h_{r_2} ... h_{r_l} = \sigma \pmod{K}$ we also have  $h_{r_1}h_{r_2} ... h_{r_l} \sim_p \sigma$. Our first step is to note that the equivalence relation $\sim_p$ respects multiplication by construction -- that is we have $g_1 \sim_p g_2$ and $g_3 \sim_p g_4$ implies $g_1 g_2 \sim_p g_3 g_4$. We next note that for any set of generators $x^{(\alpha)}_i, x^{(\alpha)}_j, x^{(\alpha)}_s, x^{(\alpha)}_t$ and word $w \in G$ we have 
\begin{align}
    w \com{x^{(\alpha)}_ix^{(\alpha)}_j}{x^{(\alpha)}_s x^{(\alpha)}_t} w^{-1} &= w x^{(\alpha)}_ix^{(\alpha)}_jx^{(\alpha)}_s x^{(\alpha)}_t \left(x^{(\alpha)}_ix^{(\alpha)}_j\right)^{-1} \left( x^{(\alpha)}_s x^{(\alpha)}_t \right)^{-1} w^{-1} \\
    &\sim_p 
    w w^{-1}
    x^{(\alpha)}_ix^{(\alpha)}_j
    x^{(\alpha)}_s x^{(\alpha)}_t 
    \left(x^{(\alpha)}_ix^{(\alpha)}_j\right)^{-1}
    \left( x^{(\alpha)}_s x^{(\alpha)}_t \right)^{-1}\\
    &\sim_p 
    w w^{-1}
    x^{(\alpha)}_ix^{(\alpha)}_j
    x^{(\alpha)}_s x^{(\alpha)}_t 
    \left( x^{(\alpha)}_s x^{(\alpha)}_t \right)^{-1}
    \left(x^{(\alpha)}_ix^{(\alpha)}_j\right)^{-1}  = 1
\end{align}
since the permutations moving $w^{-1}$ to the other side of $\com{x^{(\alpha)}_ix^{(\alpha)}_j}{x^{(\alpha)}_s x^{(\alpha)}_t}$ and swapping $\left(x^{(\alpha)}_ix^{(\alpha)}_j\right)^{-1}$ and $\left( x^{(\alpha)}_s x^{(\alpha)}_t \right)^{-1}$ are both parity preserving permutations. It follows that for any $k \in K$, $k \sim_p 1$. Then, if
$ h_{r_1} h_{r_2} ... h_{r_l} = \sigma \pmod{K}$ we must also have $h_{r_1} h_{r_2} ... h_{r_l} k = \sigma$ for some $k \in K$, and hence 
\begin{align}
    h_{r_1} ... h_{r_l} = h_{r_1} ... h_{r_l} k k^{-1} 
    \sim_p \sigma (1) = \sigma
\end{align}
where we used that $\sim_p$ respected multiplication $h_{r_1} ... h_{r_l} k = \sigma$ and $k^{-1} \sim_p 1$ to obtain the equivalence. This completes the proof.

\subsection{MERP as a mod \texorpdfstring{$K$}{K} strategy}
\label{subsec:MerpModK}
Recall from 
\Cref{def:MERP}
the MERP strategies are a nice class of finite dimensional strategies which generalize the GHZ strategy. Here we give a direct proof that 
MERP strategies are annihilated by the $K$
relations.

\begin{thm} \label{thm:MerpValidModK}
The MERP strategy observables respect the mod $K$ relations. That is,
\begin{align}
    \com{X_i^{(\alpha)}X_{i'}^{(\alpha)}}{X_j^{(\alpha)}X_{j'}^{(\alpha)}} = 1
\end{align}
for all $\alpha, i, i', j, j'$ if the $X_i^{(\alpha)}$ are MERP strategy observables as defined above. 
\end{thm}

\begin{proof} The proof is 
computational, with some tricks about Pauli matrices. Let all the $X_i^{(\alpha)}$ be MERP strategy observables and note, for all indices 
\begin{align} 
    \com{X_i^{(\alpha)}X_{i'}^{(\alpha)}}{X_j^{(\alpha)}X_{j'}^{(\alpha)}}
    = I^{\otimes ((\alpha-1)} \otimes
    \com{M(\theta_i^{(\alpha)})M(\theta_{i'}^{(\alpha)})}{M(\theta_{j}^{(\alpha)}M(\theta_{j'})^{(\alpha)}))}
    \otimes I^{\otimes (k - \alpha)}
\end{align}
by the tensor product structure. Now, the Pauli matrices anti-commute, so 
\begin{align}
    \sigma_x \sigma_z = - \sigma_z \sigma_x
\end{align}
and 
\begin{align}
    \sigma_x \exp(i\theta \sigma_z) =  \exp(-i\theta \sigma_z) \sigma_x 
    \qquad 
     \exp(i\theta \sigma_z)\sigma_x =  \sigma_x \exp(-i\theta \sigma_z)  
    \label{eq:PauliExpCom}
\end{align}
where the later equalities can be shown by the Taylor series expansion of $\exp(i\theta \sigma_z)$. 
This lets us write our MERP strategy observables in a slightly simplier form, since 
\begin{align}
    M(\theta_i^{(\alpha)}) &= \merp{\theta_i^{(\alpha)}} \\
    &= \merptwo{\theta_i^{(\alpha)}}
\end{align}
As a more more significant application of \Cref{eq:PauliExpCom} we can show MERP strategy observables switch the sign on $\theta_i^{(\alpha)}$ when they commute since
\begin{align}
    M(\theta_i^{(\alpha)})M(\theta_{j}^{(\alpha)}) 
    &= \merptwo{\theta_i^{(\alpha)}}\merptwo{\theta_j^{(\alpha)}} \\
    &=  \sigma_x \exp(- 2 i\theta_i^{(\alpha)} \sigma_z)  \exp(2i \theta_j^{(\alpha)} \sigma_z)    \sigma_x
    \\
    &=  \sigma_x
    \exp(2i \theta_j^{(\alpha)} \sigma_z) 
    \exp(- 2 i\theta_i^{(\alpha)} \sigma_z)  
       \sigma_x
    \\
    &= \merptwom{\theta_j^{(\alpha)}}\merptwom{\theta_i^{(\alpha)}} \\
    &=  M(-\theta_j^{(\alpha)})M(-\theta_{i}^{(\alpha)}) \label{eq:MERPObsCom}
\end{align}
using \Cref{eq:PauliExpCom} on the second line. Now, repeatedly applying \Cref{eq:MERPObsCom} gives
\begin{align}
    M(\theta_i^{(\alpha)})M(\theta_{i'}^{(\alpha)})M(\theta_{j}^{(\alpha)}) M(\theta_{j'}^{(\alpha)}) &= M(\theta_{j}^{(\alpha)}) M(-\theta_i^{(\alpha)})M(-\theta_{i'}^{(\alpha)}) M(\theta_{j'}^{(\alpha)}) \\
    &= M(\theta_{j}^{(\alpha)})M(\theta_{j'}^{(\alpha)}) M(\theta_i^{(\alpha)})M(\theta_{i'}^{(\alpha)}) 
\end{align}
Hence 
\begin{align}
    \com{M(\theta_i^{(\alpha)})M(\theta_{i'}^{(\alpha)})}{M(\theta_j^{(\alpha)})M(\theta_{j'}^{(\alpha)})} = 1
\end{align}
and the result follows. 
\end{proof}
\end{proof}

\subsection{Some members of \texorpdfstring{$K \cap H^E$}{K intersect HE} and possible \texorpdfstring{$k$XOR}{kXOR} generalizations}
 
\label{app:K_intuition_and_K_players}

Here we give some intuition for dealing with  the subgroup 
$K$ in relation to 3XOR. 
A major component of our 3XOR analysis has been showing that 
the special word $\sigma$ is in $K \cap H^E$. This was difficult. For perspective, we ask a simpler question:
 is the intersection $K \cap H^E$ necessarily nonempty for a 3XOR game? The next lemma says yes.
 
\begin{lem} 
\label{lem:KHE}
Suppose a 3XOR  game is nontrivial in the sense that it contains at least two clauses which contain the same generator for $x_i^{(1)}$ for player 1, also two such clauses for player 2, then $K \cap H^E$ is not empty, indeed
at least \emph{some} generators of $K$ are necessarily contained in $H^E$.
\end{lem}

\proof 
Consider a pair of clauses $h_1, h_2 \in S$ corresponding to question vectors which send the same question to the first player, so
$ h_1 = x_{a_1}^{(1)}x_{b_1}^{(2)}x_{c_1}^{(3)}\sigma^{s_1}$, $h_2 = x_{a_2}^{(1)}x_{b_2}^{(2)}x_{c_2}^{(3)}\sigma^{s_2}
$ and $a_1 = a_2$. Similarly, let clauses $h_3, h_4$ be clauses which agree on the question sent to the second player so $x_{b_3} = x_{b_4}$.\footnote{These pairs of clauses don't need to exist, but XOR games where each question is asked only once are particularly simple, with $\omega = 1$, so we assume we are not in this case.} We then consider the commutator 
\begin{align}
    \com{h_1h_2}{h_3h_4} &= \com{x_{a_1}^{(1)}x_{a_2}^{(1)}}{x_{a_3}^{(1)}x_{a_4}^{(1)}}\com{x_{b_1}^{(2)}x_{b_2}^{(2)}}{x_{b_3}^{(2)}x_{b_4}^{(2)}}\com{x_{c_1}^{(3)}x_{c_2}^{(3)}}{x_{c_3}^{(3)}x_{c_4}^{(3)}}\com{\sigma^{s_1 + s_2}}{\sigma^{s_3 + s_4}} \\
    &= \com{\id}{x_{a_3}^{(1)}x_{a_4}^{(1)}}\com{x_{b_1}^{(2)}x_{b_2}^{(2)}}{\id}\com{x_{c_1}^{(3)}x_{c_2}^{(3)}}{x_{c_3}^{(3)}x_{c_4}^{(3)}}\com{\sigma^{s_1 + s_2}}{\sigma^{s_3 + s_4}} \\
    &= \com{x_{c_1}^{(3)}x_{c_2}^{(3)}}{x_{c_3}^{(3)}x_{c_4}^{(3)}}, 
\end{align}
where we have used the fact that group elements corresponding to different players commute on the first line, that $x_{a_1}^{(1)} x_{a_2}^{(1)} = \left(x_{a_1}^{(1)}\right)^2 = \id$ on the second line, and that $\com{w}{\id} = \id$ for any $w$ and $\sigma$ commutes with anything on the third. 
The conclusion is that 
\begin{align}
    \com{x_{c_1}^{(3)}x_{c_2}^{(3)}}{x_{c_3}^{(3)}x_{c_4}^{(3)}} = \com{h_1h_2}{h_3h_4} \in H^E.
\end{align}
We have just proved 
the set of all 
commutators of pairs of generators $x_i^{(\alpha)}$ which lie in $H^E$ is necessarily nonempty. Thus $K \cap H^E$ is nonempty as well. 
\qed 
\\

Our vague wish is that $K \cap H^E$ be large, so we point out that the same argument as above with any two pairs of clauses that cancel on two different players shows even  more generators are in $K \cap H^E$.

\subsubsection{Possible \texorpdfstring{$k$XOR}{kXOR} analogues of the subgroup \texorpdfstring{$K$}{K}}

Now we discuss possible $k$ player generalizations of the arguments in this paper. To generalize the arguments of this paper beyond 3 players, we would require a $k$-player analogue of \Cref{thm:K_modding}. This would be the statement that, for every clause group $H^E$ associated with a $k$XOR game and some normal subgroup $K' \triangleleft G^E$, 
\begin{align}
    \sigma \in H^E \pmod{K'} \Longleftrightarrow \sigma \in H^E.  
\end{align}
The ``123 Game'' presented in \cite{watts2018algorithms} shows that the above statement is false for $6$ player games when $K' = K$. (The ``123 Game'' is a 6 player game with a perfect commuting operator strategy, meaning $\sigma \notin H^E$, but no perfect MERP strategy, meaning $\sigma \notin H^E \pmod{K}$). 

The proof of \Cref{thm:K_modding} is involved, and it is unclear how it would generalize beyond the 3 player case. However the intuition presented in \Cref{lem:KHE} does generalize naturally to $k$-players. Following the same logic as used in the proof of \Cref{lem:KHE} we see that for a non-trivial $k$ player XOR game, elements of the form 
\begin{align}
    [...[[[x_{c_1}^{(1)}x_{c_2}^{(1)},x_{c_3}^{(1)}x_{c_4}^{(1)}],x_{c_5}^{(1)}x_{c_6}^{(1)}], ... ], x_{c_{2k-1}}^{(1)}x_{c_{2k}}^{(1)}] 
\end{align}
are necessarily contained in the group $H^E$. This observation encourages the speculation that a $k$ player analogue of \Cref{thm:K_modding} may hold with the subgroup $K'$ equal to the $k$-th entry in the lower central series of $H^E$, i.e. the subgroup of $H^E$ generated by elements of the form 
\begin{align}
    [...[[[h_1,h_2],h_3],...],h_k].
\end{align}
However, this intuition falls well short of proving the desired result.

\section{Subgroup Membership}
\label{app:AlgComb}

\begin{thm} \label{thm:SMPabelian}
The subgroup membership problem is solvable in polynomial time for any finitely generated abelian group.\footnote{Stronger versions of this statement are also true. In particular, the subgroup membership problem is solvable for any finitely generated metabelian group\cite{romanovskii1974some} (meaning commutators of commutators vanish) or finitely generated nilpotent group\cite{lohrey2013rational}. 
}
\end{thm}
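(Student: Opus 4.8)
The plan is to reduce subgroup membership in a finitely generated abelian group to the integer solvability of a linear system, and then appeal to the polynomial-time computability of the Hermite (equivalently, Smith) normal form of an integer matrix.

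First I would fix the computational model. A finitely generated abelian group $A$ is specified by a finite presentation, so $A \cong \mathbb{Z}^n / \Lambda$ where $\Lambda$ is the sublattice of $\mathbb{Z}^n$ spanned by the columns of an integer relation matrix $R$. An element of $A$ given as a word in the generators is converted, in linear time, to an exponent vector in $\mathbb{Z}^n$ by counting signed occurrences of each generator. Given subgroup generators represented by vectors $b_1,\dots,b_s \in \mathbb{Z}^n$ and a target represented by $v \in \mathbb{Z}^n$, the membership $[v] \in \langle [b_1],\dots,[b_s]\rangle$ holds in $A$ if and only if $v$ lies in the sublattice of $\mathbb{Z}^n$ generated by $b_1,\dots,b_s$ together with the columns of $R$. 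Equivalently, setting $M = [\,b_1 \mid \cdots \mid b_s \mid R\,]$, membership holds if and only if the system $Mx = v$ has a solution $x$ with integer entries.

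Next I would invoke the classical fact that integer solvability of $Mx = v$ is decidable in polynomial time. Compute a Hermite normal form $H = MU$ with $U$ unimodular; then $M$ and $H$ have the same integer column span, and since $H$ is triangular, checking $v \in \operatorname{span}_{\mathbb{Z}}(H)$ is a direct back-substitution with exact-divisibility tests. (Using the Smith normal form instead recovers the structure theorem for $A$ and makes the membership test even more transparent.) The one genuine subtlety, and the step I would treat most carefully rather than wave away, is bit-complexity: naive fraction-based elimination over $\mathbb{Q}$ or unchecked integer elimination can blow intermediate entries up exponentially. This is precisely what the classical normal-form algorithms (Kannan--Bachem, and the modular or fraction-free refinements of Storjohann and others) are designed to prevent, yielding $H$ and $U$ in time polynomial in $n$, in the number of columns of $M$, and in the total bit-length of the entries of $M$ and $v$; I would cite these as a black box.

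Combining the three steps proves the theorem: the reduction to $Mx = v$ is polynomial, the normal-form computation is polynomial, and the triangular membership test is polynomial. I expect the only real obstacle to be the bookkeeping behind the polynomial bit-length bound for the normal-form computation; the algebraic content is routine, and indeed this is why we pass to the abelian quotient $G^E/K$ rather than attempting the subgroup membership problem directly in $G$ (which is undecidable). The stronger statements mentioned in the footnote, for metabelian or nilpotent groups, require genuinely different techniques and are not needed here.
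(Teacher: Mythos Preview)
Your proposal is correct and follows the same approach as the paper: reduce membership in a finitely generated abelian group to asking whether a target vector lies in the integer span of the subgroup generators together with the relation vectors, then solve this by integer linear algebra. The paper's own proof is a two-sentence sketch of exactly this reduction and simply asserts it is polynomial-time; your version is more careful, making the Hermite/Smith normal form and the bit-complexity considerations explicit, but the underlying idea is identical.
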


\begin{proof}
It reduces to linear algebra over the integers. We can write all the relations in the group $G$ and generators of the subgroup $\tilde{G}$ as products of generators of $G$, raised to some power. When we multiply generators or apply a relation we just add or subtract the multiplicities of the relevant generators. So the subgroup membership problem just asks if a given vector (corresponding to the group element) is in the span of the vectors corresponding to the relations and subgroup generators. 
\end{proof}

\section{Declarations}

\subsection{Funding and Competing Interests}

\textbf{Financial Interests:} J.W. Helton thanks the Center for Mathematical Sciences  and Applications at Harvard for a stimulating stay leading to this collaboration and thanks to the NSF for its support through
 DMS1500835. A. Bene Watts was supported by NSF grant CCF-1729369. \\

\noindent \textbf{Non-Financial Interests:} None 
\subsection{Data Availability Statement}

Data sharing not applicable to this article as no datasets were generated or analysed during the current study.

\bibliographystyle{abbrv}
\bibliography{ref}

\end{appendix}

\newpage

\section*{Changes to the Published Version}

 We are grateful to Taro Spring for suggesting a few changes (see below) to the published version of this article. They are implemented in this second arXiv version.

\begin{itemize}[-]
    \item In \Cref{eq:typo1} a $w$ was changed to $w'$. 
    \item In \Cref{eq:typo2}, $\tilde{\pi}_i(x_i^{(\alpha)})$ was changed to $\tilde{\pi}_i(x_j^{(\alpha)})$ for clarity.
    \item In \Cref{eq:typo3,eq:typo4} some incorrectly labeled $\tilde{\rho}$ were changed to $\rho$.
    \item A missing $\proj{\alpha}$ was added to the last line in the proof of \Cref{thm:connectedCompononetsEven}.
    \item In the first paragraph of \Cref{subsec:proof_of_gadgets} a mislabeled $\cG_{12}$ was changed to $\cG_{23}$.
    \item In \Cref{eq:typo5} a mislabeled $\repnew{3,1}{1}{3}$ was changed to $\repnew{3,1}{1}{1}$.
    \item In \Cref{eq:typo6} $\lambda_1 \left( x_{a_i}^{(1)}\right) \lambda_1 \left( x_{a_j}^{(1)} \right)^{-1}$ was changed to $\lambda_1 \left( x_{a_i}^{(1)} x_{a_j}^{(1)} \right)$.
    \item Mislabeled $H$ were changed to $H^E$ in \Cref{susec:finalProof}.
    \item A mislabeled $G_3$ was changed to $G_3^E$ right above \Cref{eq:full_proof_composition}.
    \item A clarifying footnote (footnote 19) was added to the proof sketch (part 2) on page 37.
\end{itemize}

\end{document}